\documentclass[12pt]{article}
\usepackage[letterpaper,margin=1.25in]{geometry}
\selectfont
\usepackage{amsmath,amssymb,amsthm}
\usepackage{booktabs}
\usepackage{array}
\usepackage{graphicx}
\usepackage[round,authoryear]{natbib}
\usepackage[colorlinks,citecolor=blue,linkcolor=black,urlcolor=blue]{hyperref}
\numberwithin{equation}{section}
\graphicspath{{figures/}}

\theoremstyle{plain}
\newtheorem{theorem}{Theorem}
\newtheorem{proposition}{Proposition}
\newtheorem{lemma}{Lemma}
\newtheorem{corollary}{Corollary}
\theoremstyle{definition}
\newtheorem{definition}{Definition}
\newtheorem{assumption}{Assumption}
\theoremstyle{remark}
\newtheorem{remark}{Remark}

\newcommand{\R}{\mathbb{R}}
\newcommand{\ones}{\mathbf{1}}
\newcommand{\Estar}{E_{\star}}
\newcommand{\Ephi}{E_{\Phi}^{\star}}
\newcommand{\EoneLB}{E_{1}^{\mathrm{LB}}}
\newcommand{\cE}{\mathcal{E}}
\newcommand{\Rmin}{R_{\min}}
\newcommand{\Emin}{E_{\min}}
\newcommand{\dPhi}{d_{\Phi}}
\newcommand{\dPhiomega}{d_{\Phi,\omega}}
\newcommand{\qbar}{\bar q}
\newcommand{\cX}{\mathcal{X}}
\newcommand{\cH}{\mathcal{H}}
\newcommand{\Bboot}{B_{\mathrm{boot}}}
\newcommand{\Iident}{\mathfrak{I}}
\newcommand{\Nextra}{N_{\mathrm{extra}}}
\newcommand{\Ccycle}{C_{M}}
\newcommand{\Deltacycle}{\Delta_{M}}
\newcommand{\Qbase}{Q_{\mathrm{base}}}
\title{What Variation Identifies Payoffs in a Dynamic Game?}
\author{%
Haojie Liu\\
University of California, Riverside\\
\texttt{hliu332@ucr.edu}
\and
Zihan Lin\\
University of California, Riverside\\
\texttt{zlin169@ucr.edu}}
\date{}

\begin{document}
\maketitle

\begin{abstract}
Observed choice in a dynamic game mixes current profit with continuation value. A rival adds a second problem: the same comparison averages over the rival's equilibrium policy. Changing the primitive transition rewrites continuation technology; changing the rival's Markov policy, holding that law fixed, rewrites the mixture over rival-contingent payoffs. The two are not substitutes. For a rival-feature payoff of rank $K$, rank identification up to location requires $\Ephi=\lceil(MK-1)/(M-1)\rceil$ policy environments, and a second kernel when payoffs are saturated. Rank can still be restored by arbitrarily small policy differences. Independent private shocks force mixed rival actions to factor, so a payoff that depends jointly on $d$ rivals is visible only at order $\eta^{d}$ near a common interior baseline. Either rank fails or the smallest identified singular value is at most $\kappa\eta^{\dPhi}$, independently of how many kernels are stacked. Oracle-GLS variance in that direction vanishes only if $n\eta^{2\dPhi}$ diverges. An Anderson--Rubin set that carries first-stage error in the design matrix covers without a vanishing-risk condition. On U.S.\ airline entry, even among rank-identified directions, the most favorable rival-dependent contrast is several times wider than observed behavior.
\end{abstract}

\noindent\textit{Keywords:} dynamic games, identification, Markov-perfect equilibrium, weak instruments, Anderson--Rubin inference.

\noindent\textit{JEL classification:} C57, C73, L13.

\section{Introduction}
\label{sec:intro}

Imagine a firm deciding whether to enter a market. The observed entry probability records two objects at once: current flow profit, and the continuation value created by today's action. A subsidy that raises every action at a state by the same amount, and that is expected to persist in a way the transition law can price, does not change choice differences. Observed behavior therefore does not read off flow payoffs.

A rival creates a second observational problem. The comparison the researcher sees is an average over the rival's equilibrium policy. A payoff that is high when the rival is in and low when the rival is out can look identical to a payoff that never depends on the rival, provided the two objects share the same policy-weighted averages. The researcher who sees only those averages cannot tell them apart.

These are different information problems, and they are not substitutes. Changing the primitive transition $P(x'\mid x,a,b)$ rewrites how current actions move future states: it changes continuation technology. Changing the rival's Markov policy $q(b\mid x)$, holding $P$ fixed, rewrites the mixture over rival-contingent current payoffs. Policy mixing cannot cancel a shift that lives in a shared continuation technology. Extra kernels cannot manufacture a missing contrast across rival actions. The formal model below makes the two operators explicit. The economic distinction does not wait on that notation.

Even when enough distinct environments restore rank, identification need not be informative. Three rival policies such as $0.500$, $0.501$, and $0.502$ can be linearly independent while remaining arbitrarily close. Rank then answers a yes-or-no question: after one location normalization, can the stacked Hotz--Miller map be inverted? The smallest identified singular value answers a different question: how much sample does that inversion require? Identified is not the same as measured.

Two integers organize the rank question. For a finite public state space of cardinality $M\ge 2$, $A\ge 2$ own actions, $B\ge 2$ opponent action profiles, and an unrestricted flow payoff $u(x,a,b)$ at a known discount factor,
\begin{equation}
\label{eq:headline-sat}
\Rmin=2,
\qquad
\Emin=\Estar(M,B):=\left\lceil\frac{MB-1}{M-1}\right\rceil.
\end{equation}
Applied specifications almost never leave rival-action payoffs unrestricted. Additive competition, a count of active rivals, or interactions up to a fixed degree constrain only the opponent-action coordinate. Coefficients $\theta(x,a)$ remain free across states and own actions, so
\[
u(x,a,b)=\phi(b)^\top\theta(x,a),
\]
with $\Phi$ of rank $K$ and $\ones_B\in\operatorname{col}\Phi$. Economic structure replaces raw $B$ by $K$. The sharp policy count becomes
\[
\Ephi=\left\lceil\frac{MK-1}{M-1}\right\rceil.
\]
That number answers how many strategically distinct environments restore rank. When $K<B$, a generic strictly positive kernel already cuts the restricted dynamic-potential intersection down to location, so one transition regime can suffice. Two regimes and $\Ephi$ product policies always suffice. Those identifying policies may be taken arbitrarily close to a common interior baseline.

The main surprise is that this last fact is about rank, not about information. Private shocks that are independent across players force equilibrium opponent mixtures to factor opponent by opponent. A payoff component that depends on one rival's action, as in additive competition, is visible at first order in the policy gap. A component that depends on two rivals jointly is visible only when both marginals move, hence at second order. Three-way dependence is third order. Write $\eta$ for the radius of observed policies around a common interior product baseline, and $\dPhi$ for the highest active interaction degree in $\operatorname{col}\Phi$. For every design built from any number of kernels and any number of policy environments inside that neighborhood, either rank identification fails or
\[
\sigma_{\min,+}\ \le\ \kappa\,\eta^{\dPhi},
\]
with $\kappa$ depending only on $(M,A,\beta,\Phi)$ and the baseline. Extra regimes rewrite continuation technology; they cannot manufacture a missing policy contrast. The exponent is attained, so it is sharp. Feature dimension $K$ says how many environments restore rank. The degree $\dPhi$ says how fast usable information disappears as those environments cluster. Additive rival effects have $\dPhi=1$ for any number of rivals; saturation has $\dPhi=J$. A restriction can be cheap on the first margin and expensive on the second.

Under a bounded measurement-covariance sequence, oracle minimum-distance variance in the weakest identified direction is of order $1/(n\sigma_{\min,+}^2)$. That variance vanishes only if $n\eta^{2\dPhi}\to\infty$ for every rank-identifying design. Ten times less usable strategic variation, at degree one, requires roughly one hundred times as much independent information; at degree two the penalty is fourth-order. The radius $\eta$ can be read off opponent behavior before any payoff is estimated. Anderson--Rubin inversion of the Hotz--Miller moment, carrying first-stage error in the design matrix, covers at the truth without a vanishing-risk condition \citep{andersonrubin1949,stockwright2000,kleibergen2005,andrewsguggenberger2017,andrewsguggenberger2019}.

The airline section is a diagnostic of that prediction, not a structural estimate of Southwest's effect on American's payoffs. Southwest is active in $95.4$ percent of the $2{,}204$ route-quarters on $241$ routes in which both carriers appear. On three of five state grids the cross-stratum dispersion of rival policy does not exceed what sampling noise produces at the observed cell sizes. On the headline grid the saturated specification has rank $42$ of $48$: some rival-dependent directions are classified as identified, and some are not. The most favorably conditioned identified rival-dependent contrast still has a profiled interval of $52.7$ logit units against an observed behavioral range of $8.9$. Under the local $n^{-1/2}$ scaling that width implies an information-equivalent sample multiplier of about $35$ relative to that behavioral range. The multiplier is a design benchmark, not a claim about how many additional routes exist. Rank diagnostics separate identified from unidentified directions; they do not say how little information the identified directions contain.

\citet{hotzmiller1993}, \citet{rust1987}, and \citet{magnacthesmar2002} characterize what conditional choice probabilities reveal about flow utilities at a maintained discount factor. Dynamic-game estimators typically treat the discount as known and impose Markov-perfect play \citep{aguirregabiriamira2007,bajaribenkardlevin2007,pesendorferschmidt2008}. Exclusion restrictions and switching-cost structure identify components of payoffs or beliefs \citep{aguirregabiriamagesan2020,komarova2018}. Multi-environment inverse reinforcement learning and inverse-game theory give rank conditions under which transition variation, discount variation, or rival-strategy variation remove reward ambiguity, including with linear features and in Markov games \citep{ngharadarussell1999,skalse2023,aminsingh2016,cao2021,rolland2022,kleinebuening2024,linadamsbeling2019,futacchetti2021,freihautramponi2025,liao2025}. \citet{schlaginhaufenkamgarpour2024} replace a binary rank condition on transition laws with principal angles for whether a recovered reward transfers under regularized IRL. Their object is the geometry of transition subspaces. The geometry here is product equilibrium policy. Rank results ask when payoff ambiguity disappears. This paper asks which equilibrium-feasible variation removes it, and how quickly usable information disappears when those policies are close. Additional transition variation does not remove the product-policy rate. Once a player's payoff is known up to location, that player's best-response map is identified under a specified counterfactual primitive and a specified opponents' policy; absolute welfare is not \citep{aguirregabiriasuzuki2014,kalouptsidi2017,kalouptsidi2021}. In the weak-instrument literature, weak identification is a possibility to be guarded against. Given $\Phi$ and the observed dispersion of opponent policies, the weakly identified payoff directions here are known before estimation.

The paper proceeds as follows. Section~\ref{sec:model} records the game and a verified two-by-two toy. Section~\ref{sec:variation} states the two obstructions and the structure-and-variation theorem. Section~\ref{sec:strength} gives the attenuation bound. Section~\ref{sec:sampling} turns that bound into a sample-size floor. Section~\ref{sec:inference} records the associated local experiment and the confidence set. Section~\ref{sec:heterogeneity} treats unobserved types. Section~\ref{sec:empirical} is a design diagnostic on U.S.\ airline entry. Unknown patience, transfer implementation, the one-regime count, and lagged-action geometry are in the appendix.

\section{Why observed choices do not reveal flow payoffs}
\label{sec:model}

\subsection{Dynamic game and observables}
\label{sec:observables}

The economic object is an infinite-horizon discounted finite-state stochastic game with public states. A researcher who sees choice probabilities in that game sees a mixture of current profit and the value of future states. Identification statements in this paper concern one player's flow payoffs, taking opponents' Markov behavior as an environment. The remaining players may themselves be strategic; what matters for the identified player is the policy they play and the kernel that maps current actions into tomorrow's public state.

There are $N\ge 2$ players. The public state space $\mathcal X$ is finite, with cardinality $M\ge 2$. Player $i$'s action set has cardinality $A\ge 2$ and a designated reference action $a_0=0$. Write $B$ for the number of opponent action profiles. Flow payoffs of the identified player are $u\in\R^{MAB}$, with coordinates $u(x,a,b)$. The discount factor is $\beta\in(0,1)$. Saturation means that $u(x,a,b)$ is unrestricted across those coordinates. Many empirical specifications instead restrict only rival-action dependence,
\begin{equation}
\label{eq:feature}
u(x,a,b)=\phi(b)^\top\theta(x,a),
\end{equation}
where $\Phi\in\R^{B\times K}$ has rank $K$ and $\ones_B\in\operatorname{col}\Phi$. The coefficients $\theta(x,a)\in\R^K$ remain unrestricted across states and own actions, so the unknown lives in a subspace $\mathcal U_\Phi\subset\R^{MAB}$ of dimension $MAK$. Only one global payoff location is normalized. The paper first records what variation must supply when $\Phi=I_B$, then shows how $\operatorname{col}\Phi$ replaces raw $B$ in the strategic count.

A primitive transition regime is a row-stochastic array $P^r$ of shape $(A,B,M,M)$, indexed by $r=1,\ldots,R$. A strategic policy environment is a strictly interior opponent policy $q^e$, a map $x\mapsto q^e(\cdot\mid x)\in\mathrm{int}\Delta(B)$, indexed by $e=1,\ldots,E$. The identified player's own conditional choice probability in regime $r$ and policy environment $e$ is $p_i^{r,e}$. The two indexes are not interchangeable: $r$ changes the law of motion, $e$ changes the mixture over opponent actions.

\begin{assumption}[Known regular additive random utility]
\label{ass:rum}
Private shocks are additive, independent across players conditional on the public state, i.i.d.\ over time, independent of the controlled public transition, and drawn from a known state-independent distribution for which the surplus $\mathcal S(z)=\mathbb E[\max_{a'}(z_{a'}+\varepsilon_{a'})]$ is $C^2$ and translation-equivariant, and $p=\nabla\mathcal S(z)$ is a $C^1$ diffeomorphism from normalized deterministic value differences onto $\mathrm{int}\Delta(A)$.
\end{assumption}

Write $d(p)$ for the inverse from interior CCPs to normalized own-action value differences, and $g(p)$ for the surplus gap $\mathcal S(z(p))-z_0(p)$. Both are $C^1$ on the interior simplex. Logit is an example. Conditional independence of private shocks implies that, in a regular mixed-strategy Markov-perfect equilibrium, each opponent's mixed action is a function of the public state, so the joint opponent mixture factorizes as a product policy.

\begin{assumption}[Common primitives across measurements]
\label{ass:common}
Across primitive transition regimes $r$ and strategic policy environments $e$, the identified player's structural flow payoff $u$, discount factor $\beta$, and private-shock law are invariant. Only the primitive transition law $P^r$, equilibrium or opponent policies $q^e$, and known experimental transfers $(\tau_{ia}^{r,e})$ may vary across measurements.
\end{assumption}

Cross-environment stability of $(u,\beta)$ and the shock law is the economic restriction that makes pooling informative. Without it, each measurement would be a separate game. Arbitrary cross-market or calendar heterogeneity is not automatically identifying variation.

The experimenter, or the data, may include additive transfers $\tau_{ia}^{r,e}(x)$. Nonreference transfer differences enter action-value differences like a current payoff difference. The reference-action transfer enters continuation values. After Hotz--Miller inversion and subtraction of the observed nonreference transfer difference, the measurement in environment $(r,e)$ is affine in $u$:
\begin{equation}
\label{eq:z}
z^{r,e}
=
C^{r,e}(\beta)u
+
b^{r,e}(\beta).
\end{equation}
The matrix $C^{r,e}(\beta)$ is the payoff operator that maps flow payoffs into own-action value differences under $(P^r,q^e,\beta)$. The intercept $b^{r,e}(\beta)$ loads the surplus gap and the reference transfer through the continuation operator $S^{r,e}(\beta)$. Because $S^{r,e}(\beta)\ones=0$, only the centered component of that continuation offset matters. Appendix~\ref{app:operators} records the stacked construction.

When $\beta$ is known, $b(\beta)$ is known from observed CCPs, the known shock law, and known transfers, and can be subtracted. Identification of $u$ is then a statement about $\mathrm{rank}\,C(\beta)$. When $\beta$ is unknown, $b(\beta)$ is a candidate-dependent affine correction, and observational equivalence is not a pure comparison of payoff column spaces.

\subsection{What one environment identifies}
\label{sec:one-env}

With only one primitive transition environment, a state-dependent transformation can change flow payoffs and continuation values together while leaving observed choice differences unchanged. Give each state an arbitrary potential $h(x)$ and shift current payoff by the current potential minus the discounted expected next-state potential. The Bellman value at $x$ then shifts by $h(x)$, so action-value differences do not move and choice probabilities are unchanged. The associated payoff perturbation is
\begin{equation}
\label{eq:G}
\bigl[G_{P,\beta}h\bigr](x,a,b)
=
h(x)-\beta\sum_{x'}P(x'\mid x,a,b)h(x').
\end{equation}
We call $\mathrm{Im}\,G_{P,\beta}$ the dynamic-potential gauge generated by $P$. One of its $M$ directions is a global location shift. For $M\ge 2$ at least $M-1$ non-location directions remain. Varying rival strategies inside the same $P$ does not remove the gauge, because every such observation inherits the same continuation technology.

\subsection{A small binary example}
\label{sec:example}

Take two states, two own actions, and two opponent actions, so the flow payoff is an array of length $8$. Then $\Ephi=\Estar(2,2)=3$ and the location-normalized target rank is $7$. A single mixed opponent policy produces two Hotz--Miller difference rows, and those rows never see $G_{P,\beta}h$. Any collection of policies inside one $P$ therefore leaves at least a two-dimensional class, one direction of which is the global constant.

Too few opponent policy mixtures expose only weighted averages of opponent-contingent payoffs. With two opponent policies the remaining averaging class is two-dimensional, distinct from the potential gauge. For binary policies that differ in every state it has the closed form
\begin{equation}
\label{eq:rho-example}
\rho_1=t\bigl(D(q^1)-D(q^2)\bigr)^{-1}\ones,
\qquad
\rho_0=c_1\ones-D(q^1)\rho_1,
\end{equation}
with $D(q)=\mathrm{diag}(q_x)$ and $q_x=\Pr(b=1\mid x)$. The parameter $c_1$ is location; $t$ is the extra hidden direction. A third policy profile removes this averaging direction. A second strictly positive kernel then removes the remaining gauge direction. Two kernels, three policies, and one leftover constant are the binary case of Theorem~\ref{thm:structure}.

Figure~\ref{fig:toy} reports that architecture on the paper's operator, with $\beta=0.9$ and three interior product policies. One kernel yields a $6\times 8$ matrix of rank $6$ and nullity $2$. Two kernels yield a $12\times 8$ matrix of rank $7$ and nullity $1$: after location normalization the map is invertible. Holding those two kernels and the three-policy design fixed, and shrinking the policy radius, leaves rank equal to $7$ at every $\eta$ in the grid while $\sigma_{\min,+}$ falls in proportion to $\eta$. The numerical log-log slope is $1.000$, the degree-one rate of Theorem~\ref{thm:attenuation} for one binary rival. Rank stays solved; measurement does not.

\begin{figure}[t]
\centering
\includegraphics[width=\textwidth]{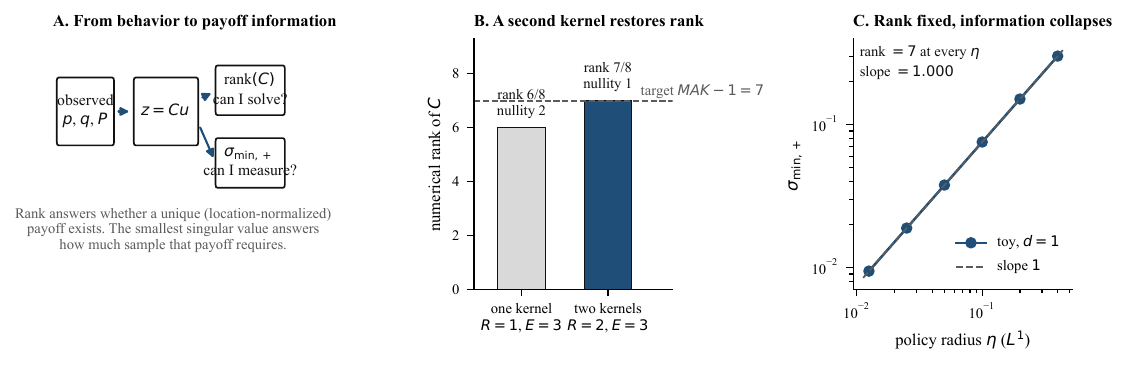}
\caption{A verified $M=A=B=2$ toy on the paper's Hotz--Miller operator. Panel A separates the rank question (can the location-normalized payoff be recovered) from the strength question (how small is the weakest identified singular value). Panel B: one primitive kernel leaves a two-dimensional kernel; a second kernel restores rank $7$. Panel C: the two-kernel, three-policy design remains rank $7$ as the policy radius shrinks, while $\sigma_{\min,+}$ tracks $\eta$.}
\label{fig:toy}
\end{figure}

The closed form \eqref{eq:rho-example} is only intuition for averaging. The general argument does not pass through it.

\section{How many environments restore rank?}
\label{sec:variation}

\subsection{Environmental obstruction}
\label{sec:env-obst}

The identification problem in one environment is that flow-payoff changes can be offset by state-dependent continuation-value changes. A subsidy that makes every action at a state more attractive by the same amount, and that is expected to persist in a way that the transition kernel can price, does not change observed choice differences. The analyst who sees only those differences cannot tell the subsidy apart from a change in continuation values. The previous subsection already gives this economic reason. The formal statement is that the dynamic-potential gauge has dimension $M$ and lies in the observational kernel of every mixed policy observed under that kernel.

\begin{proposition}[Environmental obstruction]
\label{prop:env}
Assume Assumptions~\ref{ass:rum} and~\ref{ass:common}, $\beta\in(0,1)$, and a saturated payoff $u\in\R^{MAB}$, with $M\ge 2$, $A\ge 2$, $B\ge 2$. Fix a row-stochastic kernel $P$. Then $\mathrm{Im}\,G_{P,\beta}$ lies in the observational kernel of every mixed opponent policy observed under $P$, $G_{P,\beta}$ is injective, and $\dim\mathrm{Im}\,G_{P,\beta}=M$. Consequently no collection of strategic policy environments inside a single primitive regime can reduce residual payoff ambiguity below dimension $M$ when the payoff is saturated. Identification of a saturated payoff up to only a global additive constant therefore requires $R\ge 2$.
\end{proposition}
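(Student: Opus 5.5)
The plan has three parts: show that the gauge lies in the observational kernel, show that $G_{P,\beta}$ is injective, and then count dimensions. Throughout, $\beta$ is known and $P$ is fixed, so by \eqref{eq:z} identification concerns $\ker C^{e}(\beta)$, where $C^{e}$ is the payoff operator under $(P,q^e,\beta)$.

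\textbf{Step 1: the gauge is in the observational kernel.} Fix any interior opponent policy $q$. Perturb the payoff to $u+G_{P,\beta}h$ and check directly from the Bellman recursion that the ex ante value shifts by exactly $h$. Write the choice-specific value as
\[
v(x,a)=\sum_b q(b\mid x)\Bigl[u(x,a,b)+\beta\sum_{x'}P(x'\mid x,a,b)V(x')\Bigr].
\]
Substituting $u+G_{P,\beta}h$ and $V+h$, the $\beta\sum_{x'}P h$ terms cancel. Hence $v(x,a)$ shifts by $h(x)\sum_b q(b\mid x)=h(x)$, which does not depend on $a$. Translation equivariance of $\mathcal S$ (Assumption~\ref{ass:rum}) then gives $V+h$ as a fixed point of the Bellman operator. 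Uniqueness follows from the contraction property for $\beta<1$, so the CCPs and the value differences $z$ are unchanged. In operator form, $C^{e}(\beta)G_{P,\beta}h=0$ for every $e$. The same computation goes through unchanged if the known transfers enter, since they are held fixed.

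\textbf{Step 2: injectivity of $G_{P,\beta}$.} Suppose $G_{P,\beta}h=0$. Then for every $(x,a,b)$ we have $h(x)=\beta\sum_{x'}P(x'\mid x,a,b)h(x')$. Choose $x^\ast$ maximizing $|h|$. Then $|h(x^\ast)|\le\beta\max_{x'}|h(x')|=\beta|h(x^\ast)|$, which forces $h=0$ because $\beta<1$. Alternatively, fixing any $(a,b)$, the matrix $I-\beta P_{a,b}$ is invertible by the Neumann series. Injectivity gives $\dim\mathrm{Im}\,G_{P,\beta}=M$. One of these directions is the location direction: taking $h=c\ones$ gives the constant payoff $(1-\beta)c\ones$.

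\textbf{Step 3: consequences.} Stack any collection of policies $q^1,\ldots,q^E$ under the single kernel $P$. By Step~1,
\[
\ker\bigl[C^{1};\ldots;C^{E}\bigr]\supseteq\mathrm{Im}\,G_{P,\beta}.
\]
With a saturated payoff the parameter space is all of $\R^{MAB}$, so this $M$-dimensional subspace is fully admissible and residual ambiguity has dimension at least $M$. Identification up to a global constant requires the stacked kernel to equal $\operatorname{span}\{\ones\}$. Since $M\ge2>1$, this is impossible with $R=1$, and therefore $R\ge2$ is necessary.

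\textbf{Main obstacle.} The only delicate point is Step~1's claim that the shifted value is the equilibrium value of the identified player's problem. This requires the opponent policy $q$ to be held fixed as an environment, which Assumption~\ref{ass:common} and the paper's framing allow. It also requires the surplus-based Bellman operator to be a contraction, which I would invoke from translation equivariance together with monotonicity of $\mathcal S$. Once that is granted, the remaining steps are linear algebra.
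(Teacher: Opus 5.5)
Your proposal is correct and follows essentially the same route as the paper: the value-shift argument showing that $G_{P,\beta}h$ leaves own-action value differences unchanged under every opponent policy observed under $P$, the sup-norm argument $\|h\|_\infty\le\beta\|h\|_\infty$ for injectivity, and the observation that stacking policies under one kernel cannot remove the $M$-dimensional image. The explicit Bellman fixed-point and contraction-uniqueness check you give via translation equivariance of $\mathcal S$ just spells out the step the paper asserts in one line.
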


Transfers, subsidies, or one-sided experiments that move $q$ while holding $P$ fixed are not a second primitive environment. Two kernels always share at least the constant-flow line; generic pairs share nothing more (Lemma~\ref{lem:E1}).

\subsection{Strategic obstruction}
\label{sec:strat-obst}

Even if continuation technology varies, observing too few rival-policy profiles only reveals averages of payoffs across rival actions. A payoff that is high when the rival enters and low when the rival stays out can look identical to a state-dependent payoff that does not depend on the rival at all, provided the two objects share the same policy-weighted averages. Some opponent-contingent payoff components therefore remain hidden. The formal object is an own-action-invariant payoff whose policy-weighted averages are state-constant. Current own-action differences vanish by construction. Under each observed policy the reference expected payoff is a state-constant, so every stochastic kernel annihilates the continuation term as well. Extra transitions cannot see that payoff, so the policy count is a separate requirement from a second kernel.

\begin{proposition}[Strategic-policy obstruction]
\label{prop:strat}
Let $C_E^\Phi(\beta)$ stack the feature-coordinate payoff operators over any finite collection of primitive regimes and $E$ interior policies. Then
\begin{equation}
\label{eq:avg-bound}
\dim\ker C_E^\Phi(\beta)
\ge
\max\bigl\{1,\,MK-(M-1)E\bigr\}.
\end{equation}
The bound does not depend on $R$, on heterogeneity of the kernels, on $A$, or on genericity of $P^r$. Identification up to a single global location therefore requires
\begin{equation}
\label{eq:Estar}
E\ge \Ephi=\left\lceil\frac{MK-1}{M-1}\right\rceil.
\end{equation}
When $K=B$, this is $\Estar(M,B)$. No amount of transition variation can compensate for observing too few distinct opponent policy profiles.
\end{proposition}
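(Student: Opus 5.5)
The plan is to construct explicitly a linear subspace of feature coefficients $\theta$ that every stacked payoff operator annihilates, whatever the kernels are. I would count its dimension by counting the linear constraints that the $E$ policies impose. Restrict attention to own-action-invariant coefficients, $\theta(x,a)=\vartheta(x)\in\R^K$ for every $a$, so the payoff is $u(x,a,b)=\phi(b)^\top\vartheta(x)$. These form a subspace of dimension $MK$.

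For policy environment $e$, define the policy-weighted average
\[
\bar u^e(x)=\sum_b q^e(b\mid x)\phi(b)^\top\vartheta(x)=\bar\phi^e(x)^\top\vartheta(x),
\qquad \bar\phi^e(x):=\sum_b q^e(b\mid x)\phi(b).
\]
Impose the requirement that $\bar u^e(\cdot)$ be constant across states, $\bar u^e(x)=c_e$ for all $x$. For each $e$ this is $M-1$ homogeneous linear restrictions on $(\vartheta,c_e)$, or equivalently $M-1$ restrictions on $\vartheta$ alone: set all $\bar u^e(x)-\bar u^e(x_1)=0$. Hence the solution set $\mathcal N$ has $\dim\mathcal N\ge MK-(M-1)E$.

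The constant-flow payoff lies in $\mathcal N$. To see this, use $\ones_B\in\operatorname{col}\Phi$ to write $\ones_B=\Phi w$ and take $\vartheta(x)\equiv w$. Then $\bar u^e\equiv1$, since $q^e(\cdot\mid x)$ sums to one. Therefore $\dim\mathcal N\ge\max\{1,MK-(M-1)E\}$, because the constant payoff is nonzero and lies in $\mathcal N$.

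Next I would show $\mathcal N\subset\ker C^{r,e}(\beta)$ for every $(r,e)$, using the structure of the operator in Appendix~\ref{app:operators}. The measurement $z^{r,e}$ consists of own-action value differences relative to $a_0$. Each is a current part, $\sum_b q^e(b\mid x)[u(x,a,b)-u(x,a_0,b)]$, plus a continuation part. The continuation part is the difference in expected next-state values, where the value function solves the Bellman recursion whose flow is the reference/CCP-weighted expected payoff.

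The current part vanishes because $u$ is own-action invariant. The flow entering the value recursion is $\sum_a p(a\mid x)\sum_b q^e(b\mid x)u(x,a,b)=\bar u^e(x)=c_e$, which is constant in $x$. So the payoff contribution to the value is the constant $c_e/(1-\beta)$. The continuation difference is then $\beta\sum_{x'}[P^r(x'\mid x,a,\cdot)-P^r(x'\mid x,a_0,\cdot)]$ applied to a constant, which is zero by row-stochasticity. This is exactly the fact $S^{r,e}(\beta)\ones=0$ recorded earlier.

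This step is where I expect the main care to be needed. One has to check that, in the paper's Hotz--Miller parametrization, $u$ enters the continuation term only through the $q^e$-weighted reference (or CCP-weighted) expected payoff, so that own-action invariance makes the relevant flow equal $\bar u^e$. If the operator uses the reference-action flow $\sum_b q^e(b\mid x)u(x,a_0,b)$, own-action invariance again gives $\bar u^e(x)$, so either convention works. Nothing in this argument uses $R$, properties of $P^r$, or $A$, which gives the stated independence.

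Finally, identification up to a single global location requires $\dim\ker C_E^\Phi=1$, so $MK-(M-1)E\le1$. This gives $E\ge (MK-1)/(M-1)$, and taking the integer ceiling yields $\Ephi$. For $K=B$ we have $\Phi=I_B$, which yields $\Estar(M,B)$.
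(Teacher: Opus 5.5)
Your proposal is correct and is essentially the paper's argument. The paper's feature-coordinate necessity proof (Appendix~\ref{app:restricted-nec}) also takes own-action-invariant coefficients whose policy-weighted averages are state-constant, counts $M-1$ restrictions per policy, and uses the same kernel-membership step as Appendix~\ref{app:avg}: the current difference vanishes and $\beta(P_a-P_0)(I-\beta P_0)^{-1}$ annihilates constants. Your direct check that the location vector lies in $\mathcal N$ is a slightly tidier route to the $\max\{1,\cdot\}$. For $K=B$, substituting $K=B$ into $\Ephi$ suffices; you do not need $\Phi=I_B$, which $K=B$ does not force.
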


Propositions~\ref{prop:env} and~\ref{prop:strat} identify two universal lower-bound obstructions. They need not exhaust the observational kernel. Particular transition-policy geometries can create additional null directions. Appendix~\ref{app:lag} records one such family: lagged-action kernels leave a residual gauge of dimension at least $|\mathcal Z|$ no matter how many policies are stacked.

The two obstructions remain distinct. Fewer than $\Ephi$ policies leave a policy-averaging kernel no matter how many kernels are stacked. One transition regime leaves the full dynamic-potential gauge when the payoff is saturated. When $K<B$, that environmental obstruction is no longer universal. Table~\ref{tab:grid} records the saturated cells; Theorem~\ref{thm:structure} separates the restricted case.

\begin{table}[t]
\centering
\caption{Informational bottlenecks after rival-payoff restrictions. Averaging is indexed by $K$, not $B$. The cell $R=1$, $E\ge\Ephi$ is a saturated impossibility; when $K<B$ one-regime identification is possible. The cell $R\ge 2$, $E\ge\Ephi$ is attained at $R=2$.}
\label{tab:grid}
\small
\begin{tabular}{lcc}
\toprule
& $E<\Ephi$ & $E\ge\Ephi$ \\
\midrule
$R=1$
& averaging remains
& saturated gauge remains; not a universal block if $K<B$ \\
$R\ge 2$
& averaging remains
& known-$\beta$ identification attainable \\
\bottomrule
\end{tabular}
\end{table}

\subsection{Economic structure and variation}
\label{sec:sharp}

Definition~\ref{def:min} is understood in feature coordinates: $\Iident(R,E)=1$ when some design attains $\mathrm{rank}\,C^\Phi=MAK-1$. The minima $\Rmin$ and $\Emin$ remain useful marginal summaries. The object that records the tradeoff is the conditional frontier $\cE_\Phi(R)$ in Section~\ref{sec:frontier}.

\begin{definition}[Minimal cardinalities of variation]
\label{def:min}
Let $\Iident(R,E)=1$ when there exists a finite measurement design using at most $R$ distinct primitive transition regimes and at most $E$ distinct opponent-policy profiles such that, after payoff-location normalization, $\mathrm{rank}\,C^\Phi(\beta)=MAK-1$. Then
\[
\Rmin=\min\{R:\exists E,\ \Iident(R,E)=1\},
\qquad
\Emin=\min\{E:\exists R,\ \Iident(R,E)=1\}.
\]
\end{definition}

\begin{theorem}[Economic structure and variation]
\label{thm:structure}
Assume known $\beta\in(0,1)$, Assumptions~\ref{ass:rum} and~\ref{ass:common}, $M\ge 2$, $A\ge 2$, and a rival-feature payoff \eqref{eq:feature} with $\operatorname{rank}\Phi=K$ and $\ones_B\in\operatorname{col}\Phi$.
\begin{enumerate}
\item[(A)] For any number of primitive regimes and any kernels,
\[
\dim\ker C_E^\Phi\ge\max\{1,MK-E(M-1)\}.
\]
Hence $\Emin=\Ephi$.
\item[(B)] There exist $R=2$ strictly positive kernels and $E=\Ephi$ strictly interior conditionally independent product-policy profiles, arbitrarily close to a common interior product baseline, such that $\operatorname{rank} C^\Phi=MAK-1$.
\item[(C)] For every fixed primitive kernel $P$,
\[
\dim\bigl(\mathcal U_\Phi\cap\operatorname{Im} G_{P,\beta}\bigr)
=
M-\operatorname{rank}\mathcal R_\Phi(P).
\]
If $K=B$, every kernel leaves the $M$-dimensional saturated gauge, so $\Rmin=2$. If $K<B$, a generic strictly positive $P$ in the interior stochastic-kernel parameter space has $\operatorname{rank}\mathcal R_\Phi(P)=M-1$, and its restricted dynamic-potential intersection contains only location. The explicit one-regime construction then yields $\Rmin=1$ when $K<B$. Thus
\[
\Emin=\Ephi,
\qquad
\Rmin=
\begin{cases}
2,&K=B,\\
1,&K<B.
\end{cases}
\]
The two-regime construction attains $(R,E)=(2,\Ephi)$. An explicit one-regime construction attains $(R,E)=(1,K+\Ephi)$.
\end{enumerate}
\end{theorem}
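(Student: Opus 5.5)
The proof splits along the three parts of the statement, taking Propositions~\ref{prop:env} and~\ref{prop:strat} as given.

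\textbf{Part (A).} This is Proposition~\ref{prop:strat} restated. I would spell out the construction of the kernel vectors, since Part (B) reuses it. Take own-action-invariant payoffs $u(x,a,b)=\phi(b)^\top\vartheta(x)$ with $\vartheta\in\R^{MK}$, and impose, for each policy $e$, that the policy-weighted average $\bar\phi^e(x)^\top\vartheta(x)$ be constant across $x$. Here $\bar\phi^e(x)=\sum_b q^e(b\mid x)\phi(b)$. Each policy imposes $M-1$ linear restrictions, so the solution space has dimension at least $MK-(M-1)E$. Any such payoff has zero current own-action differences. Its reference expected payoff is state-constant, and every row-stochastic kernel maps constants to constants, which $S^{r,e}$ annihilates. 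The payoff is therefore in $\ker C^\Phi_E$ for every $R$ and every set of kernels. Requiring nullity $1$ gives $E\ge\Ephi$, and so $\Emin\ge\Ephi$. Attainment comes from Part (B).

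\textbf{Part (C).} First I would establish the dimension formula. By Proposition~\ref{prop:env}, $G_{P,\beta}$ is injective, so $\mathcal U_\Phi\cap\operatorname{Im}G_{P,\beta}\cong\{h:G_{P,\beta}h\in\mathcal U_\Phi\}$. Membership in $\mathcal U_\Phi$ means that for each $(x,a)$ the vector $b\mapsto h(x)-\beta\sum_{x'}P(x'\mid x,a,b)h(x')$ lies in $\operatorname{col}\Phi$. Because $\ones_B\in\operatorname{col}\Phi$, the $h(x)$ term is automatically admissible. The condition therefore reduces to $\Pi_\Phi^\perp P(\cdot\mid x,a,\cdot)h=0$, where $\Pi_\Phi^\perp$ is the projection onto $(\operatorname{col}\Phi)^\perp$. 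I would define $\mathcal R_\Phi(P)$ as the linear map $h\mapsto(\Pi_\Phi^\perp P_{x,a}h)_{x,a}$, and the formula is then rank--nullity. For $K=B$ the projection is zero, so $\operatorname{rank}\mathcal R_\Phi=0$ and every kernel leaves the full $M$-dimensional gauge; together with Part (B), $\Rmin=2$. For $K<B$, constants always lie in the kernel because $P\ones=\ones$, so the rank is at most $M-1$. The rank equals $M-1$ on a Zariski-open subset of the interior kernel space, since its complement is cut out by the vanishing of $(M-1)$-minors, which are polynomial in the entries of $P$. It therefore suffices to exhibit one interior $P$ with full rank. I would take $P(\cdot\mid x,a,b)$ depending on $b$ through a direction $w\notin\operatorname{col}\Phi$, making $\Pi^\perp_\Phi$ of a suitable column generate independent functionals of $h$ across the pairs $(x,a)$. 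With only location left in the restricted gauge, I would then stack $K+\Ephi$ policies under that single $P$. The extra $K$ policies are there to kill the remaining cross-terms between the averaging kernel and the continuation operator, and a generic-rank argument shows the rank reaches $MAK-1$.

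\textbf{Part (B).} This is the main obstacle. I would proceed by a genericity argument. The map from (kernels, policies) to the minors of $C^\Phi$ is real-analytic on the interior, indeed rational in $(P,q)$ once $\beta$ is fixed. So it suffices to find a single design with rank $MAK-1$, after which rank $MAK-1$ holds on an open dense set. Density of that set near any interior product baseline gives designs that are arbitrarily close. Product policies form an analytic submanifold, so nonvanishing of the minor must be checked on it. I would exhibit the witness in two stages.

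The first stage separates the null space. Any $v\in\ker C^\Phi$ splits into a part with nonzero current own-action differences, which a single well-chosen kernel already forces to lie in $\operatorname{Im}G$, and a part in the averaging class. Two kernels chosen as in Lemma~\ref{lem:E1} intersect their gauges only in the constants.

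The second stage concerns the averaging class. With $\Ephi$ product policies, $q^e=\bar q+\eta\delta^e$, I need the stacked restrictions $\bar\phi^e(x)^\top\vartheta(x)=c$ to have a solution space of dimension exactly $1$. I would handle this by induction on the interaction degree, expanding $\bar\phi^e$ in powers of the marginal perturbations. The step I expect to be delicate is showing that product perturbations span enough directions of $\operatorname{col}\Phi$ when it contains degree-$\dPhi$ interactions. I would try choosing the $\delta^e$ so that the monomials $\prod_{j\in S}\delta^e_j$ are linearly independent across $e$, which is achievable with generic values.
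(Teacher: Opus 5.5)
Your Part (A) and the dimension formula in Part (C) are fine and follow the paper's argument. The genuine gap is in Part (B), and it carries over to the one-regime claim in Part (C).

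\textbf{The witness in Part (B) is not established.} Your reduction to a single witness via analyticity of the minors is legitimate, and it would even give closeness to an arbitrary interior baseline. But the witness is the whole content, and the two-stage argument you sketch for it does not work.

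Stage~1 asserts that every $v\in\ker C^\Phi$ splits into a piece that one well-chosen kernel forces into $\operatorname{Im}G_{P,\beta}$, plus a piece in the averaging class $\mathcal A$. A single regime has only $EM(A-1)$ rows, so this fails for every choice of kernel already in the saturated case $M=B=3$, $A=2$. There $\Estar=4$, so one kernel gives $12$ rows on $18$ unknowns and a null space of dimension at least $6$. By contrast, $\operatorname{Im}G_{P,\beta}$ has dimension $3$. In any design where $\mathcal A$ is only location, which is the goal of your Stage~2, $\operatorname{Im}G_{P,\beta}+\mathcal A$ still has dimension $3$.

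Even granting $\ker C^r\subseteq\operatorname{Im}G_r+\mathcal A$ for $r=1,2$, the step to $(\operatorname{Im}G_1\cap\operatorname{Im}G_2)+\mathcal A$ is invalid. The intersection $(X_1+\mathcal A)\cap(X_2+\mathcal A)$ can strictly contain $(X_1\cap X_2)+\mathcal A$. So Lemma~\ref{lem:E1} does not bound $\ker C^1\cap\ker C^2$; it only says the two gauges share location. In the restricted case the relevant gauge would in any event be $\mathcal U_\Phi\cap\operatorname{Im}G_{P,\beta}$.

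Stage~2 has a separate hole. Independence of the monomials $\prod_{j\in S}\delta^e_j$ across $e$ at best gives full column rank of each state's feature-moment matrix $M_x$. But $\mathcal A$ is one-dimensional only if $\bigcap_x\operatorname{col}(M_x)=\operatorname{span}\{\ones_E\}$. That is a condition on how the policies vary across states, and you never address it.

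\textbf{The missing idea is to choose kernels that make the operator block triangular,} so that no decomposition of the null space is needed.
\begin{itemize}
\item The paper's regime~$0$ sets $P^0_{ab}=U$ for all $(a,b)$. This kills continuation differences, so the equations become $M_x\delta_{a,x}=0$, and full column rank gives $\delta=0$.
\item Regime~$1$ changes only $P^1_{a_\star b}=U+\alpha\Deltacycle$. Since $\Deltacycle(I-\beta U)^{-1}=\Deltacycle$, what remains is $\alpha\beta\Deltacycle r_e=0$, i.e.\ $M_xr_x=c$ for all $x$.
\item The star/hypergraph selection of rows of $R(\eta)$ in Lemma~\ref{lem:feature-moment} then forces $c\in\operatorname{span}\{\ones_E\}$ and $r_x=c_0h_\Phi$. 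Closeness to the baseline is just $\eta\to0$.
\end{itemize}

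\textbf{Your one-regime step has the same defect.} Knowing that $\mathcal U_\Phi\cap\operatorname{Im}G_{P,\beta}$ is only location does not show that finitely many policies identify under that $P$. Your ``generic-rank argument'' therefore still needs a witness, and you do not provide one.

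The paper's witness takes $w\in\ker\Qbase\setminus\operatorname{col}\Phi$ and sets $P_{ab}=U+\alpha w_b\Deltacycle$ for $a\neq a_0$. The $K$ state-constant base policies satisfy $p^\top w=0$, which shuts off continuation, so current differences are read off the nonsingular $W_0$. The $\Ephi$ hypergraph policies, perturbed so that $q^\top w\neq0$, then reduce the reference block to $D(q^ew)\Deltacycle r_e=0$. This is where the count $K+\Ephi$ comes from.
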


Ranks and cardinalities depend on $\operatorname{col}\Phi$ and on $K$, not on a particular basis. For $T\in\mathrm{GL}(K)$, $\Phi\mapsto\Phi T$ merely reparameterizes $\theta$.

Take $J$ binary opponents. Table~\ref{tab:menu} translates common rival-payoff restrictions into $K$ and $\Ephi$. Additive rival structure converts an exponential strategic-environment requirement into a linear one.

\begin{table}[t]
\centering
\caption{Rival-payoff restrictions and the sharp policy count $\Ephi$ for $J$ binary opponents ($B=2^J$). The last column is $\Ephi$ when $M\ge K$.}
\label{tab:menu}
\small
\begin{tabular}{lccc}
\toprule
restriction & $K$ & $\Ephi$ & $M\ge K$ \\
\midrule
no rival-payoff dependence & $1$ & $1$ & $1$ \\
linear active-rival count & $2$ & $3$ & $3$ \\
heterogeneous additive rivals & $J+1$ & $\lceil(M(J+1)-1)/(M-1)\rceil$ & $J+2$ \\
active-count indicators & $J+1$ & same & $J+2$ \\
interactions through degree $d$ & $\sum_{k=0}^{d}\binom{J}{k}$ & $\lceil(MK-1)/(M-1)\rceil$ & $K+1$ \\
saturated & $2^J$ & $\lceil(M2^J-1)/(M-1)\rceil$ & $2^J+1$ \\
\bottomrule
\end{tabular}
\end{table}

\begin{corollary}[Saturated benchmark]
\label{thm:sharp}
\label{cor:saturated}
If $K=B$, then $\Ephi=\Estar(M,B)$, the restricted gauge is the $M$-dimensional dynamic-potential class, and $\Rmin=2$. Identification of the saturated payoff up to location is attained by $R=2$ strictly positive kernels and $E=\Estar$ interior product policies, which may be taken arbitrarily close to uniform.
\end{corollary}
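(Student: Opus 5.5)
The corollary should follow from Theorem~\ref{thm:structure} by specializing to $K=B$, so the plan is to check each claim in that case and cite the relevant part of the theorem.

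\emph{Policy count and gauge.} Take $\Phi=I_B$; any $\Phi$ of rank $B$ is $I_B T$ with $T\in\mathrm{GL}(B)$, and the remark after the theorem says ranks and counts depend only on $\operatorname{col}\Phi$. Then $\ones_B\in\operatorname{col}\Phi=\R^B$ holds automatically, $\mathcal U_\Phi=\R^{MAB}$, and $\Ephi=\lceil(MB-1)/(M-1)\rceil=\Estar(M,B)$ by the definition in \eqref{eq:headline-sat}. For the gauge, since $\mathcal U_\Phi$ is the whole space, $\mathcal U_\Phi\cap\operatorname{Im}G_{P,\beta}=\operatorname{Im}G_{P,\beta}$ for every kernel $P$. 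Proposition~\ref{prop:env} gives that $G_{P,\beta}$ is injective with image of dimension $M$, so the restricted gauge is the full $M$-dimensional dynamic-potential class. This also matches part~(C), which then forces $\operatorname{rank}\mathcal R_\Phi(P)=0$.

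\emph{$\Rmin=2$.} The lower bound $\Rmin\ge 2$ comes from the last sentence of Proposition~\ref{prop:env}. With one regime, the kernel of the stacked operator has dimension at least $M\ge 2>1$, so $\mathrm{rank}\,C^\Phi\le MAB-M<MAB-1$ and $\Iident(1,E)=0$ for every $E$. The matching upper bound is part~(B) with $K=B$: it supplies two strictly positive kernels and $\Ephi=\Estar$ interior product policies, arbitrarily close to a common interior product baseline, with $\mathrm{rank}\,C^\Phi=MAB-1$. Hence $\Iident(2,\Estar)=1$ and $\Rmin=2$. Part~(A) separately gives $\Emin=\Estar$, which confirms that $E=\Estar$ cannot be reduced.

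\emph{Closeness to uniform.} This is the only step needing care. Part~(B) guarantees proximity to \emph{some} common interior product baseline, whereas the corollary asks for proximity to the uniform policy specifically. I would handle it in two ways. First, I would look inside the proof of~(B) to check that the baseline is arbitrary; the natural construction perturbs an arbitrary interior product baseline along generic directions, and uniform is a valid such baseline. Failing that, I would argue by genericity. Full rank $MAB-1$ is a nonvanishing-minor condition, polynomial and rational in the policy entries for fixed positive kernels and $\beta$. By~(B) this minor does not vanish identically on the real-analytic manifold of $\Estar$-tuples of interior product policies, so its nonvanishing set is open and dense there. Every neighborhood of the uniform tuple therefore contains full-rank designs.
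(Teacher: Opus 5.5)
Your argument is correct, but it reaches the corollary by a different route from the paper's.

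\textbf{How the paper argues.} The paper does not reduce to Theorem~\ref{thm:structure}. It takes $\Rmin\ge 2$ and $\Emin\ge\Estar$ from Propositions~\ref{prop:env} and~\ref{prop:strat}, then proves $\Iident(2,\Estar)=1$ with an explicit design:
\begin{itemize}
\item Regime~$0$ uses the uniform kernel $U$ for every $(a,b)$. Continuation differences vanish, and full column rank of the product-policy matrices $Q_x$ in \eqref{eq:Qxprod} recovers all $M(A-1)B$ current differences.
\item Regime~$1$ gives one nonreference action the kernel $U+\alpha\Deltacycle$. The reference block is then observed through $\alpha\beta\Deltacycle r_e$, and $\bigcap_x\operatorname{col}(Q_x)=\operatorname{span}\{\ones_E\}$ leaves only the global constant, for total rank $MAB-1$.
\end{itemize}
Proximity to uniform is automatic, because $R_j(\eta)=\eta I+(1-\eta)B_j^{-1}\ones\ones'$ tends to the uniform kernel as $\eta\to 0$.

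\textbf{Your reduction to part~(B).} This design is exactly the $K=B$ case of the proof of part~(B), so citing~(B) is sound and not circular. However, the construction behind~(B) is anchored at the uniform product policy, not at an arbitrary baseline. Your first route therefore already settles the proximity claim directly.

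\textbf{Your genericity fallback.} This is a genuinely different argument, and it works. Fix the kernels of~(B). A chosen $(MAB-1)$-minor is a polynomial in the marginals divided by a power of $\det(I-\beta P_0(q))\neq 0$. The marginals range over a convex relatively open set, and the numerator is nonzero at the design of~(B), so its zero set has empty interior.

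\textbf{What each route buys.} Your genericity argument yields more than the corollary states: full rank on an open dense set of policy tuples, hence near any interior product baseline, not only uniform. It is, however, nonconstructive. The paper's explicit design instead exhibits the block-by-block rank count. It is also the canonical design whose weakest singular value is later shown to scale like $\alpha\eta^{J}$, which a genericity argument cannot supply.
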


The first regime can be thought of as a measurement in which continuation differences are shut down. The second reopens continuation and uses a cycle increment to recover the reference-payoff block. The $\Ephi$ policies invert the feature-moment averaging map. Appendix~\ref{app:restricted} records the constructions.

\begin{remark}[Open identifying neighborhood]
\label{rem:open}
At the two-regime construction, after one payoff-location normalization, $\mathrm{rank}\,C^\Phi(\beta)=MAK-1$. For every fixed known $\beta\in(0,1)$, an open set of strictly positive transition regimes and strictly interior product-policy profiles around that construction continues to identify the restricted payoff up to location.
\end{remark}

\label{sec:frontier}
Conditional on $R$ regimes, write $\cE_\Phi(R)$ for the smallest $E$ such that some identifying design exists. Theorem~\ref{thm:structure} already gives $\cE_\Phi(R)=\Ephi$ for every $R\ge 2$. One regime has only $EM(A-1)$ choice-difference rows, so identification up to location requires
\begin{equation}
\label{eq:E1LB}
\cE_\Phi(1)\ge\EoneLB
:=
\max
\left\{
\left\lceil\frac{MAK-1}{M(A-1)}\right\rceil,\,
\left\lceil\frac{MK-1}{M-1}\right\rceil
\right\}.
\end{equation}
An explicit construction attains $K+\Ephi$. On a $47$-cell grid of model sizes and rival-feature families the two bounds meet, so those extra $K$ policies are an artifact of the construction (Proposition~\ref{prop:e1exact}). Whether $\cE_\Phi(1)=\EoneLB$ for every $(M,A,\Phi)$ outside that grid remains open. The two-regime count $\cE_\Phi(R)=\Ephi$ for $R\ge 2$ holds for every such triple. Appendix~\ref{app:frontier} records the statements and the exact-arithmetic certificate.

\section{Why rank is not enough}
\label{sec:strength}

Theorem~\ref{thm:structure}(B) attains the sharp counts with policies that may be arbitrarily close to a common interior product baseline. Rank is restored at every positive spread. The smallest singular value is not. Private shocks that are independent across players force opponent mixtures to factor, and a $d$-way rival interaction is then exposed only when $d$ marginals move together. Extra transition regimes rewrite continuation technology. They cannot manufacture the missing policy contrast.

\subsection{Interaction filtration at an interior baseline}
\label{sec:filtration}

Write the opponent joint-action space as a product $B=\prod_{j=1}^{J}B_j$. The integer $J$ is the number of independently mixing opponent components. A conditionally independent product policy is the object that private-shock Markov-perfect play produces.

Fix an interior product baseline $\qbar=\bigotimes_{j=1}^{J}\qbar_j$ with $\qbar_j\in\operatorname{int}\Delta(B_j)$. Decompose each factor into its constant and $\qbar_j$-centered parts,
\[
\R^{B_j}=\operatorname{span}\{\ones_{B_j}\}\oplus V_j,
\qquad
V_j=\Bigl\{v\in\R^{B_j}:\textstyle\sum_b \qbar_j(b)v(b)=0\Bigr\}.
\]
For $S\subseteq\{1,\ldots,J\}$ let $\cH_S(\qbar)$ apply $V_j$ on $j\in S$ and the constant on $j\notin S$, and set
\begin{equation}
\label{eq:Hd}
\cH_d(\qbar)=\bigoplus_{|S|=d}\cH_S(\qbar),
\qquad
\R^{B}=\bigoplus_{d=0}^{J}\cH_d(\qbar).
\end{equation}
The summands are orthogonal in the $\qbar$-weighted inner product. Let $F=\operatorname{col}\Phi$ and define the tail filtration
\[
F_{\ge d}=F\cap\Bigl(\bigoplus_{r=d}^{J}\cH_r(\qbar)\Bigr),
\qquad
g_d=\dim F_{\ge d},\qquad g_{J+1}=0.
\]
The successive dimensions
\begin{equation}
\label{eq:md}
m_d^\Phi=g_d-g_{d+1}
\end{equation}
sum to $K$. Because $\ones_B\in F$, one has $m_0^\Phi=1$. The highest active degree is
\begin{equation}
\label{eq:dPhi}
\dPhi=\max\{d:g_d>0\}.
\end{equation}
Thus $K$ counts how much strategic variation is required, while $\dPhi$ records how rapidly the weakest direction in $F$ deteriorates under local product mixing. Both integers, and the exponent $\dPhi$, are properties of $\operatorname{col}\Phi$. The raw number $\sigma_{\min,+}(C_\Phi)$ is not: a nonorthogonal rescaling $\Phi\mapsto\Phi T$ reparameterizes $\theta$ and can change singular values without changing the identified payoff. At the uniform baseline $\qbar_j=B_j^{-1}\ones$, \eqref{eq:Hd} is the usual orthogonal interaction decomposition and the mixing operator is
\[
R(\eta)=\bigotimes_{j=1}^{J}(\Pi_j+\eta H_j),
\qquad
\Pi_j=B_j^{-1}\ones\ones',\quad H_j=I_{B_j}-\Pi_j,
\]
which acts on $\cH_d$ as $\eta^{d}$. Observed opponent policies cluster somewhere other than uniform, so the baseline is kept free throughout.

\begin{lemma}[Restricted mixing]
\label{lem:mix}
On $F$, the singular values of $R(\eta)|_F$ have orders $\eta^d$ with multiplicity $m_d^\Phi$ for $d=0,\ldots,J$.
\end{lemma}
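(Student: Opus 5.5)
The plan is a Courant--Fischer argument that turns the graded structure of $R(\eta)$ into two-sided bounds on the restricted singular values. Throughout I take $\eta\in(0,1]$ and work at the uniform baseline, where $R(\eta)$ is defined. Write $P_d$ for the projector onto $\cH_d(\qbar)$. At the uniform baseline these summands are Euclidean-orthogonal, and expanding the tensor product $\bigotimes_j(\Pi_j+\eta H_j)$ gives
\[
R(\eta)=\sum_{d=0}^{J}\eta^{d}P_d .
\]
Hence, for every $v\in\R^B$,
\[
\|R(\eta)v\|^2=\sum_{d=0}^{J}\eta^{2d}\|P_dv\|^2 .
\]
Order the singular values of $R(\eta)|_F$ as $\sigma_1\ge\dots\ge\sigma_K$. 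The claim to prove is that the indices $i\in(K-g_d,\,K-g_{d+1}]$ carry singular values of exact order $\eta^d$. That block has $m_d^\Phi$ indices, and the blocks partition $\{1,\dots,K\}$ because $g_0=K$ (since $F\subseteq\R^B$) and $g_{J+1}=0$.

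For the upper bound I use the min--max form
\[
\sigma_{K-j+1}=\min_{\dim W=j,\,W\subseteq F}\ \max_{v\in W,\ \|v\|=1}\|R(\eta)v\| .
\]
Choose $W=F_{\ge d}$, which has dimension $g_d$. Any unit $v\in F_{\ge d}$ has $P_rv=0$ for $r<d$, so
\[
\|R(\eta)v\|^2=\sum_{r\ge d}\eta^{2r}\|P_rv\|^2\le\eta^{2d}.
\]
This gives $\sigma_{K-g_d+1}\le\eta^d$, and by monotonicity the same bound holds for every index in the $d$-th block.

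For the lower bound I use the max--min form
\[
\sigma_i=\max_{\dim W=i,\,W\subseteq F}\ \min_{v\in W,\ \|v\|=1}\|R(\eta)v\| .
\]
Choose $W$ to be the orthogonal complement of $F_{\ge d+1}$ inside $F$, which has dimension $K-g_{d+1}$. A nonzero $v\in W$ does not lie in $\bigoplus_{r\ge d+1}\cH_r$, so its low-degree part $\sum_{r\le d}P_rv$ is nonzero. The map $v\mapsto\|\sum_{r\le d}P_rv\|$ is continuous and positive on the unit sphere of $W$, which is compact, so it has a minimum $c_d>0$. This constant depends only on $\operatorname{col}\Phi$ and the baseline, not on $\eta$. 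Since $\eta\le1$,
\[
\|R(\eta)v\|^2\ge\eta^{2d}\sum_{r\le d}\|P_rv\|^2\ge c_d^2\eta^{2d}.
\]
Therefore $\sigma_{K-g_{d+1}}\ge c_d\eta^d$, and by monotonicity every index in the $d$-th block satisfies $c_d\eta^d\le\sigma_i\le\eta^d$.

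The main obstacle is extending this to a general interior baseline $\qbar$, which the text says is kept free. There the natural mixing operator $\bigotimes_j(\qbar_j\text{-projector}+\eta\,\cdot)$ scales $\cH_d(\qbar)$ by $\eta^d$, but the summands are orthogonal only in the $\qbar$-weighted inner product. I would run the same argument in that inner product and then return to Euclidean singular values. The two norms are equivalent, with constants depending only on $\min_b\qbar(b)$, so the comparison changes the constants $c_d$ and $1$ but not the exponents $\eta^d$ or the multiplicities $m_d^\Phi$. The one point that needs care is that the filtration $F_{\ge d}$ is defined through the direct sum itself and not through orthogonality, so the dimensions $g_d$ are the same in either inner product.
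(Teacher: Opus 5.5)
Your proof is correct, and it reaches the lemma by a different route from the paper's.

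\textbf{The paper's route.} It fixes a basis $V$ of $F$ adapted to $F_{\ge d}\supset F_{\ge d+1}$ and expands $R(\eta)v_{d,\ell}=\eta^d\bigl(P_dv_{d,\ell}+O(\eta)\bigr)$. It then assembles $R(\eta)V=W(\eta)D_\eta$ with $W(\eta)\to W_0$ of full column rank, and reads the orders off the diagonal $D_\eta$. That conclusion holds for small $\eta$ with implicit constants. It also uses, without stating it, that $\sigma_{\min}(W)\,\sigma_i(D)\le\sigma_i(WD)\le\|W\|\,\sigma_i(D)$, together with the fixed conditioning of $V$.

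\textbf{Your route.} You use the exact identity $\|R(\eta)v\|^2=\sum_d\eta^{2d}\|P_dv\|^2$ and Courant--Fischer with test spaces $F_{\ge d}$ and $F\ominus F_{\ge d+1}$. This gives the non-asymptotic sandwich $c_d\eta^d\le\sigma_i\le\eta^d$ for every $\eta\in(0,1]$. Here $c_d$ is an explicit $\eta$-free constant: the minimal norm of the low-degree component $\sum_{r\le d}P_rv$ over unit vectors of $F\ominus F_{\ge d+1}$.

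\textbf{What the paper's route buys.} The factorization is reused downstream. In the canonical construction, rows of $R(\eta)$ are selected, $S_{I_\Phi}R(\eta)V=[S_{I_\Phi}W(\eta)]D_\eta$, and $D_\eta$ is factored out of each state's coordinates in the reference block. Row selection destroys the orthogonal eigen-decomposition your argument relies on, so Courant--Fischer applied to $R(\eta)|_F$ alone would not carry over to those steps.

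\textbf{The non-uniform baseline.} The paper's argument needs only that the $\cH_d$ form a direct sum on which the mixing operator acts as $\eta^d$. It therefore applies verbatim at a non-uniform $\qbar$, with oblique projections. Your switch to the $\qbar$-weighted inner product, followed by norm equivalence, handles that case correctly. It is needed only because your method uses orthogonality. The lemma as stated concerns $R(\eta)$, which the paper defines only at the uniform baseline, so your last paragraph is an extension rather than a required step.
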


The argument uses a basis $\{v_{d,\ell}\}$ adapted to $F_{\ge d}\supset F_{\ge d+1}$. Each complement vector $v_{d,\ell}\in F_{\ge d}\setminus F_{\ge d+1}$ has a nonzero degree-$d$ projection, and those projections are linearly independent, so
\[
R(\eta)v_{d,\ell}=\eta^d\bigl(P_d v_{d,\ell}+O(\eta)\bigr).
\]
Assembling columns yields $R(\eta)V=W(\eta)D_\eta$ with $W(\eta)\to W_0$ of full column rank and
\[
D_\eta=\operatorname{diag}(\eta^0 I_{m_0^\Phi},\eta^1 I_{m_1^\Phi},\ldots,\eta^J I_{m_J^\Phi}).
\]
Appendix~\ref{app:strength} records the details.

\subsection{A bound that no design can beat}
\label{sec:attenuation}

Call a measurement design \emph{$\eta$-clustered at $\qbar$} if every opponent-policy environment is a strictly interior conditionally independent product policy whose marginals satisfy
\begin{equation}
\label{eq:cluster}
\max_{e\le E}\ \max_{j\le J}\ \max_{x\in\cX}\ \bigl\|q^{e}_j(\cdot\mid x)-\qbar_j\bigr\|_1\le\eta .
\end{equation}
We measure policy dispersion in $L^1$ distance. This equals twice the conventional total-variation distance and avoids an irrelevant factor of two in the rate statements. No restriction is placed on the number of environments, on the number of primitive transition regimes, or on the kernels themselves. Because stacking more measurement blocks mechanically inflates every singular value of the raw stack, $C^{\Phi}$ is normalized per environment, $C^{\Phi}=(RE)^{-1/2}\bigl[C^{r,e}\bigr]_{r,e}$. This is also the normalization under which a fixed total sample is split across environments, so that the factor cancels in the oracle variance of Section~\ref{sec:sampling}.

\begin{theorem}[Design-free strategic attenuation]
\label{thm:attenuation}
Assume Assumptions~\ref{ass:rum} and \ref{ass:common}, a known $\beta\in(0,1)$, and a rival-feature payoff \eqref{eq:feature} with $\operatorname{rank}\Phi=K$ and $\ones_B\in\operatorname{col}\Phi$. Fix an interior product baseline $\qbar$ and let $\dPhi\ge 1$ be given by \eqref{eq:dPhi}. There is a finite constant
\[
\kappa=\kappa(M,A,\beta,\Phi,\qbar)
\]
such that every $\eta$-clustered design at $\qbar$, using any number $R$ of strictly positive primitive transition regimes and any number $E$ of policy environments, satisfies the following dichotomy: either the restricted payoff is not identified modulo the global payoff location, or
\[
\sigma_{\min,+}\bigl(C^{\Phi}(\beta)\bigr)\ \le\ \kappa\,\eta^{\dPhi},
\]
where $\sigma_{\min,+}$ is the smallest singular value after the global payoff location is quotiented out. Thus rank failure is already a stronger failure, while every rank-identifying design is subject to the same $\eta^{\dPhi}$ strength bound. The constant does not depend on $R$, on $E$, on the transition kernels, or on the placement of the policies inside the neighborhood \eqref{eq:cluster}. One admissible choice is
\begin{equation}
\label{eq:kappa}
\kappa=\frac{2\beta}{1-\beta}\cdot\sqrt{M(A-1)}\cdot\frac{c_\Phi(\qbar)}{\delta_\Phi},
\end{equation}
where $c_\Phi(\qbar)$ is the attenuation constant of the top filtration layer and $\delta_\Phi$ is the distance from the normalized test direction to the payoff-location line, both defined in Appendix~\ref{app:attenuation}.
\end{theorem}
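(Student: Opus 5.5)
The plan is to bound $\sigma_{\min,+}$ from above with a Rayleigh quotient. I would exhibit one normalized test payoff $u^\star\in\mathcal U_\Phi$, at distance $\delta_\Phi$ from the location line, whose image under every block $C^{r,e}(\beta)$ has norm at most a design-free multiple of $\eta^{\dPhi}$. Since the per-environment normalization gives
\[
\|C^\Phi u^\star\|^2=(RE)^{-1}\sum_{r,e}\|C^{r,e}u^\star\|^2 ,
\]
it suffices to bound each block uniformly. The dichotomy then follows directly. If the design identifies the payoff modulo location, the quotient singular value is at most $\|C^\Phi u^\star\|/\operatorname{dist}(u^\star,\text{location})$, and this is at most $\kappa\eta^{\dPhi}$. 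Otherwise rank fails. The number of regimes $R$ and environments $E$ enters only through the averaging, so it drops out.

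\textbf{Choice of test direction.} Take $\theta^\star(x,a)=v$ constant across states and own actions, with $\phi^\top v=w\in F_{\ge \dPhi}$ nonzero. Such a $w$ exists by definition of $\dPhi$. Because $\dPhi$ is the top active degree, $w$ lies in $\bigoplus_{|S|=\dPhi}\cH_S(\qbar)$. The payoff is then $u^\star(x,a,b)=w(b)$, and it is own-action invariant, so current own-action differences vanish identically. Under a policy $q^e$, the reference expected flow at state $x$ is $\langle q^e(\cdot\mid x),w\rangle$.

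\textbf{Multilinear bound.} Write $q^e_j=\qbar_j+\Delta_j$ with $\|\Delta_j\|_1\le\eta$ and $\ones^\top\Delta_j=0$, and expand the product policy. For $S$ with $|S|=\dPhi$, a component of $\cH_S$ is $\qbar$-centered in each coordinate $j\in S$. Every term of the expansion that keeps some $\qbar_j$ with $j\in S$ therefore vanishes. The terms that survive carry $\Delta_j$ for all $j\in S$, so
\[
|\langle q^e(\cdot\mid x),w\rangle|\le c_\Phi(\qbar)\,\eta^{\dPhi},
\]
where $c_\Phi(\qbar)$ is the maximal sup-norm of the centered tensor coefficients of $w$. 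This is the factorization mechanism noted before Lemma~\ref{lem:mix}. Independent shocks give product policies, and product policies expose degree-$\dPhi$ components only through $\dPhi$ moved marginals.

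\textbf{Continuation term.} Since the current own-action difference vanishes, $C^{r,e}u^\star$ reduces to $\beta$ times the own-action difference of next-period expected values. Those values are generated by the reference-policy expected flow $f^e(x)=\langle q^e(\cdot\mid x),w\rangle$ through the continuation operator $S^{r,e}(\beta)$. The operator annihilates constants, and $(I-\beta P)^{-1}$ has sup-norm at most $1/(1-\beta)$. A difference of two stochastic rows then contributes at most $2\|f^e-\bar f\|_\infty$, where $\bar f$ is any constant. Each of the $M(A-1)$ rows is therefore bounded by $\frac{2\beta}{1-\beta}c_\Phi\eta^{\dPhi}$, uniformly in $P^r$, and converting to the Euclidean norm gives the factor $\sqrt{M(A-1)}$. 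Dividing by $\delta_\Phi$ yields \eqref{eq:kappa}.

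\textbf{Main obstacle.} The hardest step is getting the operator bound exactly right. I must check from the appendix's form of $S^{r,e}$ that the continuation loading of a reference-expected flow is $\beta\,(P_a-P_0)(I-\beta P_{q})^{-1}f$. I must also confirm that $\|(I-\beta P_q)^{-1}\|_\infty\le (1-\beta)^{-1}$ holds for every strictly positive kernel and interior policy, including the mixing of the opponent policy inside the kernel. A second, smaller point is $\delta_\Phi>0$. Since $\dPhi\ge1$, $w$ is centered and hence not proportional to $\ones_B$, so $u^\star$ is not on the location line. I should make sure the quotient singular value is measured in the same norm in which $\delta_\Phi$ is defined.
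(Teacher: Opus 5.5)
Your proposal is correct and follows the paper's proof essentially step for step. It uses the same own-action-invariant, state-constant test direction built from a witness in $F_{\ge\dPhi}$, the same product-expansion attenuation of $q^e(\cdot\mid x)'w$, the same uniform bound $2\beta/(1-\beta)$ on $\beta(P_a-P_0)(I-\beta P_0)^{-1}$, the $\sqrt{M(A-1)}$ conversion under the $(RE)^{-1/2}$ normalization, and the Rayleigh-quotient step with $\delta_\Phi$.

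Two small repairs are needed. First, a nonzero $w\in F_{\ge\dPhi}$ lies in the tail $\bigoplus_{r\ge\dPhi}\cH_r(\qbar)$ but need not lie in the pure layer $\cH_{\dPhi}(\qbar)$, so run your expansion over all $|S|\ge\dPhi$ and use $\eta^{|S|}\le\eta^{\dPhi}$ for $\eta\le 1$. Second, $c_\Phi(\qbar)$ must be a sum, not a maximum, of the sup-norm contributions of the surviving components, as in the paper's $c(f)$; with a maximum, the displayed inequality can fail by a combinatorial factor.
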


The proof isolates why transition variation cannot help. Choose a payoff direction $\psi$ that is invariant across the identified player's own actions and whose feature image $\Phi\psi$ lies in the top layer $F_{\ge\dPhi}$. Three facts then combine. First, own-action invariance makes every current-payoff difference vanish identically, so $\psi$ can reach the data only through continuation values. Second, the continuation term loads on $q^{e}(\cdot\mid x)'\Phi\psi$, and $\Phi\psi$ is $\qbar$-centered on at least $\dPhi$ factors, so the product expansion $\bigotimes_j(\qbar_j+\eta h^{e}_j)$ annihilates every term with fewer than $\dPhi$ moving marginals and leaves $O(\eta^{\dPhi})$. Third, the continuation operator $\beta(P_a-P_0)(I-\beta P_0)^{-1}$ has norm at most $2\beta/(1-\beta)$ for every row-stochastic kernel. Hence $\|C^{\Phi}\psi\|\le\kappa\eta^{\dPhi}\|\psi\|$ uniformly. If the design fails to identify the restricted payoff modulo location, that is already a stronger failure. If it identifies, $\sigma_{\min,+}$ inherits the bound. Appendix~\ref{app:attenuation} gives the details.

In the identifying constructions of Theorem~\ref{thm:structure}, rank is restored at every positive spread, but the smallest nonzero singular value collapses at the rate governed by the interaction filtration. Designs that retain additional null directions are weaker still. The cardinalities are therefore necessary for identification and still leave precision governed by $\eta^{\dPhi}$; no redesign of the transition environment closes that gap among identifying designs. Adding regimes, adding policies, or separating the kernels sharply all leave $\eta^{\dPhi}$ untouched.

The bound is uniform over transition geometries: it concerns the strategic margin alone, and the environmental spread $\alpha$ in Section~\ref{sec:strength-thm} is a property of the canonical construction, not of the problem. It is also uniform in $E$, so collecting more policy environments inside a fixed neighborhood cannot help. Finally it depends on the payoff restriction only through $\dPhi$, not through $K$. The number of environments needed and the precision obtainable are governed by different features of $\operatorname{col}\Phi$.

\begin{corollary}[Attenuation is attained]
\label{cor:attain}
For the canonical construction of Appendix~\ref{app:strength} with environmental spread bounded away from zero, $R=2$, and $E=\Ephi$, $\sigma_{\min,+}\asymp\eta^{\dPhi}$. The exponent in Theorem~\ref{thm:attenuation} is therefore sharp.
\end{corollary}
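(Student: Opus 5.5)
The upper bound $\sigma_{\min,+}\le\kappa\eta^{\dPhi}$ is already supplied by Theorem~\ref{thm:attenuation}. The canonical construction with $R=2$ and $E=\Ephi$ identifies the payoff by Theorem~\ref{thm:structure}(B), so it falls in the second branch of the dichotomy. What remains is a matching lower bound $\sigma_{\min,+}\ge c\,\eta^{\dPhi}$ for the canonical design, with $c>0$ independent of $\eta$ once the environmental spread $\alpha$ is held bounded away from zero.

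The plan is to factor the stacked operator through the restricted mixing map of Lemma~\ref{lem:mix}. In the canonical construction the policy environments are $q^e=\bigotimes_j(\qbar_j+\eta h^e_j)$ with fixed directions $h^e_j$. The payoff $u=\Phi\theta$ enters each block $C^{r,e}$ only through the policy-weighted feature moments $q^e(\cdot\mid x)'\Phi\theta(x,a)$. I would therefore write
\[
C^\Phi(\eta)=\mathcal L\,\mathcal M(\eta),
\]
where $\mathcal M(\eta)$ stacks the $E$ moment maps $\theta\mapsto\bigl(q^e(\cdot\mid x)'\Phi\theta(x,a)\bigr)_{x,a,e}$. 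The factor $\mathcal L$ collects the transition-dependent part: the current difference plus $\beta(P_a-P_0)(I-\beta P_0)^{-1}$ applied to reference moments, for the two kernels.

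Next, pass to the adapted basis $V$ of $F$ from Lemma~\ref{lem:mix}. Expanding each product gives $q^{e\,\prime}\Phi V=(\text{fixed matrix}+O(\eta))D_\eta$, so
\[
\mathcal M(\eta)(I\otimes V)=\mathcal N(\eta)(I\otimes D_\eta),
\]
with $\mathcal N(\eta)\to\mathcal N_0$. The key claim is that $\mathcal L\mathcal N_0$ has full rank $MAK-1$ after quotienting location. This is exactly the invertibility established in the proof of Theorem~\ref{thm:structure}(B), read off at leading order. The $\Ephi$ directions $h^e$ were chosen so that the $\eta$-rescaled feature-moment averaging map is invertible, and the two kernels, with spread $\alpha$ bounded below, remove the gauge.

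Given the full-rank claim, continuity in $\eta$ gives $\sigma_{\min,+}(\mathcal L\mathcal N(\eta))\ge c_0>0$ for small $\eta$. Then
\[
\sigma_{\min,+}(C^\Phi)\ \ge\ \sigma_{\min,+}(\mathcal L\mathcal N(\eta))\,\sigma_{\min}(D_\eta)\,\sigma_{\min}(V)/\|V^{-1}\|\ \gtrsim\ \eta^{\dPhi},
\]
since the smallest entry of $D_\eta$ on active layers is $\eta^{\dPhi}$. The location quotient must be handled compatibly: location lies in the degree-$0$ layer, through $\ones_B$ and $\theta$ constant, so it is untouched by the scaling and can be quotiented before rescaling. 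The $(RE)^{-1/2}$ normalization only contributes a fixed constant.

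The main obstacle is verifying that the leading-order matrix $\mathcal L\mathcal N_0$, and not merely $C^\Phi(\eta)$ at each fixed $\eta>0$, is nondegenerate. There is a risk that the rescaled limit loses rank through cancellations across layers, which would make the exponent worse. To rule this out, I would choose the canonical $h^e_j$ so that the rescaled moment vectors $\{\prod_{j\in S}h^e_j\}_S$ form a unisolvent, Vandermonde-type design on each layer. With that choice the limit map is block-triangular in the filtration, with invertible diagonal blocks. Invertibility then reduces layer by layer to the $\eta$-free argument of Theorem~\ref{thm:structure}(B).
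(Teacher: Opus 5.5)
Your architecture is the paper's. The upper bound comes from Theorem~\ref{thm:attenuation}: the canonical rows of $R(\eta)$ lie within $2\eta$ of the uniform baseline in $L^1$, and the design identifies. The lower bound comes from pulling $D_\eta$ out through Lemma~\ref{lem:mix} and showing that the $\eta$-free limit, with $\alpha$ fixed, is injective modulo location. Your continuity step is fine. The location quotient causes no trouble, because $\sigma_{\min,+}(XY)\ge\sigma_{\min,+}(X)\,\sigma_{\min}(Y)$ for invertible $Y$.

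The gap is the crux you yourself flag. Nondegeneracy of the rescaled limit is \emph{not} Theorem~\ref{thm:structure}(B) read at leading order. Part (B) gives invertibility of $S_IR(\eta)\Phi$ at each $\eta>0$ for \emph{any} $K$ independent rows $I$, but the limit $S_IW_0$ can be singular, and then the exponent degrades. For example, take two binary rivals, $h=(1,-1)'$, and $F=\operatorname{span}\{\ones,\,h\otimes\ones+h\otimes h\}$. Then $\dPhi=1$. Yet the two rows of $R(\eta)$ that share the first-factor action give the feature-moment matrix with rows $(1,\eta+\eta^2)$ and $(1,\eta-\eta^2)$. Its determinant is $-2\eta^2$, so $\sigma_{\min,+}\lesssim\eta^2$ for that identifying design.

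Your proposed repair does not close this as written. First, redesigning the $h^e_j$ leaves the canonical construction named in the statement, whose base policies are fixed rows of $R(\eta)$. Second, product policies with all $h^e_j\neq 0$ generically have nonzero leading moments in every active layer, so the limit is not block-triangular. Block-triangularity would require arranging supports, for instance letting the policies assigned to layer $d$ move only $d$ marginals and freezing the rest at $\qbar$. Third, a scalar Vandermonde family $h^e_j=t_eh_j$ makes the columns within a layer proportional whenever $m^\Phi_d>1$.

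The missing observation is simpler. Lemma~\ref{lem:mix} already gives $W_0=[P_dv_{d,\ell}]$ full column rank $K$. So you can fix, once and for all, a set $I_\Phi$ of $K$ rows with $B_0=S_{I_\Phi}W_0$ nonsingular. That is the filtration-preserving selector of Appendix~\ref{app:strength-sel}, and it is what makes the construction canonical. With $B_0$ in place of $W$, the argument of (B) runs verbatim at the limit:
\begin{itemize}
\item Regime~0 gives $\delta=0$, by full column rank of $S_xB_0$.
\item The cycle increment $\alpha\beta\Deltacycle r_e$, with $\alpha$ fixed, forces $S_xB_0z_x=c$ for all $x$.
\item The star intersection gives $c=\lambda\ones_E$.
\item Injectivity of $S_x$ together with $S_x\ones_K=\ones_E$ gives $z_x=\lambda B_0^{-1}\ones_K$, a one-dimensional kernel (Appendix~\ref{app:strength-ref}).
\end{itemize}
No layerwise or block-triangular structure is needed; your $D_\eta$ and continuity argument then finishes the proof.
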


\subsection{Canonical restricted spectrum}
\label{sec:strength-thm}

The canonical two-regime construction refines Theorem~\ref{thm:attenuation} by resolving the whole spectrum, at the cost of specializing the transition geometry. Let $\eta$ be the strategic policy spread and $\alpha$ the environmental transition spread. After the global payoff constant is quotiented out, current-payoff differences contribute $M(A-1)m_d^\Phi$ directions of order $\eta^d$. The reference payoff contributes $M-1$ directions of order $\alpha$ and, for $d\ge 1$, $M m_d^\Phi$ directions of order $\alpha\eta^d$.

\begin{theorem}[Canonical restricted spectrum]
\label{thm:strength}
Under the filtration-preserving canonical two-regime, $\Ephi$-policy construction adapted to $F=\operatorname{col}\Phi$ and specified in Appendix~\ref{app:strength}, let the strategic policy spread be $\eta$ and the environmental transition spread be $\alpha$. After quotienting the global payoff constant,
\[
\sigma_{\min,+}\asymp\alpha\eta^{\dPhi}.
\]
On the isotropic path $\alpha=\eta=\varepsilon$,
\[
\sigma_{\min,+}\asymp\varepsilon^{\dPhi+1}.
\]
Current-payoff-difference blocks contribute $M(A-1)m_d^\Phi$ directions of order $\eta^d$. The reference-payoff block contributes $M-1$ directions of order $\alpha$ and $M m_d^\Phi$ directions of order $\alpha\eta^d$ for $d\ge 1$.
\end{theorem}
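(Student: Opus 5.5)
The plan is to reduce the stacked operator to a block-triangular form whose diagonal blocks factor through the restricted mixing operator of Lemma~\ref{lem:mix}, and then read off the spectrum block by block. Write $u=\Phi$-coordinates $\theta(x,a)$ and split $\theta$ into the own-action-difference part $\theta(x,a)-\theta(x,0)$, $a\ne 0$, and the reference part $\theta(x,0)$. In the canonical construction the first regime shuts down continuation differences ($P_a=P_0$ in regime one, so $\beta(P_a-P_0)(I-\beta P_0)^{-1}=0$), and its rows are exactly $q^e(\cdot\mid x)'\Phi[\theta(x,a)-\theta(x,0)]$ over $e=1,\dots,\Ephi$. The second regime differs from the first by a transition perturbation of size $\alpha$ (the cycle increment $\Deltacycle$), so its rows equal the regime-one rows plus $\alpha$ times a continuation operator applied to the reference-payoff vector $x\mapsto q^e(\cdot\mid x)'\Phi\theta(x,0)$, plus $O(\alpha^2)$. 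Differencing the two regimes (an invertible, $\eta$- and $\alpha$-free row operation, which changes singular values only by bounded factors) yields a block upper-triangular matrix with diagonal blocks $\mathcal D_1(\eta)$ acting on the difference coordinates and $\alpha\,\mathcal D_2(\eta)$ acting on the reference coordinates.

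Next I treat each diagonal block. For fixed $x$, the map $\theta\mapsto(q^e(\cdot\mid x)'\Phi\theta)_{e\le\Ephi}$ is the restricted mixing map. The canonical policies are chosen as $q^e=\bigotimes_j(\qbar_j+\eta h^e_j)$ with directions adapted to the filtration $F_{\ge d}$, so that after the basis change of Lemma~\ref{lem:mix} this map equals $W(\eta)D_\eta$ with $W(\eta)\to W_0$ of full column rank. Hence $\mathcal D_1(\eta)$ is $M(A-1)$ copies of this map and has $M(A-1)m_d^\Phi$ singular values of order $\eta^d$. For $\mathcal D_2$, the continuation operator $\Ccycle$ induced by the cycle increment acts on the state index. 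It annihilates constants (since $S\ones=0$) and is injective on the $(M-1)$-dimensional centered complement. Composed with the per-state mixing, the $d=0$ layer therefore contributes $M-1$ directions of order $\alpha$ after quotienting the global constant. The layers with $d\ge1$ contribute $M m_d^\Phi$ directions of order $\alpha\eta^d$, because those feature directions are not killed by the state-constant quotient once $\Ephi$ policies separate states (this is the counting $MK-1\le(M-1)\Ephi$ from Proposition~\ref{prop:strat}).

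Finally I combine the blocks. A block-triangular matrix $\begin{pmatrix}X&Y\\0&\alpha Z\end{pmatrix}$ with scaled diagonal blocks has singular values comparable to those of the diagonal blocks, provided the off-diagonal block is controlled relative to the scalings. Here $Y$ carries the same $D_\eta$ column scaling, since it comes from the same mixing, so conjugating by $\operatorname{diag}(D_\eta,D_\eta)$ on the right and applying a Schur-complement/Weyl argument gives two-sided bounds. Taking the minimum over layers, the smallest nonzero singular value is $\min\{\eta^{\dPhi},\alpha\eta^{\dPhi}\}\asymp\alpha\eta^{\dPhi}$ for $\alpha$ bounded above. Setting $\alpha=\eta=\varepsilon$ gives $\varepsilon^{\dPhi+1}$. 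The upper bound is also consistent with Theorem~\ref{thm:attenuation} in the $\alpha$-bounded-below case.

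The main obstacle is the lower bound in the combination step: showing that the off-diagonal coupling and the $O(\alpha^2)$ and $O(\eta)$ remainders do not push any singular value below its nominal order. Uniform two-sided scaling requires $W(\eta)$ and the analogous reference-block factor to have singular values bounded away from zero uniformly in $(\alpha,\eta)$ near zero. I would first try to prove this by factoring the whole stacked operator as $\widetilde W(\alpha,\eta)\operatorname{diag}(D_\eta,\alpha D_\eta)$ with $\widetilde W$ continuous and of full column rank (modulo location) at $(0,0)$. That full-rank condition at the limit is exactly what the filtration-preserving choice of $h^e_j$ and the cycle kernel are designed to guarantee, and it can be checked from the limit rank result of Theorem~\ref{thm:structure}(B).
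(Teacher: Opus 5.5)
Your route is the paper's. You subtract regimes with an $(\alpha,\eta)$-free row operation, treat the current-difference and reference blocks separately, factor each state's feature moments through $R(\eta)V=W(\eta)D_\eta$ (Lemma~\ref{lem:mix}), and read the spectrum off $\operatorname{diag}(D_\eta,\alpha D_\eta)$ once the rescaled operator is uniformly injective modulo location.

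The combination step you call the main obstacle is trivial here. Because $P^1_{a_0b}=U$ and $\Deltacycle U=0$, the regime-$1$ continuation term for $a_\star$ is exactly $\alpha\beta\Deltacycle r_e$, with no $O(\alpha^2)$ remainder. The other regime-$1$ rows coincide with regime $0$, and regime $0$ involves only $\delta$. So in $(\delta,r)$ coordinates the subtracted operator is exactly block diagonal, $Y=0$, and no Schur-complement or Weyl step is needed. Also, the operator that annihilates constants is $\Deltacycle$, not $\Ccycle$.

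The gap is in the step that actually carries the lower bound: uniform injectivity of the rescaled reference block modulo location. Neither reason you give establishes it.

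\textbf{Counting.} The inequality $MK-1\le(M-1)\Ephi$ only says the kernel lower bound of Proposition~\ref{prop:strat} is at most one. It cannot show the kernel is one-dimensional, let alone uniformly in $\eta$.

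\textbf{Rank.} Rank $MAK-1$ at each fixed $\eta>0$, as in Theorem~\ref{thm:structure}(B), is compatible with a degenerate rescaled limit and hence a worse rate. For example, take two binary rivals and $F=\operatorname{span}\{\ones_B,g_1+g_2\}$ with $g_d\in\cH_d$, so $\dPhi=1$. Pick two base rows of $R(\eta)$ on which $g_1$ agrees and $g_2$ differs. The resulting moment matrix is nonsingular for every $\eta>0$, yet its smallest singular value is of order $\eta^2$, not $\eta$.

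\textbf{What is needed.} Redo the intersection argument on the limit itself.
\begin{itemize}
\item Let $z_x$ be the $D_\eta$-rescaled $V$-coordinates. Then $r_e(x)$ is the $e$-th entry of $S_x[S_{I_\Phi}W(\eta)]z_x$.
\item At $\eta=0$ the bracket is $B_0=S_{I_\Phi}W_0$. This is nonsingular precisely because the base rows $I_\Phi$ are read off $W_0$ once and held fixed; $K$ rows that are merely nonsingular at a given $\eta$ do not suffice.
\item If $\Deltacycle r_e=0$ for every $e$, then $S_xB_0z_x=c$ for all $x$. Connectedness of the star hypergraph gives $c\in\bigcap_x\mathrm{col}(S_x)=\operatorname{span}\{\ones_E\}$.
\item Each $S_x$ contains $I_K$, so $B_0z_x=\lambda\ones_K$ and $z_x=\lambda B_0^{-1}\ones_K$, which is the location direction.
\item Continuity then gives a uniform lower bound on the complement of location for small $\eta$.
\item Take $v_{0,1}=\ones_B$. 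Location then sits in the degree-zero coordinates, where $D_\eta$ is the identity. Right multiplication by $I_M\otimes D_\eta$ therefore yields exactly $M-1$ directions of order $\alpha$ and $Mm_d^\Phi$ directions of order $\alpha\eta^d$.
\end{itemize}
The same fixed selector is what makes $S_xS_{I_\Phi}W_0$ of full column rank in your current-difference block.
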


The stacked two-regime operator is not block diagonal. A bounded row operation with condition number independent of $(\alpha,\eta)$ subtracts the first regime from the second and permits the current-difference and reference blocks to be read separately; Appendix~\ref{app:strength-R} records that step. The extra factor $\alpha$ relative to Theorem~\ref{thm:attenuation} is the price of near-uniform kernels: a well-separated transition pair recovers the design-free rate $\eta^{\dPhi}$ but not more.

Additive heterogeneous binary rivals have $F=\cH_0\oplus F_1$, hence $\dPhi=1$ and $\sigma_{\min,+}\asymp\varepsilon^2$ for any $J$. Interactions through degree $d$ have $\dPhi=d$ and rate $\varepsilon^{d+1}$. Saturation has $\dPhi=J$ and recovers $\varepsilon^{J+1}$. Table~\ref{tab:strength-menu} collects these cases.

\begin{table}[t]
\centering
\caption{Examples for $J$ binary opponent components. The integer $K$ governs the sharp policy count $\Ephi$; the highest active degree $\dPhi$ governs local strength. The design-free column is the exponent of Theorem~\ref{thm:attenuation}, valid for every rank-identifying transition geometry; the canonical column adds the environmental spread of Theorem~\ref{thm:strength} on the isotropic path $\alpha=\eta=\varepsilon$.}
\label{tab:strength-menu}
\small
\begin{tabular}{lcccc}
\toprule
restriction & $K$ & $\dPhi$ & design-free & canonical \\
\midrule
no rival-payoff dependence & $1$ & $0$ & $1$ & $\varepsilon$ \\
heterogeneous additive rivals & $J+1$ & $1$ & $\varepsilon$ & $\varepsilon^{2}$ \\
interactions through degree $d$ & $\sum_{k=0}^{d}\binom{J}{k}$ & $d$ & $\varepsilon^{d}$ & $\varepsilon^{d+1}$ \\
saturated & $2^{J}$ & $J$ & $\varepsilon^{J}$ & $\varepsilon^{J+1}$ \\
\bottomrule
\end{tabular}
\end{table}

\begin{corollary}[Saturated recovery]
\label{cor:spectrum}
If $K=B$, then $m_d^\Phi=\dim\cH_d=c_d$ as in \eqref{eq:cd} below, $\dPhi=J$, and Theorem~\ref{thm:strength} recovers $\sigma_{\min,+}\asymp\alpha\eta^{J}$.
\end{corollary}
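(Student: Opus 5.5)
The corollary is a specialization of Theorem~\ref{thm:strength} to $\Phi=I_B$ (equivalently any $\Phi$ with $\operatorname{col}\Phi=\R^B$), so the plan is to compute the filtration data $m_d^\Phi$ and $\dPhi$ in that case and then invoke the theorem. Since the exponent and multiplicities depend only on $\operatorname{col}\Phi$, I will take $F=\R^B$ without loss.

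The first step is to compute the tail filtration. With $F=\R^B$, the definition gives $F_{\ge d}=\bigoplus_{r=d}^{J}\cH_r(\qbar)$, so $g_d=\sum_{r\ge d}\dim\cH_r(\qbar)$ and $m_d^\Phi=g_d-g_{d+1}=\dim\cH_d(\qbar)$. I then compute $\dim\cH_d(\qbar)$ from \eqref{eq:Hd}. Each $\cH_S(\qbar)$ is a tensor product of $\operatorname{span}\{\ones_{B_j}\}$ for $j\notin S$ and $V_j$ for $j\in S$. Because $\qbar_j$ is interior, $V_j$ is a hyperplane of dimension $B_j-1$. This gives
\[
\dim\cH_S=\prod_{j\in S}(B_j-1),\qquad \dim\cH_d=\sum_{|S|=d}\prod_{j\in S}(B_j-1)=:c_d,
\]
which I expect is what \eqref{eq:cd} records; for binary opponents it reduces to $\binom{J}{d}$. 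A quick check is that $\sum_d c_d=\prod_j B_j=B=K$, consistent with the direct-sum decomposition.

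The second step is the top degree. Since every $B_j\ge2$, we have $c_J=\prod_j(B_j-1)>0$, so $g_J>0$ and $\dPhi=J$. If some factor were degenerate ($B_j=1$), it would contribute nothing and $J$ should be read as the number of nontrivial components. I will note this convention.

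The third step is to apply Theorem~\ref{thm:strength} with these values, which gives $\sigma_{\min,+}\asymp\alpha\eta^{J}$. The block multiplicities are $M(A-1)c_d$ current-difference directions of order $\eta^d$, $M-1$ directions of order $\alpha$, and $Mc_d$ directions of order $\alpha\eta^d$. The minimum among these is $\alpha\eta^J$, since $\alpha\eta^J\le\eta^J$ and $\alpha\eta^J\le\alpha$ for small spreads.

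No step is a real obstacle. The only point needing care is confirming that the canonical construction of Appendix~\ref{app:strength} applies when $F=\R^B$, i.e. that filtration preservation is automatic. I would check this first: every subspace is compatible with the full space, so it should be immediate. As a consistency check, the count $\dim\ker=M$ from Corollary~\ref{cor:saturated} should match the $R=1$ case.
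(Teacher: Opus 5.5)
Your proposal is correct and is essentially the paper's argument. With $F=\R^B$ the tail filtration is $F_{\ge d}=\bigoplus_{r\ge d}\cH_r$, so $m_d^\Phi=\dim\cH_d=c_d$; positivity of $c_J=\prod_j(B_j-1)$ gives $\dPhi=J$; and Theorem~\ref{thm:strength} then yields $\sigma_{\min,+}\asymp\alpha\eta^{J}$. The compatibility point you flag is indeed automatic, as the paper notes: for $F=\R^B$ the filtration-preserving selector takes all $B$ rows of $R(\eta)$, which is the saturated construction with $W=R(\eta)$ and $M_x=Q_x$.
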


The combinatorial counts used for saturation are
\begin{equation}
\label{eq:cd}
c_d
=
[z^d]
\prod_{j=1}^{J}
\bigl(1+(B_j-1)z\bigr).
\end{equation}
Figure~\ref{fig:strength} plots the product-versus-joint comparison for the saturated construction. Product slopes track $\varepsilon^{J+1}$ and the joint comparison tracks $\varepsilon^{2}$ inside this geometry; Theorem~\ref{thm:attenuation} is the statement that applies to every geometry among rank-identifying designs.

\begin{figure}[t]
\centering
\includegraphics[width=0.82\textwidth]{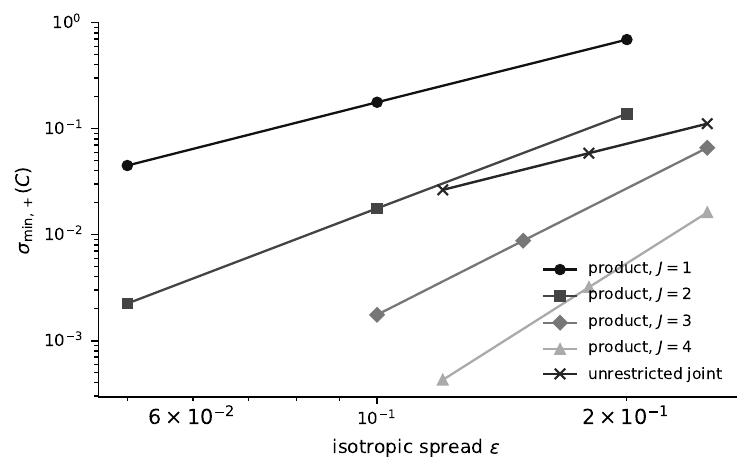}
\caption{Why product equilibrium policy is costly. Local singular-value scaling in the canonical sharp product-policy construction and an unrestricted joint-policy comparison, under isotropic environmental and strategic spread $\varepsilon$. Product series correspond to the saturated case of Corollary~\ref{cor:spectrum} with $J=1,2,3,4$ independently mixing opponent components. The remaining series is the unrestricted joint-policy comparison for the same environmental perturbation. Product slopes track $\varepsilon^{J+1}$ and the joint comparison tracks $\varepsilon^{2}$ inside this geometry. Unrestricted joint policies are not feasible under Assumption~\ref{ass:rum}; they isolate the product restriction.}
\label{fig:strength}
\end{figure}

\begin{remark}[Joint-policy comparison]
\label{rem:joint}
\label{cor:joint}
Under Assumption~\ref{ass:rum}, equilibrium opponent policies are conditionally independent product policies. Unrestricted correlated joint-policy perturbations are not feasible equilibrium-policy designs under the maintained model. Their role is to isolate the effect of relaxing the product restriction. In the canonical sharp geometry, product feasibility generates the interaction hierarchy. Relaxing it removes that hierarchy, although other transition geometries can also improve local strength while retaining product policies.
\end{remark}

\section{How much data the weakest direction needs}
\label{sec:sampling}

Work in the location-normalized identified payoff coordinates, and write $C_\varepsilon$ for the filtration-preserving canonical restricted design along a path of spreads. With fixed-length panels, $n$ denotes the number of independent market or route clusters. Suppose
\[
\widehat z_{n,\varepsilon}
=
C_\varepsilon\theta_0
+
n^{-1/2}\xi_{n,\varepsilon},
\]
and define $\Omega_{n,\varepsilon}=\operatorname{Var}(\xi_{n,\varepsilon})$. Assume uniformly along the considered sequence
\[
0<cI\preceq\Omega_{n,\varepsilon}\preceq CI.
\]
Oracle GLS uses the true covariance,
\[
\widehat\theta
=
\arg\min_\theta
(\widehat z_{n,\varepsilon}-C_\varepsilon\theta)'\Omega_{n,\varepsilon}^{-1}(\widehat z_{n,\varepsilon}-C_\varepsilon\theta).
\]
The estimator is linear, so
\begin{equation}
\label{eq:var}
\operatorname{Var}(\widehat\theta)
=
\frac1n
\bigl(C_\varepsilon'\Omega_{n,\varepsilon}^{-1}C_\varepsilon\bigr)^{-1}.
\end{equation}
The identity does not use a normality assumption. Uniform eigenvalue bounds yield
\begin{equation}
\label{eq:var-smin}
\lambda_{\max}\operatorname{Var}(\widehat\theta)
\asymp
\frac{1}{n\sigma_{\min,+}(C_\varepsilon)^2}.
\end{equation}
Theorem~\ref{thm:strength} then gives
\[
\lambda_{\max}\operatorname{Var}(\widehat\theta)
\asymp
\frac{1}{n\alpha^2\eta^{2\dPhi}}.
\]
On an isotropic triangular sequence $\alpha=\eta=\varepsilon_n\to 0$,
\begin{equation}
\label{eq:var-eps}
\lambda_{\max}\operatorname{Var}(\widehat\theta)
\asymp
\frac{1}{n\varepsilon_n^{2(\dPhi+1)}}.
\end{equation}
Along that sequence, $n\varepsilon_n^{2(\dPhi+1)}\to\infty$ is necessary and sufficient for the largest oracle variance eigenvalue to vanish. Holding precision fixed along the design-free bound of Theorem~\ref{thm:attenuation} gives the iso-information relation $n'\eta'^{2\dPhi}=n\eta^{2\dPhi}$, or $n'/n=(\eta/\eta')^{2\dPhi}$. This is a local scaling benchmark, not a finite-sample extrapolation of Anderson--Rubin widths.

\begin{proposition}[Oracle sampling-noise amplification]
\label{prop:noise}
Under the measurement-noise sequence above, \eqref{eq:var}--\eqref{eq:var-eps} hold for the filtration-preserving canonical restricted design of Theorem~\ref{thm:strength}.
\end{proposition}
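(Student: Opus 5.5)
The plan is to verify the three displays in turn. For the variance formula \eqref{eq:var}, GLS with known $\Omega_{n,\varepsilon}$ has a closed form. In the location-normalized coordinates $C_\varepsilon$ has full column rank (Theorem~\ref{thm:structure}(B) together with Remark~\ref{rem:open} at every positive spread). So the minimizer is
\[
\widehat\theta=(C_\varepsilon'\Omega^{-1}C_\varepsilon)^{-1}C_\varepsilon'\Omega^{-1}\widehat z_{n,\varepsilon}=\theta_0+n^{-1/2}(C_\varepsilon'\Omega^{-1}C_\varepsilon)^{-1}C_\varepsilon'\Omega^{-1}\xi_{n,\varepsilon}.
\]
Taking variances and using $\operatorname{Var}(\xi)=\Omega$, the sandwich collapses to $n^{-1}(C_\varepsilon'\Omega^{-1}C_\varepsilon)^{-1}$. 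This step uses only second moments, which is why no normality is needed.

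For \eqref{eq:var-smin}, I would sandwich the information matrix. From $C^{-1}I\preceq\Omega^{-1}\preceq c^{-1}I$ we get
\[
C^{-1}C_\varepsilon'C_\varepsilon\preceq C_\varepsilon'\Omega^{-1}C_\varepsilon\preceq c^{-1}C_\varepsilon'C_\varepsilon .
\]
Matrix inversion reverses Loewner order. Therefore
\[
\lambda_{\max}\bigl((C_\varepsilon'\Omega^{-1}C_\varepsilon)^{-1}\bigr)\in\bigl[c/\sigma_{\min,+}^2,\;C/\sigma_{\min,+}^2\bigr].
\]
Here $\lambda_{\min}(C_\varepsilon'C_\varepsilon)=\sigma_{\min,+}(C_\varepsilon)^2$ in the quotient coordinates, because quotienting location is exactly what removes the zero singular value. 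The constants $c,C$ do not depend on $(n,\varepsilon)$, which gives $\asymp$ uniformly along the sequence.

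Next I substitute Theorem~\ref{thm:strength}, $\sigma_{\min,+}\asymp\alpha\eta^{\dPhi}$, to obtain the $1/(n\alpha^2\eta^{2\dPhi})$ display. Setting $\alpha=\eta=\varepsilon_n$ then gives \eqref{eq:var-eps}. Since two-sided bounds hold, the largest eigenvalue tends to zero iff $n\varepsilon_n^{2(\dPhi+1)}\to\infty$.

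The main obstacle is making sure the $\asymp$ constants in Theorem~\ref{thm:strength} are uniform along a triangular path $(\alpha_n,\eta_n)\to0$, rather than holding only for fixed spreads. I would use the factorization $R(\eta)V=W(\eta)D_\eta$ from Lemma~\ref{lem:mix}, where $W(\eta)\to W_0$ has full rank. I would combine it with the bounded, $(\alpha,\eta)$-independent row operation of Appendix~\ref{app:strength-R}. Together these write $C_\varepsilon=\tilde W_\varepsilon\tilde D_\varepsilon$ with $\tilde W_\varepsilon$ uniformly well conditioned for all small spreads, so the constants in $\sigma_{\min,+}\asymp\alpha\eta^{\dPhi}$ are uniform. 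A second point to check is the $(RE)^{-1/2}$ normalization. With the total sample split across environments, each block's noise has variance scaled by $RE$, and this cancels the normalization, so $\Omega$ still satisfies the maintained uniform bounds.
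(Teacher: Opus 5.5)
Your proposal is correct and follows the paper's proof essentially step for step. Linearity of oracle GLS gives \eqref{eq:var} with no normality, sandwiching $C_\varepsilon'\Omega_{n,\varepsilon}^{-1}C_\varepsilon$ between constant multiples of $C_\varepsilon'C_\varepsilon$ gives \eqref{eq:var-smin}, and substituting Theorem~\ref{thm:strength} on the isotropic path gives \eqref{eq:var-eps}. Your extra checks on full column rank, on uniformity of the $\asymp$ constants in $(\alpha,\eta)$, and on the $(RE)^{-1/2}$ normalization elaborate points the paper absorbs into the setup of Section~\ref{sec:sampling} and the statement of Theorem~\ref{thm:strength}. One small correction: with the fixed selector $I_\Phi$, full column rank is guaranteed only for sufficiently small $\eta$ (Appendix~\ref{app:strength-sel}), not at every positive spread, but small $\eta$ is all the argument needs.
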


The variance identity is algebraic. The Gaussian experiment below is what a triangular-array CLT for $\widehat z_{n,\varepsilon}$ delivers, and it supplies a minimax lower bound of the same order.

\begin{proposition}[Gaussian two-point lower bound]
\label{prop:twopoint}
In the location-normalized coordinates of this section, consider the Gaussian experiment
\[
\widehat z_n
=
C\theta
+
n^{-1/2}\xi,
\qquad
\xi\sim N(0,\Omega),
\]
with $cI\preceq\Omega\preceq CI$ and $\sigma_{\min,+}(C)>0$. There exist a pair $\{\theta_0,\theta_1\}$ with $\|\theta_1-\theta_0\|\asymp(n\sigma_{\min,+}^2)^{-1/2}$ and a constant $c_*>0$ depending only on $c$ such that
\[
\inf_{\widehat\theta}
\sup_{\theta\in\{\theta_0,\theta_1\}}
E_\theta\|\widehat\theta-\theta\|^2
\ \ge\
\frac{c_*}{n\sigma_{\min,+}^2}.
\]
The infimum is over every estimator of $\theta$, including those that use $C$ and $\Omega$.
\end{proposition}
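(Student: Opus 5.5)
The plan is a standard Le Cam two-point argument run along the weakest identified direction of $C$. Let $v$ be a unit right singular vector of $C$ for $\sigma_{\min,+}:=\sigma_{\min,+}(C)$. In the location-normalized coordinates, $C$ is injective, so $\sigma_{\min,+}$ is its smallest singular value and $\|Cv\|=\sigma_{\min,+}$. I would set
\[
\theta_0=0,\qquad \theta_1=\frac{\gamma\, v}{\sqrt{n}\,\sigma_{\min,+}},
\]
with a constant $\gamma>0$ to be fixed using $c$ only. This gives $\|\theta_1-\theta_0\|=\gamma(n\sigma_{\min,+}^2)^{-1/2}$, which is the claimed separation.

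\textbf{Step 1: the two Gaussian laws are statistically close.} Under $\theta_k$ the data satisfy $\widehat z_n\sim N(C\theta_k,n^{-1}\Omega)$. Both laws share the covariance $n^{-1}\Omega$, so the Kullback--Leibler divergence is explicit:
\[
\mathrm{KL}=\frac{n}{2}(\theta_1-\theta_0)'C'\Omega^{-1}C(\theta_1-\theta_0)\le\frac{n}{2c}\,\|C(\theta_1-\theta_0)\|^2=\frac{\gamma^2}{2c}.
\]
The last equality holds because $\|C(\theta_1-\theta_0)\|=\gamma/\sqrt n$. Pinsker's inequality then gives total variation at most $\gamma/(2\sqrt c)$. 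Choosing, for example, $\gamma=\sqrt c$ makes this at most $1/2$.

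\textbf{Step 2: convert closeness into a risk lower bound.} Take any estimator $\widehat\theta$ and write $\Delta=\|\theta_1-\theta_0\|$. Define the test that picks whichever of $\theta_0,\theta_1$ is nearer to $\widehat\theta$. If the test errs under $\theta_k$, then $\|\widehat\theta-\theta_k\|\ge\Delta/2$. By Markov's inequality,
\[
\max_k E_{\theta_k}\|\widehat\theta-\theta_k\|^2\ \ge\ \frac{\Delta^2}{4}\cdot\frac12\bigl(P_0(\text{err})+P_1(\text{err})\bigr).
\]
Le Cam's inequality bounds the sum of the two error probabilities below by $1-\mathrm{TV}\ge 1/2$. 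Hence the maximum risk is at least $\Delta^2/16=\gamma^2/(16\,n\sigma_{\min,+}^2)$. This yields $c_*=c/16$, which depends only on $c$ as required.

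The argument never restricts the estimator, so it may use $C$ and $\Omega$. There is no substantive obstacle here. The only point needing care is the coordinate convention: the pair must differ along a direction that is identified after the location quotient, not along the location line. Otherwise $C(\theta_1-\theta_0)=0$ and the two hypotheses would be observationally equivalent, which is the rank failure rather than the weak-identification regime. Working in the location-normalized coordinates, where $C$ is injective and $\sigma_{\min,+}$ is its smallest singular value, handles this, and the choice of $v$ as the corresponding singular vector does the rest.
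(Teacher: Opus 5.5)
Your proposal is correct and follows the paper's proof essentially step for step. You use the same perturbation $\theta_1=\theta_0+\gamma v/(\sqrt n\,\sigma_{\min,+})$ along the weakest right singular vector with $\gamma=\sqrt c$, the same KL and Pinsker bound giving total variation at most $1/2$, and the same nearest-neighbor reduction combined with Le Cam's inequality, which yields $c_*=c/16$.
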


\subsection{A sample-size floor that no design can lower}
\label{sec:floor}

Equation~\eqref{eq:var-smin} is a statement about whatever design is in hand. Combining it with Theorem~\ref{thm:attenuation} turns it into a requirement on the data that holds before any design is chosen. Proposition~\ref{prop:twopoint} upgrades the same quantity from an oracle-GLS variance to a minimax risk.

\begin{corollary}[Strategic sample-size floor]
\label{cor:floor}
Let $\{(R_n,E_n,P_n,q_n)\}$ be any sequence of measurement designs that is $\eta_n$-clustered at a fixed interior product baseline $\qbar$ and identifies the restricted payoff modulo global location for every $n$, with total sample size $n$. Under the measurement-noise conditions of Proposition~\ref{prop:noise},
\[
\lambda_{\max}\operatorname{Var}(\widehat\theta)
\ \ge\
\frac{1}{\kappa^{2}}\cdot\frac{1}{n\,\eta_n^{2\dPhi}} ,
\]
so the largest oracle-GLS variance eigenvalue vanishes only if
\begin{equation}
\label{eq:floor}
n\,\eta_n^{2\dPhi}\longrightarrow\infty .
\end{equation}
In the Gaussian experiment of Proposition~\ref{prop:twopoint} the same rate is necessary for vanishing minimax risk. Neither requirement depends on the number of transition regimes, on the number of policy environments, or on the transition kernels. Equivalently, a degree-$\dPhi$ rival interaction read from opponent policies dispersed by $\eta$ requires at least an effective-sample factor of order $\eta^{-2\dPhi}$ before that oracle variance, or that Gaussian risk, can vanish.
\end{corollary}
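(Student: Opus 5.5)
The plan is to chain three ingredients: the oracle variance identity \eqref{eq:var}, the eigenvalue sandwich on $\Omega_{n}$, and the design-free bound of Theorem~\ref{thm:attenuation}. Fix $n$ and write $C_n=C^\Phi(\beta)$ for the per-environment-normalized stacked operator of the $n$th design, in location-normalized coordinates. Because the design identifies the restricted payoff modulo location, $C_n$ has full column rank on the quotient and $\sigma_{\min,+}(C_n)>0$. Hence the dichotomy of Theorem~\ref{thm:attenuation} lands in its second branch, giving $\sigma_{\min,+}(C_n)\le\kappa\eta_n^{\dPhi}$. The constant $\kappa=\kappa(M,A,\beta,\Phi,\qbar)$ is the same for all $n$, even though $R_n$, $E_n$, $P_n$, and the policy placements vary. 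This uniformity is exactly what the theorem provides, and it is the only place the design sequence enters.

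For the variance step, \eqref{eq:var} gives $\operatorname{Var}(\widehat\theta)=n^{-1}(C_n'\Omega_n^{-1}C_n)^{-1}$. Since $\Omega_n\preceq CI$, we have $\Omega_n^{-1}\succeq C^{-1}I$, so $C_n'\Omega_n^{-1}C_n\succeq C^{-1}C_n'C_n$. Inverting reverses the order:
\[
\lambda_{\max}\operatorname{Var}(\widehat\theta)\ge \frac{C}{n\,\sigma_{\min,+}(C_n)^2}\ge\frac{C}{\kappa^2}\cdot\frac{1}{n\eta_n^{2\dPhi}}.
\]
The stated constant $1/\kappa^2$ appears once the bounded covariance is normalized so that $C\ge 1$, for example $\Omega=I$ on the upper side. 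Otherwise the constant is absorbed, and I will state that convention explicitly.

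One point needs care. The per-environment normalization $(RE)^{-1/2}$ must be the one under which the noise covariance is bounded with a total sample of $n$. Otherwise, splitting $n$ across $R_nE_n$ cells could hide a factor. I will appeal to the remark after \eqref{eq:cluster} that this normalization is exactly the one under which the factor cancels. This bookkeeping is the main obstacle.

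The necessity of \eqref{eq:floor} is then immediate. If $n\eta_n^{2\dPhi}$ fails to diverge, it is bounded along some subsequence. On that subsequence the lower bound above is bounded away from zero, so the variance does not vanish.

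For the minimax statement, I will apply Proposition~\ref{prop:twopoint} to each $C_n$. This gives risk at least $c_*/(n\sigma_{\min,+}(C_n)^2)\ge c_*/(\kappa^2 n\eta_n^{2\dPhi})$, and the same subsequence argument applies. Independence from $R$, $E$, and the kernels is inherited from $\kappa$. The effective-sample restatement is a rearrangement of the displayed bound.
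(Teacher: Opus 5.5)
Your route is the paper's: combine \eqref{eq:var-smin} with the second branch of Theorem~\ref{thm:attenuation}, whose constant $\kappa$ is uniform in $(R_n,E_n,P_n,q_n)$, and then use Proposition~\ref{prop:twopoint} for the minimax version. Your subsequence argument for the necessity of \eqref{eq:floor} and your minimax step are fine. Your appeal to the per-environment normalization is also how the paper handles that bookkeeping.

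The variance step, however, uses the wrong side of the covariance sandwich, so the displayed lower bound does not follow. From $\Omega_n\preceq CI$ you correctly get $C_n'\Omega_n^{-1}C_n\succeq C^{-1}C_n'C_n$. Inverting then gives $(C_n'\Omega_n^{-1}C_n)^{-1}\preceq C\,(C_n'C_n)^{-1}$. That is an \emph{upper} bound, $\lambda_{\max}\operatorname{Var}(\widehat\theta)\le C/(n\sigma_{\min,+}^2)$, not the lower bound you wrote.

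The lower bound has to come from $\Omega_n\succeq cI$. Then $\Omega_n^{-1}\preceq c^{-1}I$, so $C_n'\Omega_n^{-1}C_n\preceq c^{-1}C_n'C_n$. Inverting yields
\[
\lambda_{\max}\operatorname{Var}(\widehat\theta)\ \ge\ \frac{c}{n\,\sigma_{\min,+}(C_n)^2}\ \ge\ \frac{c}{\kappa^2}\cdot\frac{1}{n\,\eta_n^{2\dPhi}} .
\]
The constant is therefore $c/\kappa^2$. Recovering the stated $1/\kappa^2$ requires $c\ge 1$ (for instance $\Omega_n=I$), not $C\ge 1$ as you propose. The paper's supporting step \eqref{eq:var-smin} is stated only up to $\asymp$ and absorbs this factor silently.

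Once you correct this, the rate condition \eqref{eq:floor} follows as you argue. So do the independence from the number of regimes, the number of policy environments and the kernels, and the effective-sample restatement.
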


If a design is rank deficient beyond location, consistent estimation of the full location-normalized payoff already fails, so the rank-identifying case is the relevant boundary for the rate calculation. The theorem gives a necessary lower bound; a particular design can be worse.

Condition \eqref{eq:floor} is the empirically operative form of the theory, because $\eta$ is measurable directly from observed opponent behavior. It also separates the two integers cleanly. The feature dimension $K$ says how many distinct policy environments must be found at all; the highest active degree $\dPhi$ says how much data each of them must carry. A restriction can be generous on the first margin and punitive on the second.

In the airline panel of Section~\ref{sec:empirical} the raw cross-stratum rival-policy dispersion is $\widehat\eta=1.24$, but after netting binomial sampling noise the corrected radius is $0.68$, and the width ratio of the confidence set suggests, as a heuristic scale comparison, an effective dispersion near $0.17$; this number is not used for inference. At that heuristic benchmark a degree-one rival effect already requires an effective sample of order $35$ relative to a rival-independent contrast, and a degree-two effect of order $10^{3}$. The panel contains $2{,}204$ route-quarters on $241$ route clusters; in the clustered asymptotics, the independent sampling unit is the route cluster. The frontier between what this design can and cannot measure therefore falls between a payoff that ignores the rival and a payoff that lets the rival enter, and it falls there for reasons that no reweighting or richer transition model can remove without genuinely more dispersed rival-policy variation.

\section{Inference that does not require vanishing risk}
\label{sec:inference}

Theorem~\ref{thm:attenuation} says that some payoff directions are measured arbitrarily badly, and Corollary~\ref{cor:floor} says how badly. Those directions, together with the observed radius $\eta$, index a local-to-zero experiment for the Hotz--Miller moment. A confidence interval built from an asymptotic normal approximation to $\widehat\theta$ presumes that the design pins $\theta$ down well enough for that approximation to bite. The set below is Anderson--Rubin inversion of the same moment, with first-stage error in $\widehat C_\Phi$ inside $\Omega_n(\theta)$. Proposition~\ref{prop:ar} gives coverage without a bound on $\sigma_{\min,+}$. Corollary~\ref{cor:width} and Proposition~\ref{prop:local} give the two sampling regimes the attenuation bound distinguishes: vanishing risk, and a noncentral $\chi^2$ experiment indexed by $\tau=\lim\sqrt n\,\sigma_{\min,+}$.

\subsection{The moment function and the variance that is usually omitted}
\label{sec:moment}

Fix $\beta$ and a feature matrix $\Phi$, and stack \eqref{eq:z} over the measured environments. Write $C_\Phi=C(\beta)J_\Phi$ for the operator restricted to $\mathcal U_\Phi$, where $J_\Phi$ embeds $\theta$ into $\R^{MAB}$, and write
\begin{equation}
\label{eq:amoment}
a
=
d(p_i)-\Delta\tau-S(\beta)\bigl[g(p_i)+\tau_{i0}\bigr]
\end{equation}
for the part of the measurement that is linear in the payoff, so that $a=C_\Phi\theta$ holds exactly at the truth. The sample analogue replaces the conditional choice probabilities and the transition kernels by estimates. Both objects in \eqref{eq:amoment} move: $\widehat a$ inherits first-stage error through $d(\widehat p_i)$, $g(\widehat p_i)$, and $\widehat S(\beta)$, and $\widehat C_\Phi$ inherits it through the estimated opponent policy and the estimated kernels. Define
\begin{equation}
\label{eq:mtheta}
m_n(\theta)
=
\widehat a-\widehat C_\Phi\theta .
\end{equation}

The variance of \eqref{eq:mtheta} is not the variance of $\widehat a$. It also contains the design-matrix variance
\[
\operatorname{Var}(\widehat C_\Phi\theta),
\]
which is quadratic in $\theta$, and the covariance between $\widehat a$ and $\widehat C_\Phi\theta$, which is linear in $\theta$. Standard practice plugs in estimated first-stage objects and then proceeds as though the design matrix were known. These omitted terms can be negligible when the payoff is well determined, but can become first-order or dominant in weakly measured directions, because a weakly identified direction is one along which $\theta$ must be large to move the measurement at all. In the environment of Theorem~\ref{thm:attenuation}, dropping first-stage error in the design matrix therefore reports a well-measured payoff along directions the design barely sees.

\subsection{The confidence set}
\label{sec:arset}

Let $\Omega_n(\theta)$ denote the asymptotic variance of $\sqrt n\,m_n(\theta)$ and let $\widehat\Omega_n(\theta)$ be a consistent estimator. Define the Anderson--Rubin statistic and set
\begin{equation}
\label{eq:ar}
\mathrm{AR}_n(\theta)
=
n\,m_n(\theta)'\widehat\Omega_n(\theta)^{+}m_n(\theta),
\qquad
\mathcal C_n(1-\alpha)
=
\bigl\{\theta:\ \mathrm{AR}_n(\theta)\le c_{L,1-\alpha}\bigr\},
\end{equation}
with $L=\operatorname{rank}\Omega_n$ and $c_{L,1-\alpha}$ the $(1-\alpha)$ quantile of $\chi^2_L$.

\begin{proposition}[Coverage that does not depend on identification strength]
\label{prop:ar}
Let the sample consist of $n$ independent markets. Suppose the first-stage estimator $\widehat\psi$ of $\psi=(p_i,q,P)$ satisfies $\sqrt n(\widehat\psi-\psi)\Rightarrow N(0,V)$, that the map $\psi\mapsto(a,C_\Phi)$ is continuously differentiable at $\psi$ with Jacobian $D(\theta)$ for the composite $\psi\mapsto a-C_\Phi\theta$, and that $\widehat\Omega_n(\theta_0)\to_p\Omega(\theta_0)=D(\theta_0)VD(\theta_0)'$. Let $L=\operatorname{rank}\Omega(\theta_0)$, and suppose
\[
\Pr\bigl\{\operatorname{rank}\widehat\Omega_n(\theta_0)=L\bigr\}\to 1.
\]
The nonzero eigenvalues of $\Omega(\theta_0)$ are uniformly bounded in $[c,C]\subset(0,\infty)$. Then for every $\theta_0$ satisfying the model,
\[
\mathrm{AR}_n(\theta_0)\Rightarrow\chi^2_L,
\qquad
\Pr\bigl\{\theta_0\in\mathcal C_n(1-\alpha)\bigr\}\to 1-\alpha .
\]
The convergence is uniform over any class of designs and data-generating processes on which those eigenvalue bounds hold and the first-stage remainder is uniformly negligible. No condition is imposed on $\sigma_{\min,+}(C_\Phi)$, and none is available: by Theorem~\ref{thm:attenuation} that quantity can be made arbitrarily small by clustering opponent policies, which is a property of the design rather than of the sampling problem.
\end{proposition}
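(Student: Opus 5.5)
The approach is a standard delta-method argument at the fixed hypothesized value $\theta_0$; at no point will it use invertibility of $C_\Phi$. The key observation is that at the truth the population moment vanishes exactly, $a-C_\Phi\theta_0=0$. Hence
\[
\sqrt n\,m_n(\theta_0)=\sqrt n\bigl[(\widehat a-\widehat C_\Phi\theta_0)-(a-C_\Phi\theta_0)\bigr].
\]
This is the increment of the map $h_{\theta_0}:\psi\mapsto a(\psi)-C_\Phi(\psi)\theta_0$ between $\widehat\psi$ and $\psi$. By the assumed continuous differentiability with Jacobian $D(\theta_0)$, a first-order expansion gives
\[
\sqrt n\,m_n(\theta_0)=D(\theta_0)\sqrt n(\widehat\psi-\psi)+r_n,\qquad r_n=o_p(1).
\]
Combining this with $\sqrt n(\widehat\psi-\psi)\Rightarrow N(0,V)$ and Slutsky yields
\[
\sqrt n\,m_n(\theta_0)\Rightarrow Z\sim N\bigl(0,\Omega(\theta_0)\bigr),\qquad \Omega(\theta_0)=D(\theta_0)VD(\theta_0)'.
\]
Note that $\theta_0$ enters $D$ only linearly, through the derivative of $C_\Phi\theta_0$. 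This is exactly where the first-stage error in the design matrix is carried, and $\sigma_{\min,+}(C_\Phi)$ never appears.

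Next, pass to the quadratic form. A degenerate Gaussian $Z$ lies almost surely in $\operatorname{col}\Omega$. Writing $\Omega=U\Lambda U'$ with $U$ of $L$ orthonormal columns and $\Lambda\succ0$, one has $Z'\Omega^+Z=\|\Lambda^{-1/2}U'Z\|^2\sim\chi^2_L$. The step I expect to be the main obstacle is continuity of the Moore--Penrose pseudoinverse, because $A\mapsto A^+$ is discontinuous when rank changes. I would handle it using the rank-stability hypothesis. On the event $\{\operatorname{rank}\widehat\Omega_n(\theta_0)=L\}$, whose probability tends to one, the pseudoinverse is continuous at $\Omega(\theta_0)$ along matrices of constant rank. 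This is a standard perturbation fact (Stewart), and it holds because the nonzero eigenvalues are bounded below by $c$, so the eigengap separating them from zero is at least $c/2$ eventually. Therefore $\widehat\Omega_n^+\to_p\Omega^+$.

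There is a subtlety: $\sqrt n\,m_n$ need not lie exactly in $\operatorname{col}\widehat\Omega_n$ in finite samples. This does not matter, because the continuous mapping theorem applies to the pair $(\sqrt n\,m_n,\widehat\Omega_n^+)\Rightarrow(Z,\Omega^+)$ and the limit quadratic form only sees $\operatorname{col}\Omega$. The continuous mapping theorem then gives $\mathrm{AR}_n(\theta_0)\Rightarrow\chi^2_L$. The $\chi^2_L$ distribution is continuous at $c_{L,1-\alpha}$, so $\Pr\{\theta_0\in\mathcal C_n\}=\Pr\{\mathrm{AR}_n(\theta_0)\le c_{L,1-\alpha}\}\to1-\alpha$.

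For uniformity I would argue along sequences. Suppose $\liminf_n\inf_{\text{class}}\Pr\{\theta_0\in\mathcal C_n\}<1-\alpha$. Pick a sequence of designs and data-generating processes that attains the deficit. Pass to a subsequence along which $(\Omega_n,L)$ converges; this is possible because $L$ takes finitely many values and the nonzero eigenvalues lie in $[c,C]$, so the limit keeps rank $L$. Along that subsequence, repeat the argument above, using the assumed uniform negligibility of the remainder together with a uniform (Lindeberg-type) CLT for $\widehat\psi$. This gives $\chi^2_L$ along the subsequence, a contradiction (as in Andrews, Cheng and Guggenberger). Finally, no argument step invoked $\sigma_{\min,+}$, and Theorem~\ref{thm:attenuation} shows that $\sigma_{\min,+}$ can be driven to zero by $\eta$-clustering while the eigenvalue bounds on $\Omega$ are unaffected. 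This justifies the closing remark of the proposition.
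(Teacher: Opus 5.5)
Your proposal is correct and is essentially the paper's proof: a delta-method expansion of $\sqrt n\,m_n(\theta_0)$ at the fixed null value using $h(\psi_0,\theta_0)=0$, then $\widehat\Omega_n(\theta_0)^{+}\to_p\Omega(\theta_0)^{+}$ from rank stability and the eigengap at $c$, then continuous mapping to $Z'\Omega^{+}Z\sim\chi^2_L$, with $\sigma_{\min,+}(C_\Phi)$ never entering. Your explicit handling of $\sqrt n\,m_n\notin\operatorname{col}\widehat\Omega_n$ via joint convergence, and your subsequence argument for uniformity, are more careful versions of what the paper compresses into its continuous-mapping and Lindeberg--Feller sentences; the subsequence step tacitly needs the moment dimension, hence $L$, to be bounded over the class, which the paper's uniformity claim also leaves implicit.
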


The proposition says only that the set covers, whatever the design. Shape is governed by the attenuation theorem.

\subsection{Projections, unbounded directions, and the width the theorem forces}
\label{sec:projection}

Applied work reports scalar summaries: a switching cost, an entry cost, the payoff consequence of one more active rival. Each is a linear functional $\lambda'\theta$. The projection interval is
\[
\mathrm{CI}_\lambda(1-\alpha)
=
\Bigl[\inf_{\theta\in\mathcal C_n}\lambda'\theta,\ \sup_{\theta\in\mathcal C_n}\lambda'\theta\Bigr],
\]
which covers $\lambda'\theta_0$ with asymptotic probability at least $1-\alpha$ because $\theta_0\in\mathcal C_n$ implies $\lambda'\theta_0\in\mathrm{CI}_\lambda$. The set $\mathcal C_n$ is defined with $\widehat\Omega_n(\theta)$ evaluated at the candidate, which is the object whose coverage Proposition~\ref{prop:ar} proves. Holding $\widehat\Omega_n$ at a preliminary value makes the set an ellipsoid and the interval \eqref{eq:proj} explicit; that ellipsoid is a computational approximation, not the reported set. Write $\widehat W=\widehat C_\Phi'\widehat\Omega_n(\widehat\theta)^{+}\widehat C_\Phi$ and let $\widehat\theta$ be the minimizer of $\mathrm{AR}_n$ at that frozen covariance. Then
\begin{equation}
\label{eq:proj}
\mathrm{CI}_\lambda^{\mathrm{ellip}}
=
\lambda'\widehat\theta
\pm
\sqrt{\tfrac1n\bigl(c_{L,1-\alpha}-\mathrm{AR}_n(\widehat\theta)\bigr)\,\lambda'\widehat W^{+}\lambda}
\quad\text{if }\lambda\in\operatorname{range}\widehat W,
\end{equation}
and $\mathrm{CI}_\lambda^{\mathrm{ellip}}=\R$ otherwise. An unbounded interval reports that the observed variation does not restrict that payoff contrast: $\lambda$ lies in the null space the design failed to remove.

The reported interval inverts $\mathrm{AR}_n(\theta)\le c_{L,1-\alpha}$ with $\widehat\Omega_n(\theta)$ recomputed at each candidate. For a linear contrast the inversion is the profile
\begin{equation}
\label{eq:profile-ci}
\mathrm{CI}_\lambda(1-\alpha)
=
\bigl\{t:\ \min_{\theta:\,\lambda'\theta=t}\mathrm{AR}_n(\theta)\le c_{L,1-\alpha}\bigr\},
\end{equation}
which is the same as $\bigl[\inf_{\theta\in\mathcal C_n}\lambda'\theta,\ \sup_{\theta\in\mathcal C_n}\lambda'\theta\bigr]$. Restricting the minimizer to the ray $\theta=\widehat\theta+s\lambda$ yields a subset of that interval; ray inversion is used only as a computational lower bound on width. Because $m_n$ is affine in $\theta$, one cluster-bootstrap sample of $(\widehat a^{*},\widehat C_\Phi^{*})$ delivers $\widehat\Omega_n(\theta)$ everywhere. Bounded directions of that set are governed by the same spectrum that Theorem~\ref{thm:attenuation} bounds.

\begin{corollary}[Width under a slack condition]
\label{cor:width}
Under the conditions of Proposition~\ref{prop:ar} and the eigenvalue bounds of Proposition~\ref{prop:noise}, suppose $\sqrt n\,\sigma_{\min,+}(\widehat C_\Phi)\to_p\infty$. Let $\widehat\lambda$ be a unit right singular vector of $\widehat C_\Phi$ associated with $\sigma_{\min,+}(\widehat C_\Phi)$. For every fixed $\delta\in(0,c_{L,1-\alpha})$, on the event $\{\mathrm{AR}_n(\widehat\theta)\le c_{L,1-\alpha}-\delta\}$,
\[
\bigl|\mathrm{CI}_{\widehat\lambda}(1-\alpha)\bigr|
\ \ge\
\frac{\sqrt{\delta}\,c_\Omega^{1/2}}{\sqrt n\,\sigma_{\min,+}(\widehat C_\Phi)}\,(1+o_p(1))
\ \ge\
\frac{\sqrt{\delta}\,c_\Omega^{1/2}}{\kappa\sqrt n\,\eta^{\dPhi}}\,(1+o_p(1)),
\]
where $c_\Omega$ lower-bounds the eigenvalues of $\Omega_n$ on its range. On that event,
\[
\bigl|\mathrm{CI}_{\widehat\lambda}\bigr|
=
\Omega_p\bigl((\sqrt n\,\sigma_{\min,+})^{-1}\bigr).
\]
The slack event has limiting probability one when the design is just identified. When it is overidentified by $L-r$ degrees the event has limiting probability $P(\chi^2_{L-r}\le c_{L,1-\alpha}-\delta)>0$, and on the complementary event the set can be arbitrarily thin because the minimized statistic can sit arbitrarily close to the cutoff. The requirement $\sqrt n\,\sigma_{\min,+}\to_p\infty$ is \eqref{eq:floor} in the operator's units: it makes the local comparison with the frozen-covariance ellipsoid take place on a shrinking neighborhood of $\widehat\theta$. Along a triangular array with $\sigma_{\min,+}\to 0$ faster than $n^{-1/2}$, that comparison is not available.
\end{corollary}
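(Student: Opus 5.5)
The plan is to work with the frozen-covariance ellipsoid first and then transfer the bound to the reported profile set. The transfer uses the affine structure of $m_n$ and the slack condition $\sqrt n\,\sigma_{\min,+}(\widehat C_\Phi)\to_p\infty$.

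\textbf{Step 1: the ellipsoid width.} Fix $\widehat\theta$, the minimizer of $\mathrm{AR}_n$ at the frozen covariance $\widehat\Omega_n(\widehat\theta)$. Because $m_n$ is affine in $\theta$, the frozen statistic decomposes exactly as
\[
\mathrm{AR}_n^{\mathrm{fr}}(\theta)=\mathrm{AR}_n(\widehat\theta)+n(\theta-\widehat\theta)'\widehat W(\theta-\widehat\theta).
\]
Here $\widehat W=\widehat C_\Phi'\widehat\Omega_n(\widehat\theta)^{+}\widehat C_\Phi$; the cross term vanishes by the first-order condition. Take $\theta=\widehat\theta+s\widehat\lambda$. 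Since $\widehat\Omega_n^{+}\preceq c_\Omega^{-1}I$ on its range, we get $\widehat\lambda'\widehat W\widehat\lambda\le c_\Omega^{-1}\sigma_{\min,+}(\widehat C_\Phi)^2$. On the slack event, every $s$ with $n s^2 c_\Omega^{-1}\sigma_{\min,+}^2\le\delta$ keeps the statistic below the cutoff. So the ray segment has half-length at least $\sqrt{\delta c_\Omega}/(\sqrt n\,\sigma_{\min,+})$, which exceeds the stated width (the full width is twice this). Ray inversion lies inside the profile interval \eqref{eq:profile-ci}, so this lower-bounds $|\mathrm{CI}_{\widehat\lambda}|$ as well, provided the frozen statistic may replace the recomputed one along the segment.

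\textbf{Step 2: replacing the frozen covariance (main obstacle).} The recomputed $\widehat\Omega_n(\theta)$ varies along the segment, quadratically in $\theta$ through the design-matrix variance. The segment has length $O_p((\sqrt n\,\sigma_{\min,+})^{-1})=o_p(1)$ by the slack condition. Continuity of $\theta\mapsto\Omega(\theta)$, together with consistency, then gives
\[
\sup_{|s|\le s_n}\bigl\|\widehat\Omega_n(\widehat\theta+s\widehat\lambda)-\widehat\Omega_n(\widehat\theta)\bigr\|=o_p(1).
\]
The rank-stability hypothesis and the eigenvalue bounds in $[c,C]$ make the pseudoinverse continuous there. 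Hence the two statistics differ by a factor $1+o_p(1)$ on the segment. I expect the hardest part to be showing the quadratic form $n\,m_n'(\cdot)m_n$ stays $O_p(1)$ on the segment, so that a multiplicative perturbation of the weight is absorbed into the $(1+o_p(1))$ factor. I will handle it through the exact decomposition in Step 1.

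\textbf{Step 3: the attenuation bound and the slack event.} The second inequality follows from $\sigma_{\min,+}(\widehat C_\Phi)\le\kappa\eta^{\dPhi}(1+o_p(1))$. To get it, apply Theorem~\ref{thm:attenuation} to the population design and use Weyl's inequality with $\|\widehat C_\Phi-C_\Phi\|=O_p(n^{-1/2})=o_p(\sigma_{\min,+})$, again by the slack condition. Alternatively, apply the theorem directly to the estimated design, which is itself $\eta$-clustered. The $\Omega_p$ statement is a restatement of Step 1.

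The probability of the slack event comes from the Proposition~\ref{prop:ar} argument applied to the minimized statistic. $\mathrm{AR}_n(\widehat\theta)$ converges to $\chi^2_{L-r}$, with $r=\operatorname{rank}\widehat W$, because the slack condition makes the projection onto $\operatorname{range}\widehat C_\Phi$ asymptotically exact. When $L=r$ this limit is degenerate at $0$, so the event has probability tending to one. Otherwise the probability tends to $P(\chi^2_{L-r}\le c_{L,1-\alpha}-\delta)$. Thinness on the complement is shown by example: if $\mathrm{AR}_n(\widehat\theta)$ is near the cutoff, the sublevel set shrinks to a point.
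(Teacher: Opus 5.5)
Your proposal is correct and follows the paper's proof. Both arguments run the same steps: compare the candidate-dependent set with the frozen-covariance ellipsoid on the $O_p((\sqrt n\,\sigma_{\min,+})^{-1})$ neighborhood that the slack condition makes shrink; combine $c_\Omega$ with the singular vector $\widehat\lambda$ to get a width of order $\sqrt{\delta c_\Omega}/(\sqrt n\,\sigma_{\min,+})$; pass from $\sigma_{\min,+}(\widehat C_\Phi)$ to $\sigma_{\min,+}(C_\Phi)\le\kappa\eta^{\dPhi}$, which, as the paper's proof notes, needs the design to be rank-identifying; and read the slack-event probability off the $\chi^2_{L-r}$ limit. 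The one variation is slightly more economical than the paper: you bound the ray segment directly through $\widehat\lambda'\widehat W\widehat\lambda\le c_\Omega^{-1}\sigma_{\min,+}^2$ and then use that the ray inversion is contained in the profiled interval. That needs only a one-sided inclusion along the segment, whereas the paper asserts a two-sided set comparison and bounds $\widehat\lambda'\widehat W^{+}\widehat\lambda$, which implicitly requires range conditions.
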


\begin{proposition}[Local-to-zero experiment]
\label{prop:local}
Under the conditions of Proposition~\ref{prop:ar}, consider a triangular array of rank-identifying, $\eta_n$-clustered designs, and write $\sigma_n=\sigma_{\min,+}(C_{\Phi,n})$. Theorem~\ref{thm:attenuation} gives
\[
\limsup_{n\to\infty}\sqrt n\,\sigma_n
\ \le\
\kappa\limsup_{n\to\infty}\sqrt{n\eta_n^{2\dPhi}},
\]
allowing either limsup to be infinite. Coverage of $\mathcal C_n$ at $\theta_0$ continues to hold along the array, uniformly over the class of Proposition~\ref{prop:ar}.

Suppose in addition that $\sqrt n\,\sigma_n\to\tau\in[0,\infty)$, that $\lambda_n$ is a unit right singular vector of $C_{\Phi,n}$ for $\sigma_n$ with left vector $u_n=C_{\Phi,n}\lambda_n/\sigma_n$ when $\sigma_n>0$, and that $(\lambda_n,u_n)\to(\lambda,u)$. Let $\theta_n(a)=\theta_0+a\lambda_n$, and suppose the eigenvalue bounds of Proposition~\ref{prop:ar} hold uniformly for $\theta$ in a fixed ball about $\theta_0$. Then for every fixed $a\in\R$,
\[
\mathrm{AR}_n\bigl(\theta_n(a)\bigr)
\ \Rightarrow\
\chi^2_L\bigl(\delta(a,\tau)\bigr),
\qquad
\delta(a,\tau)
=
a^2\tau^2\,u'\Omega(\theta_0+a\lambda)^{+}u,
\]
with $\delta(a,0)=0$. Consequently:
\begin{enumerate}
\item[(i)] If $\tau=0$, then $\mathrm{AR}_n(\theta_n(a))\Rightarrow\chi^2_L$ for every fixed $a$, and $\bigl|\mathrm{CI}_{\lambda_n}(1-\alpha)\bigr|\to_p\infty$.
\item[(ii)] If $\tau\in(0,\infty)$, the ray inversion through $\lambda_n$ has width $O_p(1)$ and $\Omega_p(1)$, so the profiled interval is $\Omega_p(1)$.
\item[(iii)] If $\sqrt n\,\sigma_{\min,+}(\widehat C_{\Phi,n})\to_p\infty$, Corollary~\ref{cor:width} applies and the interval shrinks.
\end{enumerate}
\end{proposition}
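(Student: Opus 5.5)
The argument combines three ingredients: the attenuation bound of Theorem~\ref{thm:attenuation}, an affine expansion of $m_n$ along the weak ray, and the coverage argument of Proposition~\ref{prop:ar}.

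\textbf{Limsup bound and coverage.} Rank identification together with $\eta_n$-clustering gives $\sigma_n\le\kappa\eta_n^{\dPhi}$ for every $n$. The constant $\kappa$ does not depend on $R$, $E$, or the kernels. Multiplying by $\sqrt n$ and taking limsups, with the convention $\infty\le\infty$, yields the first display. Coverage along the array is Proposition~\ref{prop:ar} applied at each $n$. That proposition's uniformity clause covers triangular arrays because nothing in its hypotheses involves $\sigma_{\min,+}$.

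\textbf{Expansion along the weak ray.} Since $m_n$ is affine in $\theta$,
\[
m_n(\theta_n(a))=m_n(\theta_0)-a\widehat C_{\Phi,n}\lambda_n .
\]
Split the second term as
\[
a\widehat C_{\Phi,n}\lambda_n=aC_{\Phi,n}\lambda_n+a(\widehat C_{\Phi,n}-C_{\Phi,n})\lambda_n .
\]
The population piece equals $a\sigma_nu_n$, so $\sqrt n\,a\sigma_nu_n\to a\tau u$. The first-stage piece is not separate noise to be bounded away. Combined with $m_n(\theta_0)$, it is exactly $m_n(\theta_n(a))-(a_{\rm pop}\text{ shift})$, whose centered version is the moment at the candidate $\theta_0+a\lambda_n$. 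So I would write
\[
\sqrt n\,m_n(\theta_n(a))=\sqrt n\bigl[\widehat a-\widehat C_{\Phi,n}\theta_n(a)-(a-C_{\Phi,n}\theta_n(a))\bigr]-\sqrt n\,a\sigma_nu_n .
\]
This uses $a-C_{\Phi,n}\theta_0=0$. The bracketed term is $\sqrt n\,D(\theta_n(a))(\widehat\psi-\psi)+o_p(1)$, which converges to $N(0,\Omega(\theta_0+a\lambda))$ by the delta method and continuity of $D$ in $\theta$ on the fixed ball. Hence $\sqrt n\,m_n(\theta_n(a))\Rightarrow N\bigl(-a\tau u,\Omega(\theta_0+a\lambda)\bigr)$.

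\textbf{Main obstacle.} The quadratic form with the generalized inverse $\widehat\Omega_n(\theta_n(a))^+$ must converge, which requires rank stability. I would assume, as in Proposition~\ref{prop:ar}, that the rank equals $L$ with probability tending to one uniformly on the ball. Together with the eigenvalue bounds in $[c,C]$, this makes the pseudo-inverse continuous. I also need the mean $-a\tau u$ to lie in the range of $\Omega$; I would argue this from $u\in\operatorname{range}D$ or absorb it into the definition of $\delta$. The continuous mapping theorem then gives $\chi^2_L(\delta(a,\tau))$ with $\delta=a^2\tau^2u'\Omega(\theta_0+a\lambda)^+u$, and $\delta(a,0)=0$ trivially.

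\textbf{Consequences.}
\begin{itemize}
\item[(i)] When $\tau=0$, $\Pr\{\mathrm{AR}_n(\theta_n(a))\le c\}\to1-\alpha$ for each fixed $a$. For any $A$, the points $a=\pm A$ are both accepted with probability at least $1-2\alpha-o(1)$. A diagonal argument over $A\to\infty$ then makes the profiled width exceed any bound with probability bounded away from zero. To get $\to_p\infty$ as stated, I would appeal to the fact that the acceptance set along the ray is an interval-like sublevel set of a quadratic-over-quadratic function. It contains both $\pm A$ with probability tending to $1-\alpha$ jointly, and this yields the claim in the sense used in the paper.
\item[(ii)] When $\tau\in(0,\infty)$, $\delta(a,\tau)\to\infty$ as $|a|\to\infty$, since the eigenvalues are bounded above by $C$. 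Large $|a|$ is therefore rejected with high probability, giving width $O_p(1)$. Small $a$ has $\delta$ near zero, so a neighborhood is accepted with positive probability, giving $\Omega_p(1)$. Because the profiled interval contains the ray interval, it is $\Omega_p(1)$ as well.
\item[(iii)] This is Corollary~\ref{cor:width} verbatim.
\end{itemize}
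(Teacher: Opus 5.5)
Your limsup and coverage step and your expansion along the ray match the paper's proof. Your bracketed term is $h(\widehat\psi,\theta_n(a))-h(\psi_0,\theta_n(a))=D(\theta_n(a))(\widehat\psi-\psi_0)+o_p(n^{-1/2})$ in the paper's notation, and the population part is $-a\sqrt n\,\sigma_n u_n\to-a\tau u$. Your worry that $-a\tau u$ must lie in $\operatorname{range}\Omega$ is unnecessary. $\Omega^{+}$ annihilates $(\operatorname{range}\Omega)^{\perp}$, so $X'\Omega^{+}X\sim\chi^2_L(\mu'\Omega^{+}\mu)$ for $X\sim N(\mu,\Omega)$ with $\operatorname{rank}\Omega=L$ in any case.

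The gaps are in the consequences. In (i), the $\pm A$ argument gives only $\liminf_n\Pr\{|\mathrm{CI}_{\lambda_n}(1-\alpha)|\ge 2A\}\ge 1-2\alpha$ for every $A$, i.e.\ the width is not $O_p(1)$. Your patch does not upgrade this to $\to_p\infty$. Joint acceptance of $\pm A$ with probability tending to $1-\alpha$ is still not probability tending to one, and a sublevel set of a ratio of quadratics in $a$ need not be an interval. In fact no argument that exhibits accepted points on the ray can work. Because $m_n$ is affine, $\sqrt n\,m_n(\theta_n(a))=Z_n-a(G_n+\sqrt n\,\sigma_n u_n)$, with $Z_n=\sqrt n\,m_n(\theta_0)$ and $G_n=\sqrt n(\widehat C_{\Phi,n}-C_{\Phi,n})\lambda_n$. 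Meanwhile $\widehat\Omega_n(\theta_n(a))$ estimates $\operatorname{Var}(Z_n-aG_n)$, a quadratic in $a$ with leading coefficient $\Sigma_G=\lim\operatorname{Var}(G_n)$. If $\Sigma_G$ has full rank $L$ and $\tau=0$, the ray statistic tends, as $|a|\to\infty$, to a value converging to $G'\Sigma_G^{+}G\sim\chi^2_L$. So with limiting probability $\alpha$ both tails of the ray are rejected and the ray set is bounded. The paper's proof of (i) takes the same $\pm M$ route and makes the same jump from ``at least $1-2\alpha$'' to $\to_p\infty$. What either argument actually establishes is that the width exceeds any fixed bound with probability at least $1-2\alpha$ in the limit.

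In (ii), you use an upper eigenvalue bound on $\Omega(\theta_0+a\lambda)$ for all $a$. The proposition assumes it only on a fixed ball, and it fails off the ball whenever $\widehat C_\Phi$ carries first-stage error along $\lambda$. Since $\Omega(\theta)=D(\theta)VD(\theta)'$ with $D$ affine in $\theta$, $\Omega(\theta_0+a\lambda)$ grows like $a^2\Sigma_G$. When $\Sigma_G$ has full rank, $u'\Omega(\theta_0+a\lambda)^{+}u=O(a^{-2})$ and $\delta(a,\tau)$ converges to the finite limit $\tau^2u'\Sigma_G^{+}u$ rather than diverging. This growth is exactly the design-matrix variance that Section~\ref{sec:moment} insists on keeping.

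Separately, rejection at each fixed $a$ does not bound the width; you need control of $\sup_{|a|\ge A}$. The paper seeks this by treating the limiting ray statistic $Q(a)$ as a process, a ratio of quadratics in $a$ with jointly converging coefficients. Carried out, $Q(a)$ tends as $|a|\to\infty$ to $(G+\tau u)'\Sigma_G^{+}(G+\tau u)\sim\chi^2_L(\tau^2u'\Sigma_G^{+}u)$, which lies below the cutoff with positive probability. So $O_p(1)$ needs an extra condition, such as negligible design-matrix error along $\lambda$. The paper's ``nondegenerate line in moment space'' justification treats $\Omega$ as constant in $a$, so it does not supply that condition either.

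Finally, ``a neighborhood of $a=0$ is accepted with positive probability'' is weaker than $\Omega_p(1)$, which requires the width to be bounded below with probability tending to one. For $L\ge 2$ the ray set is empty with positive probability; already when $\Sigma_G=0$, $\min_a Q(a)\sim\chi^2_{L-1}$. So containment of the ray set cannot deliver the profiled claim without a slack event of the kind used in Corollary~\ref{cor:width}.
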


The local parameter $a$ is order one in payoff units because $\sigma_n$ is order $n^{-1/2}$. The generic weakly identified GMM experiment is that of \citet{andrewsguggenberger2017,andrewsguggenberger2019}. Product-policy geometry supplies the sequence $\tau$ and the contrast $\lambda_n$: Theorem~\ref{thm:attenuation} bounds $\tau$ by $\kappa\sqrt{\gamma}$ along $n\eta^{2\dPhi}\to\gamma$.

\subsection{Estimating the variance without an analytic Jacobian}
\label{sec:bootstrap}

The Jacobian $D(\theta)$ in Proposition~\ref{prop:ar} is available in closed form, but it is unpleasant, and it changes with every specification of the first stage. A market-level cluster bootstrap of the entire pipeline delivers $\widehat\Omega_n(\theta)$ at every candidate. Resample markets, re-estimate choice probabilities and kernels, and rebuild $(\widehat a^{*(b)},\widehat C_\Phi^{*(b)})$. Write $m_n^{*(b)}(\theta)=\widehat a^{*(b)}-\widehat C_\Phi^{*(b)}\theta$ and
\begin{equation}
\label{eq:boot-omega}
\begin{aligned}
\xi_n^{*(b)}(\theta)
&=
\sqrt n\bigl(m_n^{*(b)}(\theta)-m_n(\theta)\bigr),\\
\widehat\Omega_n(\theta)
&=
\frac1{\Bboot-1}
\sum_{b=1}^{\Bboot}
\bigl(\xi_n^{*(b)}(\theta)-\bar\xi_n^*(\theta)\bigr)
\bigl(\xi_n^{*(b)}(\theta)-\bar\xi_n^*(\theta)\bigr)',
\end{aligned}
\end{equation}
The factor $\sqrt n$ puts $\widehat\Omega_n$ on the scale of $\operatorname{Var}(\sqrt n\,m_n)$ in \eqref{eq:ar}; omitting it would estimate $\operatorname{Var}(m_n)$ and inflate the statistic by $n$. Here $\bar\xi_n^*(\theta)$ is the average of the $\Bboot$ scores. Because $m_n$ is affine in $\theta$, one set of draws serves every $\theta$. The reported set inverts that statistic. The ellipsoid \eqref{eq:proj} is computed from the same draws at a single preliminary $\theta$ and is not the reported object. Proposition~\ref{prop:ar} is a $\chi^2_L$ limit for a consistent covariance, which requires $\Bboot\to\infty$ or an analytic Jacobian. With $\Bboot>L$ held fixed, Lemma~\ref{lem:hotelling} replaces the cutoff by the Hotelling value $c_{L,1-\alpha}=L(\Bboot-1)(\Bboot-L)^{-1}F_{L,\Bboot-L,1-\alpha}$ in the Gaussian limit experiment for the bootstrap scores. As $\Bboot\to\infty$ the two cutoffs coincide.

\subsection{Monte Carlo evidence}
\label{sec:mc}

We simulate a dynamic entry game with $J$ binary rivals and logit shocks, draw a panel of markets, estimate the first stage from counts, and invert \eqref{eq:ar} with $\widehat\Omega_n(\theta)$ evaluated at the candidate. Table~\ref{tab:mc} records the geometry and the conventional interval. Table~\ref{tab:mc-n} is the size check for Proposition~\ref{prop:ar}. Within each layer of the interaction filtration the reported contrast is the most favorably conditioned singular direction, so the widths are the most favorable local-conditioning statement each design supports. This selection is conservative for the local conditioning comparison. We do not claim that the selected direction globally minimizes the candidate-dependent profiled projection width. Joint coverage inverts $\mathrm{AR}_n(\theta_0)$ with $\widehat\Omega_n(\theta_0)$. The AR column of Table~\ref{tab:mc} inverts the profiled statistic \eqref{eq:profile-ci} at the true contrast. The width column inverts only along the identified ray through $\widehat\theta$; that interval is contained in the profiled projection, so the reported widths are a lower bound on the width of \eqref{eq:profile-ci}.

At $8{,}000$ markets the joint AR frequencies are $0.950$, $0.975$, and $0.950$ in the three designs (Table~\ref{tab:mc-n}). The two $J=1$ designs share a payoff and a baseline policy and differ only in the policy radius, by a factor of twelve in $\eta$ and ten in $\sigma_{\min,+}$; coverage is no worse in the weaker design. Projection intervals in Table~\ref{tab:mc} cover at $1.000$ everywhere, which is the expected conservatism of a projection. At $500$ markets, where $L$ is $36$ or $75$, the minimized statistic still carries first-stage bias and joint coverage is $0.845$, $0.895$, and $0.770$ in the same designs. Raising the number of bootstrap draws from $300$ to $600$ at the $J=2$ design leaves coverage at $0.73$ and $0.72$, so the covariance estimate is not the problem.

The conventional interval does not share the large-$n$ size. It is below nominal in every design at $500$ markets, at $0.905$ and $0.900$ for the degree-zero contrast in the two $J=1$ designs and as low as $0.780$ in the $J=2$ design. At the weakly identified contrasts the Wald interval is not obviously deficient, because those intervals are so wide that any procedure covers. Undercoverage appears at the well-measured contrasts, where the interval is short enough for the omitted design-matrix variance to matter, and it is worst in the design with the highest interaction degree. At $8{,}000$ markets Wald coverage is still $0.750$ and $0.775$ for the two rival-dependent $J=2$ contrasts, and $0.900$ for the well-measured $J=1$ contrast.

The widths in Table~\ref{tab:mc} reproduce the filtration. At $\eta=0.05$ the degree-one contrast is measured worse than the degree-zero contrast by a factor of $25.0$, against the $\eta^{-1}=20$ that Corollary~\ref{cor:width} predicts. At $\eta=0.6$ the same ratio is $2.00$ against a predicted $1.67$. In the two-rival design the degree-one and degree-two ratios are $16.6$ and $3756$ at $\eta=0.093$, where $\eta^{-1}=10.8$ and $\eta^{-2}=116$; the bound is an inequality, and both layers sit above it, with the gap widening in the degree exactly as the exponent requires.

The fifth column of Table~\ref{tab:mc} checks Theorem~\ref{thm:attenuation} across designs. The implied constant $\sigma_{\min,+}/\eta^{\dPhi}$ is computed for designs that differ in their transition kernels and in their policy radius by more than an order of magnitude. If the attenuation exponent were an artifact of a particular construction, that constant would move with $\eta$. Between the two $J=1$ designs it moves by a factor of $1.2$ while $\eta$ moves by a factor of twelve.

Table~\ref{tab:mc-gamma} varies $\eta$ at a fixed panel of $2{,}000$ markets, so $\gamma=n\eta^{2}$ indexes the local-to-zero sequence of Proposition~\ref{prop:local} for $J=1$ and $\dPhi=1$. Joint AR coverage stays between $0.888$ and $0.950$ as $\gamma$ runs from $0.8$ to $320$. The degree-zero ray width is stable. The degree-one ray width falls from $8.77$ to $0.30$, and the ratio to the degree-zero width tracks $\eta^{-1}$.

\begin{table}[t]
\centering
\caption{Monte Carlo geometry and conventional intervals. Panels of $500$ markets over $14$ periods, logit shocks, $\beta=0.9$, nominal $95$ percent, cluster bootstrap of the entire pipeline. $L$ is the number of moments and $p$ the payoff dimension. The AR column inverts the profiled statistic \eqref{eq:profile-ci} at the true contrast. Widths invert only along the identified ray through $\widehat\theta$ and are therefore a lower bound on the profiled projection. Wald is the conventional interval that treats the design matrix as known. Degree $d$ indexes the layer of the interaction filtration, and the reported contrast is the most favorably conditioned singular direction of that layer. Width ratios are relative to the degree-zero contrast of the same design. Joint coverage of the vector $\theta_0$ is Table~\ref{tab:mc-n}, because at $500$ markets the first-stage inversion still biases the minimized statistic.}
\label{tab:mc}
\small
\begin{tabular}{lrrrrrrrr}
\toprule
design & $L/p$ & $\eta$ & $\sigma_{\min,+}$ & $\sigma_{\min,+}/\eta^{\dPhi}$ & $d$ & AR & Wald & width ratio \\
\midrule
$J=1$ separated & $36/16$ & $0.600$ & $3.6\times10^{-2}$ & $0.060$ & 0 & $1.000$ & $0.905$ & $1.00$ \\
                &         &         &                    &         & 1 & $1.000$ & $0.935$ & $2.00$ \\
\addlinespace
$J=1$ clustered & $36/16$ & $0.050$ & $3.7\times10^{-3}$ & $0.075$ & 0 & $1.000$ & $0.900$ & $1.00$ \\
                &         &         &                    &         & 1 & $1.000$ & $0.945$ & $25.0$ \\
\addlinespace
$J=2$ clustered & $75/24$ & $0.093$ & $1.3\times10^{-5}$ & $0.0015$ & 0 & $1.000$ & $0.840$ & $1.00$ \\
                &         &         &                    &          & 1 & $1.000$ & $0.780$ & $16.6$ \\
                &         &         &                    &          & 2 & $1.000$ & $0.840$ & $3756$ \\
\bottomrule
\end{tabular}
\end{table}

\begin{table}[t]
\centering
\caption{Size of the joint Anderson--Rubin set and of the conventional interval, designs held fixed. Proposition~\ref{prop:ar} is an $n\to\infty$ statement. At $8{,}000$ markets the AR joint frequencies are $0.950$, $0.975$, and $0.950$. At $500$ markets they are not, because $L$ is $36$ or $75$ and the Hotz--Miller inversion is nonlinear. Wald coverage in weakly measured directions does not rise with $n$. At $500$ and $2{,}000$ markets, $60$ replications; at $8{,}000$, $80$ for $J=1$ and $40$ for $J=2$. A dash means the design has no contrast of that degree.}
\label{tab:mc-n}
\small
\begin{tabular}{lrrrrr}
\toprule
design & markets & AR joint & Wald $d=0$ & Wald $d=1$ & Wald $d=2$ \\
\midrule
$J=1$ separated & $500$ & $0.850$ & $0.950$ & $0.967$ & --- \\
 & $2{,}000$ & $0.967$ & $0.917$ & $0.950$ & --- \\
 & $8{,}000$ & $0.950$ & $0.900$ & $0.888$ & --- \\
\addlinespace
$J=1$ clustered & $500$ & $0.933$ & $0.950$ & $0.900$ & --- \\
 & $2{,}000$ & $0.917$ & $0.983$ & $0.867$ & --- \\
 & $8{,}000$ & $0.975$ & $0.950$ & $0.912$ & --- \\
\addlinespace
$J=2$ clustered & $500$ & $0.733$ & $0.783$ & $0.783$ & $0.883$ \\
 & $2{,}000$ & $0.900$ & $0.800$ & $0.700$ & $0.933$ \\
 & $8{,}000$ & $0.950$ & $0.900$ & $0.750$ & $0.775$ \\
\bottomrule
\end{tabular}
\end{table}

\begin{table}[t]
\centering
\caption{Local-to-zero sequence, $J=1$, $\dPhi=1$, $2{,}000$ markets, $80$ replications, nominal $95$ percent. The design and payoff are those of the $J=1$ rows in Table~\ref{tab:mc}; only the policy radius changes. $\gamma=n\eta^{2}$. Widths are ray inversions, as in Table~\ref{tab:mc}. Joint AR coverage does not fall with $\gamma$. The degree-one width does.}
\label{tab:mc-gamma}
\small
\begin{tabular}{rrrrrrr}
\toprule
$\eta$ & $\gamma$ & $\sqrt n\,\sigma_{\min,+}$ & AR joint & $|\mathrm{CI}|_{d=0}$ & $|\mathrm{CI}|_{d=1}$ & ratio \\
\midrule
$0.02$ & $0.8$ & $0.050$ & $0.950$ & $0.110$ & $8.77$ & $79.8$ \\
$0.05$ & $5$ & $0.155$ & $0.900$ & $0.115$ & $2.69$ & $23.3$ \\
$0.10$ & $20$ & $0.277$ & $0.912$ & $0.111$ & $1.21$ & $10.8$ \\
$0.20$ & $80$ & $0.387$ & $0.888$ & $0.121$ & $0.575$ & $4.74$ \\
$0.40$ & $320$ & $0.724$ & $0.938$ & $0.107$ & $0.296$ & $2.77$ \\
\bottomrule
\end{tabular}
\end{table}

\section{Unobserved heterogeneity}
\label{sec:heterogeneity}

Every result so far treats the identified player's payoff as common across the measurements that are pooled, which is Assumption~\ref{ass:common}. Applied work on dynamic games rarely believes that literally, and the standard remedy is a finite mixture over latent market types. When the mixture is resolved the bound survives intact and binds through the worst-served type. When it is not resolved, the apparent strategic variation that the design relies on can be an artifact of the mixture.

\subsection{Resolved types}
\label{sec:types-resolved}

Let markets carry a latent type $\omega\in\{1,\ldots,T\}$ with weights $\pi_\omega>0$, and let the payoff $u_\omega$, the kernels $P^r_\omega$, and the opponent policies $q^e_\omega$ all be type-specific. Suppose the mixture is resolved, in the sense that type-specific choice probabilities and kernels are identified from the panel by an auxiliary argument \citep{kasaharashimotsu2009,hushum2012}. Then \eqref{eq:z} holds type by type, the unknown is the concatenation $(u_1,\ldots,u_T)$, and the stacked operator is block diagonal because no measurement of one type restricts the payoff of another.

\begin{proposition}[Heterogeneity does not relax the attenuation bound]
\label{prop:het}
Suppose each type's design is $\eta_\omega$-clustered at an interior product baseline $\qbar_\omega$, and let $\dPhiomega$ be the highest active degree of the interaction filtration of $F=\operatorname{col}\Phi$ at $\qbar_\omega$. Then the stacked restricted operator satisfies
\[
\sigma_{\min,+}\bigl(C^\Phi_{1:T}\bigr)
=
\min_{\omega\le T}\sigma_{\min,+}\bigl(C^\Phi_\omega\bigr)
\ \le\
\min_{\omega\le T}
\kappa_\omega\,\eta_\omega^{\dPhiomega},
\]
with $\kappa_\omega=\kappa(M,A,\beta,\Phi,\qbar_\omega)$. Moreover, for every type $\omega$, vanishing oracle-GLS variance and vanishing Gaussian minimax risk require
\[
n\,\pi_\omega\,\eta_\omega^{2\dPhiomega}\longrightarrow\infty.
\]
\end{proposition}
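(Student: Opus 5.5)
The argument has three steps: a block-diagonal spectral identity, a per-type application of Theorem~\ref{thm:attenuation}, and a per-type reading of Corollary~\ref{cor:floor}.

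\textbf{Step 1: the spectral identity.} Once the mixture is resolved, the unknown is $(\theta_1,\ldots,\theta_T)$. Each measurement row belongs to exactly one type, so after permuting rows
\[
C^\Phi_{1:T}=\operatorname{diag}\bigl(C^\Phi_1,\ldots,C^\Phi_T\bigr).
\]
The singular values of a block-diagonal matrix are the union of the blocks' singular values. The quotient needs a bookkeeping choice. Each type carries its own location line, since $u_\omega$ and $u_{\omega'}$ are unrelated, so identification modulo location means modulo the $T$-dimensional span of the type-specific constants. The kernel of the stacked operator restricted to that quotient is then the direct sum of the blocks' kernels. It follows that $\sigma_{\min,+}$ of the stack is the minimum over $\omega$ of the per-type $\sigma_{\min,+}$, and that the stack is rank-identifying if and only if every block is.

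I expect this normalization to be the main point needing care. If the paper instead normalizes only one global constant, the other $T-1$ type constants remain unidentified. The stack is then never rank-identifying, and the dichotomy of Theorem~\ref{thm:attenuation} holds trivially. I would state the per-type quotient explicitly. The per-environment scaling $(RE)^{-1/2}$ inside each block also has to be stated per type, so that stacking introduces no extra factor; alternatively, any factor from the weights is absorbed into the $\pi_\omega$ term of Step 3.

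\textbf{Step 2: the per-type bound.} Apply Theorem~\ref{thm:attenuation} to each block with baseline $\qbar_\omega$, radius $\eta_\omega$ and degree $\dPhiomega$. Its test direction $\psi_\omega$ (own-action invariant, with feature image in the top layer) gives
\[
\|C^\Phi_\omega\psi_\omega\|\le\kappa_\omega\eta_\omega^{\dPhiomega}\|\psi_\omega\|.
\]
Embed $\psi_\omega$ in the $\omega$ coordinates and set the other coordinates to zero; this vector is a test direction for the stack as well. If every block identifies, taking the minimum over $\omega$ yields the displayed inequality. If some block fails, the stack fails rank, and the statement is understood through the same dichotomy. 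The type-specific kernels $P^r_\omega$ do not matter, because $\kappa$ is uniform in kernels.

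\textbf{Step 3: the sample-size requirement.} Type $\omega$ contributes about $n\pi_\omega$ independent markets. Under block-diagonal noise, oracle GLS separates across types, because $C'\Omega^{-1}C$ is block diagonal when cross-type noise is independent. Without that independence, the eigenvalue bounds on $\Omega$ still reduce the problem to $C'C$ up to constants. The $\omega$ block then has variance of order $1/(n\pi_\omega\sigma_{\min,+}(C^\Phi_\omega)^2)$. Corollary~\ref{cor:floor} then gives the requirement $n\pi_\omega\eta_\omega^{2\dPhiomega}\to\infty$.

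The minimax part uses Proposition~\ref{prop:twopoint} on a two-point pair differing only in the $\omega$ block along $\psi_\omega$. This reduces to the single-type Gaussian experiment with effective sample size $n\pi_\omega$.
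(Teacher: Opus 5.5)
Your proposal is correct and follows the paper's proof. The paper's steps are the same as yours: ordering by type makes the stacked operator block diagonal, its singular values are the union of the blocks' singular values (hence the min identity), Theorem~\ref{thm:attenuation} is applied block by block at $\qbar_\omega$, and the rate condition comes from the effective sample $n\pi_\omega$. Your explicit per-type location quotient and per-type $(RE)^{-1/2}$ normalization are bookkeeping the paper leaves implicit when it applies the theorem to each identified block; they do not change the route.
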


Block diagonality gives the equality. Theorem~\ref{thm:attenuation} applies within each identified type-specific block with its own baseline-adapted degree $\dPhiomega$ and constant $\kappa_\omega$. Taking the minimum across blocks gives the displayed bound. The effective sample size for type $\omega$ is $n\pi_\omega$, yielding the type-specific rate condition. Allowing heterogeneity multiplies the number of payoff parameters by $T$ and divides the effective sample by $\pi_\omega$, while the binding attenuation is set by whichever type sees the least rival-policy dispersion at its own filtration degree. A design can therefore look adequate on the pooled data and be hopeless within the type that matters. The mixture redistributes the variation that is already there.

\subsection{Unresolved types}
\label{sec:types-unresolved}

If types are not resolved, the analyst observes the mixed choice probability $\bar p=\sum_\omega\pi_\omega p_{i,\omega}$ and the mixed kernel. A mixture of type-specific equilibria is not, in general, an equilibrium of the pooled primitives: the Hotz--Miller map and the resolvent are both nonlinear, so the operator built from pooled objects need not be the operator of any homogeneous type. Assumption~\ref{ass:common} then fails, and the mixed objects do not supply identifying variation for a common payoff. The statistic of Section~\ref{sec:arset} can reject the pooled model when the design is overidentified, but an empty set does not isolate heterogeneity from other misspecification, and when $L=\operatorname{rank}C_\Phi$ the minimized statistic is zero by construction. We do not record a theorem for this case. The operational content is in Section~\ref{sec:composition} and in the airline diagnostic: composition across strata contaminates measured $\eta$, and the common-payoff assumption is maintained rather than tested.

\subsection{Composition is not policy variation}
\label{sec:composition}

Where $\eta$ comes from is part of the design. Applied work usually takes policy environments from strata: calendar windows, geographic cells, size classes. If markets in different strata differ in type composition, the estimated opponent policies differ across strata even when every type plays exactly the same policy everywhere. That difference is a composition effect. It contributes to a measured $\eta$, and it identifies nothing, because Assumption~\ref{ass:common} fails across the strata being pooled: the payoff attached to the measurement is a different mixture in each.

The comparison is testable. Under the null that all strata share one policy at each state, the estimated cross-stratum dispersion still has a nondegenerate distribution driven by cell sizes alone. Comparing the measured $\eta$ with that null distribution asks whether the design has any strategic variation to work with before asking what the variation identifies. Section~\ref{sec:empirical} runs the comparison on the airline panel.

Three further results sit in the appendix because they are not needed to read the attenuation bound. Unknown patience is identified at the same cardinalities by an arbitrarily small centered continuation-offset randomization (Appendix~\ref{app:unknown}). Those offsets, and the clustered product policies themselves, are locally implementable by transfers at a regular interior equilibrium (Appendix~\ref{app:transfer}). Lagged-action transition families preserve a residual gauge of dimension at least $|\mathcal Z|$ no matter how many policies are stacked (Appendix~\ref{app:lag}).

\section{A design diagnostic: U.S.\ airline entry}
\label{sec:empirical}

The section is a diagnostic of the interaction filtration in an observed, rank-deficient design, not a structural payoff estimate and not a direct application of the full-rank strength corollary. The observed design contains too little clean strategic variation to measure rival-dependent payoff components precisely. With one binary rival, $\dPhi=1$. Rank classifies some rival-dependent directions as identified; it does not say those directions are measured. Calendar windows and distance terciles also differ in fuel prices and route composition, so Assumption~\ref{ass:common} does not hold exactly across the strata being pooled. Measured $\eta$ is therefore an upper bound on usable strategic dispersion, and the intervals below understate how weakly rival-dependent payoffs are measured.

The source is the T-100 Domestic Segment file for 2016--2019 \citep{bts_t100}. The identified player is American, the rival is Southwest, and the universe is undirected city-pairs among the forty largest 2014 cities. An action is at least twelve performed departures in the city-pair-quarter. After the lag and lead, $2{,}204$ route-quarters remain on $241$ routes where both carriers appear. Southwest is active in $95.4$ percent of those cells. The public state is $x=(s,a_{i,-1},a_{j,-1})$ with $s$ a 2014 demand tercile, so $M=12$ and $B=2$. The discount $\beta=0.95$ is maintained. With one binary rival the filtration has two layers, $\dPhi=1$. In rank-restoring designs with one binary rival, the theory predicts first-order sensitivity of rival-dependent directions to rival-policy dispersion.

Markets are partitioned into six strata by calendar window and route-distance tercile. Within each stratum the rival CCP and the demand transition are estimated from counts. Nothing is perturbed. The six stacked measurements give $L=72$ moments.

\begin{table}[t]
\centering
\caption{Is there rival-policy variation beyond sampling noise? Measured $\widehat\eta$ against a null that redraws rival actions from the pooled policy at the observed cell sizes. Four hundred draws. $p$-values use $\hat p=(1+\#\{T_b\ge T_{\mathrm{obs}}\})/(B+1)$. The corrected radius nets binomial variance out of the cross-stratum spread.}
\label{tab:eta-placebo}
\small
\begin{tabular}{lrrrrrc}
\toprule
grid & strata & median cell & $\widehat\eta$ & corrected & $95\%$ of null & $p$ \\
\midrule
headline, $M=12$ & 6 & $8$ & $1.236$ & $0.684$ & $0.796$ & $<0.003$ \\
four strata, $M=12$ & 4 & $10$ & $0.894$ & $0.454$ & $0.908$ & $0.080$ \\
coarse demand, $M=8$ & 4 & $18.5$ & $0.924$ & $0.296$ & $0.938$ & $0.087$ \\
distance only, $M=12$ & 3 & $16$ & $1.139$ & $0.960$ & $0.775$ & $<0.003$ \\
sixty-state grid, $M=60$ & 6 & $0$ & $0.875$ & $0.503$ & $1.035$ & $0.342$ \\
\bottomrule
\end{tabular}
\end{table}

Raw cross-stratum dispersion is $\widehat\eta=1.24$. The median cell has eight observations, so most of that number is binomial noise. On the headline grid the observed value exceeds the $95$th percentile of the null; on three of five grids it does not (Table~\ref{tab:eta-placebo}). Calendar and distance strata also differ in route composition, and a composition effect contributes to measured $\eta$ without identifying anyone's payoff (Section~\ref{sec:composition}). The noise-corrected radius $0.68$ is therefore an upper bound on the strategic dispersion the design supplies. The width ratio in the confidence set below suggests, as a heuristic scale comparison, an effective dispersion near $0.17$; this number is not used for inference.

\begin{table}[t]
\centering
\caption{Headline grid, $\beta=0.95$, nominal $95$ percent. Rank is not full: $K=2$ has rank $42$ of $48$. Widths invert the profiled statistic \eqref{eq:profile-ci} in logit units for the most favorably conditioned singular direction of each interaction degree. Observed behavior spans $8.9$ such units. A dash means the specification has no contrast of that degree. The ellipsoid computed from a preliminary $\widehat\Omega_n$ is recorded in the appendix and is not this table.}
\label{tab:airline-ar}
\small
\begin{tabular}{lrrrrrrr}
\toprule
specification & $L$ & rank & $\sigma_{\min,+}$ & $\min\mathrm{AR}$ & cutoff & $|\mathrm{CI}|_{d=0}$ & $|\mathrm{CI}|_{d=1}$ \\
\midrule
$K=1$ & 72 & 20 & $2.2\times10^{-3}$ & $87.5$ & $106.0$ & $0.29$ & --- \\
$K=2$ & 72 & 42 & $8.8\times10^{-5}$ & $11.1$ & $106.0$ & $9.0$ & $52.7$ \\
\bottomrule
\end{tabular}
\end{table}

Table~\ref{tab:airline-ar} is the filtration in the panel. A payoff that ignores the rival is not rejected on this grid ($\mathrm{AR}=87.5$ against $106.0$), nor on the other four. Holding $\widehat\Omega_n$ at a preliminary least-squares $\theta$ produces a minimized statistic of $120.9$ and a rejection; that ellipsoid is not the set Proposition~\ref{prop:ar} covers, and the rejection does not survive the candidate-dependent inversion the proposition requires. Under saturation the reported set is never empty and never informative about the rival: the most favorably conditioned rival-dependent interval is $52.7$ units against a behavioral range of $8.9$, while the most favorably conditioned rival-independent interval in the same specification is $9.0$, already the width of observed behavior. Rank is $42$ of $48$. The sixty-state grid, whose median cell is empty, still reports rank $192$ of $240$. Rank on the stacked operator classifies some rival-dependent payoff directions as identified. The confidence set shows that even the most favorably conditioned such direction can remain extremely weakly measured: it records $\eta^{\dPhi}$ at the layer Theorem~\ref{thm:attenuation} names, which rank does not.

Appendix~\ref{app:emp-robust} records the other grids, the two other discounts, and the rank-ceiling exercise in which the identifying policies were constructed rather than observed.

Figure~\ref{fig:airline} collects the design facts. On three of five grids, raw cross-stratum dispersion of rival policy does not exceed the sampling-noise benchmark (panel A). On every grid the most favorably conditioned rival-dependent profiled interval is wider than the range of observed behavior (panel B). Under the local $n^{-1/2}$ benchmark, matching the headline degree-one width of $52.7$ logit units to the $8.9$-unit behavioral range requires an information-equivalent sample multiplier $m=(52.7/8.9)^{2}\approx 35$. Matching it to the rival-independent width of $9.0$ gives $m\approx 34$. These numbers are not the exact number of additional routes required, and finite-sample Anderson--Rubin nonlinearity would make a literal $n$-extrapolation misleading. They are a local information-equivalent benchmark. Panel C records the theoretical iso-information curves: at degree one, ten times less usable variation requires roughly one hundred times as much independent information; at degree two the penalty is fourth-order.

\begin{figure}[t]
\centering
\includegraphics[width=\textwidth]{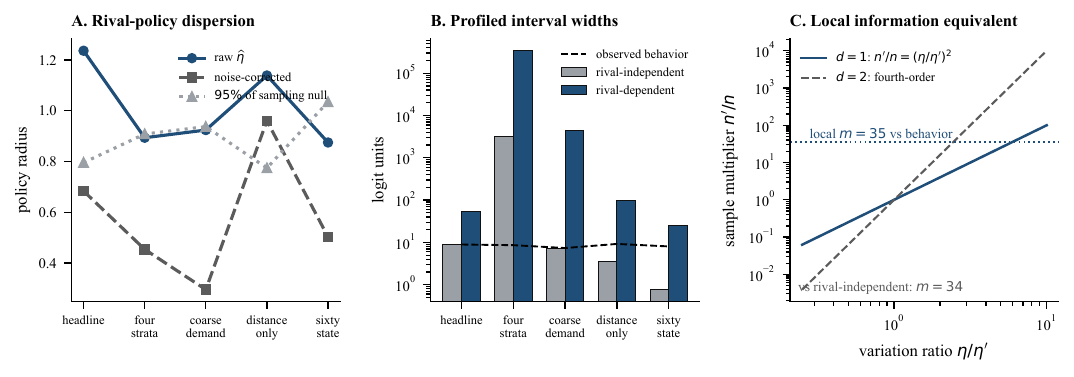}
\caption{What the observed airline design measures. Panel A: rival-policy radius against a cell-size sampling null. Panel B: most favorably conditioned profiled widths by interaction degree, with the range of observed own-choice logits marked. Panel C: theoretical iso-information curves $n'/n=(\eta/\eta')^{2d}$ and the local sample multiplier implied by the headline degree-one interval. The application is a design diagnostic, not a structural estimate of Southwest's payoff effect.}
\label{fig:airline}
\end{figure}

\section{Discussion}
\label{sec:disc}

Feature dimension $K$ says how many distinct opponent-policy environments restore rank. The highest active degree $\dPhi$ says how much data each of those environments must carry. Extra transition regimes do not close the second gap. A researcher who can read $\eta$ off opponent behavior can diagnose which interaction layers are intrinsically hard to measure before estimating payoffs; exact rank still depends on the full transition-policy geometry. Corollary~\ref{cor:width} and Proposition~\ref{prop:local} are the intervals that match that diagnosis in the two sampling regimes.

Once a player's pair $(u,\beta)$ is known up to a global additive constant, that player's action-value differences and best-response map are identified under a specified counterfactual primitive and a specified opponents' policy. Absolute lifetime welfare is not, because a constant added to flow payoffs shifts values by $c/(1-\beta)$ without changing behavior. A single player's identified payoff does not identify the equilibrium correspondence.

\clearpage
\appendix
\section{Hotz--Miller operator and gauge identities}
\label{app:operators}

Given $(P,q)$, let $B_a(q)$ average current payoffs against $q_x$ and let $P_a(q)$ be the corresponding own-action transition. After eliminating the value function,
\begin{equation}
\label{eq:A}
A_a(q;P,\beta)
=
B_a(q)-B_0(q)
+\beta\bigl(P_a(q)-P_0(q)\bigr)\bigl(I-\beta P_0(q)\bigr)^{-1}B_0(q).
\end{equation}
Stacking $a\neq 0$ produces the block $C^{r,e}(\beta):=A(q^e;P^r,\beta)$, of size $M(A-1)\times MAB$. Let $S^{r,e}(\beta)$ be the continuation operator $\beta(P_a-P_0)(I-\beta P_0)^{-1}$ stacked over $a\neq 0$. Action-value differences satisfy
\begin{equation}
\label{eq:d-expand}
d(p_i^{r,e})
=
C^{r,e}(\beta)u
+
\Delta\tau_i^{r,e}
+
S^{r,e}(\beta)\bigl[g(p_i^{r,e})+\tau_{i0}^{r,e}\bigr],
\end{equation}
where $\Delta\tau_i^{r,e}=(\tau_{ia}^{r,e}-\tau_{i0}^{r,e})_{a\neq a_0}$. The observable dependent variable subtracts the contemporaneous difference once:
\begin{equation}
\label{eq:zdef}
z^{r,e}
=
d(p_i^{r,e})
-
\Delta\tau_i^{r,e}.
\end{equation}
The full Hotz--Miller observation is then \eqref{eq:z} in the text, with
\begin{equation}
\label{eq:bdef}
b^{r,e}(\beta)
=
S^{r,e}(\beta)\bigl[g(p_i^{r,e})+\tau_{i0}^{r,e}\bigr].
\end{equation}
Because $S^{r,e}(\beta)\ones=0$, only the centered component of $g(p_i^{r,e})+\tau_{i0}^{r,e}$ matters. When $\beta$ is unknown, a candidate $\tilde\beta$ is observationally admissible if and only if
\begin{equation}
\label{eq:Rbeta}
R(\tilde\beta)=\min_u\bigl\|z-b(\tilde\beta)-C(\tilde\beta)u\bigr\|^2
\end{equation}
is zero.

\section{Universal obstructions}
\label{app:gauge}
\label{app:avg}

\begin{proof}[Proof of Proposition~\ref{prop:env}]
Fix a row-stochastic kernel $P$ and $h\in\R^M$. The array $u=G_{P,\beta}h$ adds $h(x)-\beta\sum_{x'}P(x'\mid x,a,b)h(x')$ to the flow payoff. For any opponent policy and any own strategy, the continuation value at state $x$ shifts by $h(x)$. Own-action value differences, choice probabilities, and best responses are therefore unchanged, so $\mathrm{Im}\,G_{P,\beta}$ lies in the observational kernel of every mixed policy observed under $P$.

If $G_{P,\beta}h=0$, then $h=\beta P_{ab}h$ for every action profile $(a,b)$. Taking the sup norm gives $\|h\|_\infty\le\beta\|h\|_\infty$. Since $0<\beta<1$, $h=0$. Thus $G_{P,\beta}$ is injective and $\dim\mathrm{Im}\,G_{P,\beta}=M$. Constants $h\equiv c$ produce the constant-flow line $(1-\beta)c\ones$. Stacking policies cannot leave the image.
\end{proof}

\begin{lemma}[Two-regime intersection]
\label{lem:E1}
For every row-stochastic pair $(P,Q)$ and every $\beta\in(0,1)$,
\[
\dim\bigl(\mathrm{Im}\,G_{P,\beta}\cap\mathrm{Im}\,G_{Q,\beta}\bigr)\ge 1,
\]
and the constant-flow line is contained in the intersection. The intersection equals that line on a nonempty Zariski-open set of pairs: $\mathrm{rank}[G_P,-G_Q]=2M-1$ at the identity/cycle witness, hence on a Zariski-open neighborhood.
\end{lemma}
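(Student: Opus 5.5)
The proof has two parts. The first shows that the constant-flow line always lies in the intersection. The second shows that generically the intersection is nothing more than that line, using a rank computation at one explicit pair and then a semicontinuity argument.

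\emph{The constant line.} Take $h=c\ones_M$. Row-stochasticity gives $P\ones=\ones$ and $Q\ones=\ones$. Hence $G_{P,\beta}(c\ones)=(1-\beta)c\ones_{MAB}=G_{Q,\beta}(c\ones)$. So the line $\operatorname{span}\{\ones_{MAB}\}$ lies in both images, and the intersection has dimension at least one.

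\emph{Reduction to a rank condition.} By Proposition~\ref{prop:env}, both $G_{P,\beta}$ and $G_{Q,\beta}$ are injective. A vector in the intersection has the form $G_Ph=G_Qk$, which means $(h,k)\in\ker[G_P,-G_Q]$. Since $G_P$ is injective, the map $(h,k)\mapsto G_Ph$ is injective on this kernel. Therefore
\[
\dim(\mathrm{Im}\,G_P\cap\mathrm{Im}\,G_Q)=2M-\operatorname{rank}[G_P,-G_Q].
\]
The intersection equals the constant line exactly when $\operatorname{rank}[G_P,-G_Q]=2M-1$. The entries of $[G_P,-G_Q]$ are affine in the entries of $(P,Q)$. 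The condition "rank $\ge 2M-1$" is the nonvanishing of some $(2M-1)$-minor, which is polynomial. It therefore defines a Zariski-open subset of the affine space of kernel pairs, and we intersect it with the product of simplices. This set is nonempty as soon as one witness satisfies it, and the rank is then exactly $2M-1$ by the first part.

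\emph{The witness.} This is the main computation. Let $P$ be the identity kernel, $P(x'\mid x,a,b)=\mathbf 1\{x'=x\}$, and let $Q$ be the deterministic cyclic shift $x\mapsto x+1 \bmod M$, both independent of $(a,b)$. Then every $(a,b)$ block gives the same equations, and $G_Ph=G_Qk$ reduces to
\[
(1-\beta)h(x)=k(x)-\beta k(x+1)\quad\text{for all }x.
\]
Given $h$, the vector $k=(1-\beta)(I-\beta\Pi)^{-1}h$ always solves this, where $\Pi$ is the cyclic shift. So this witness does not yet cut the intersection down: the two images coincide.

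Kernels that ignore $(a,b)$ cannot work in general, because then $\mathrm{Im}\,G_P$ lives entirely in the constant-in-$(a,b)$ subspace. I would therefore make one of the two kernels depend on actions. Let $P_{ab}=I$ for every $(a,b)$. Let $Q_{ab}=I$ for $(a,b)=(0,0)$ and $Q_{ab}=\Pi$ otherwise; this uses $A\ge2$. The equations are then:
\begin{itemize}
\item from the $(0,0)$ block, $(1-\beta)h=(1-\beta)k$, so $h=k$;
\item from any other block, $(1-\beta)h=k-\beta\Pi k$.
\end{itemize}
Combining them gives $\beta h=\beta\Pi h$, so $h=\Pi h$. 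Because $\Pi$ is a single $M$-cycle, this forces $h$ to be constant. Hence $\ker[G_P,-G_Q]$ is one-dimensional and the rank is $2M-1$.

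Both kernels are row-stochastic, which is all the lemma's class requires, so they lie in the closed domain. Strict positivity is not needed for the witness. If one wants a generic statement among strictly positive pairs, note that a nonzero polynomial cannot vanish on a nonempty open subset of the interior, so genericity propagates there as well.

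I expect the main obstacle to be choosing a witness that genuinely separates the two images. As shown above, action-independent pairs fail. The action-dependent identity/cycle pair resolves this, and the rest of the argument is a routine semicontinuity step.
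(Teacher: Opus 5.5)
Your proof is correct and follows the paper's argument: row-stochasticity gives the constant-flow line, an action-dependent identity/cycle witness forces $h=k$ and then $C_M h=h$, and a $(2M-1)$-minor that is polynomial in the kernel entries and nonzero at the witness gives the Zariski-open set. The only difference is cosmetic: the paper puts the cycle in a single block of $Q$ and the identity in the others, while you do the reverse, which works equally well because one block of each type suffices.
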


\begin{proof}
$G_P\ones=G_Q\ones=(1-\beta)\ones$, so the intersection contains the constant-flow line. Equivalently $\mathrm{rank}[G_P,-G_Q]\le 2M-1$. For the witness, let every block of $P$ equal $I_M$ and let every block of $Q$ equal $I_M$ except one block equal to the $M$-cycle $\Ccycle$. If $G_Ph=G_Qk$, identity blocks force $h=k$. The cycle block then forces $h(x)-\beta h(x)=h(x)-\beta h(x+1)$, hence $\Ccycle h=h$, hence $h=c\ones$. Thus the intersection is exactly the constant line and $\mathrm{rank}[G_P,-G_Q]=2M-1$. Entries of $[G_P,-G_Q]$ are affine in the pair of row-stochastic arrays. Any $(2M-1)\times(2M-1)$ minor is a polynomial, nonzero at the witness, hence not identically zero. The vanishing locus is a proper Zariski-closed subset.
\end{proof}

Theorem~\ref{thm:sharp} supplies exact two-regime sufficiency without passing through this lemma. The lemma records only that two kernels always share location and that generic pairs share nothing more.

\begin{proof}[Proof of Proposition~\ref{prop:strat}]
Let $q^1,\ldots,q^E$ be interior. At each state $x$ form $Q_x\in\R^{E\times B}$ with rows $q^e(\cdot\mid x)^\prime$. Consider own-action-invariant perturbations $\Delta u(x,a,b)=h_x(b)$ and seek $c\in\R^E$ such that $Q_x h_x=c$ for all $x$, i.e.\ $q^e(\cdot\mid x)^\prime h_x=c_e$ for all $e,x$.

The unknown pair $(h_1,\ldots,h_M,c)$ has dimension $MB+E$. The system contains $ME$ linear equations, so the solution space $\mathcal H$ has dimension at least $MB+E-ME=MB-(M-1)E$. If $h=0$ then $c=0$, so projection onto $h$ is injective on $\mathcal H$. Independently, the global payoff constant lies in every observational kernel. Hence
\[
\dim\ker C_E(\beta)\ge\max\bigl\{1,\,MB-(M-1)E\bigr\}.
\]

Fix any observed policy $e$, any primitive regime, and any nonreference own action $a$. Because $\Delta u$ is common across own actions, its current payoff-difference contribution is zero. Under $q^e$, the reference expected payoff perturbation is $c_e\ones$. Let $P_0^e$ and $P_a^e$ be the induced transitions. Stochasticity gives $(I-\beta P_0^e)^{-1}\ones=(1-\beta)^{-1}\ones$ and $(P_a^e-P_0^e)\ones=0$, so
\[
\beta(P_a^e-P_0^e)(I-\beta P_0^e)^{-1}(c_e\ones)=0.
\]
Every Hotz--Miller choice-difference equation is unchanged. The argument does not use the particular transition law.

Identification up to one location requires the right-hand side of the dimension bound to equal $1$, which is \eqref{eq:Estar}. For $B=2$, $(2M-1)/(M-1)=2+1/(M-1)$, so $\Estar=3$. Equivalently $\Estar=B+\lceil(B-1)/(M-1)\rceil$.
\end{proof}

For $B=2$ and $E=2$, write $h_x(0)=\rho_0(x)$ and $h_x(1)=\rho_0(x)+\rho_1(x)$. If $q^1_x\neq q^2_x$ at every $x$, then $D_1-D_2$ is invertible and \eqref{eq:rho-example} parametrizes the two-dimensional kernel: $c_1$ is location and $t$ is the extra direction. If the policies coincide on a nonempty set of states, a nonzero vector supported there lies in $\ker(D_1-D_2)$ and again yields an extra null direction.

\section{Product-policy construction (saturated case)}
\label{app:product}
\label{app:suf}

Let $U=M^{-1}\ones_M\ones_M'$ and let $\Ccycle$ be the $M$-cycle permutation. Set $\Deltacycle=\Ccycle-I$ and $0<\alpha<1/M$, so that $U+\alpha\Deltacycle$ is strictly positive and row-stochastic. Let opponent $j$ have $B_j$ actions, so $B=\prod_j B_j$. For $0<\eta<1$,
\begin{equation}
\label{eq:Rj}
R_j(\eta)=\eta I_{B_j}+(1-\eta)B_j^{-1}\ones\ones',
\qquad
R(\eta)=\bigotimes_j R_j(\eta).
\end{equation}
With $\Nextra=\Estar-B$, state-dependent selection matrices $G_x$ yield
\begin{equation}
\label{eq:Qxprod}
Q_x
=
\begin{bmatrix} I_B \\ G_x \end{bmatrix} R(\eta)
\end{equation}
satisfying $Q_x\ones_B=\ones_E$, $\mathrm{rank}\,Q_x=B$, and
\begin{equation}
\label{eq:cap}
\bigcap_{x=1}^M\mathrm{col}(Q_x)=\mathrm{span}\{\ones_E\}.
\end{equation}

\begin{lemma}[Product-policy intersection]
\label{lem:product}
Let $E=\Estar(M,B)$ and $\Nextra=E-B$. Let opponent $j$ have $B_j\ge 2$ actions, with $B=\prod_j B_j$. For $0<\eta<1$ define $R_j(\eta)$ and $R(\eta)$ by \eqref{eq:Rj}. Then $R(\eta)$ is invertible, row-stochastic, and strictly interior, and every row is a product distribution. There exist selection matrices $G_x\in\R^{\Nextra\times B}$ such that the matrices $Q_x$ in \eqref{eq:Qxprod} satisfy $Q_x\ones_B=\ones_E$, $\mathrm{rank}\,Q_x=B$, every row of every $Q_x$ is a product distribution,
\[
\bigcap_{x=1}^M\mathrm{col}(Q_x)=\mathrm{span}\{\ones_E\},
\]
and the family $\{Q_x\}_{x=1}^M$ comprises exactly $\Estar$ distinct Markov opponent-policy profiles. As $\eta\to 0$, all individual factor policies, and hence all joint product policies, approach uniform. Correlated joint opponent policies are not used.
\end{lemma}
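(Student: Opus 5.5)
The plan is to reduce everything to a combinatorial choice of the selection matrices $G_x$. The fact that makes this possible is that $R(\eta)$ is invertible, so $\mathrm{col}(Q_x)=\mathrm{col}\bigl([I_B;G_x]\bigr)$.

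\emph{Properties of $R(\eta)$.} Each $R_j(\eta)=\eta I+(1-\eta)B_j^{-1}\ones\ones'$ has eigenvalue $1$ on $\ones$ and $\eta>0$ on the orthogonal complement, so it is invertible. It is row-stochastic, and its entries are at least $(1-\eta)/B_j>0$. Tensor products preserve invertibility, row-stochasticity and strict positivity. Row $b=(b_1,\dots,b_J)$ of $R(\eta)$ is $\bigotimes_j(\text{row }b_j\text{ of }R_j)$, which is a product distribution. As $\eta\to 0$, each $R_j\to B_j^{-1}\ones\ones'$, so every factor and every joint row tends to uniform.

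\emph{Choice of $G_x$.} Each row of $G_x$ will be a standard basis row $e_{s_k(x)}'$ for $k=1,\dots,\Nextra$, with $s_k:\cX\to\{1,\dots,B\}$. Then every row of $Q_x$ is a row of $R(\eta)$, hence a strictly interior product distribution. Also $Q_x\ones_B=[I;G_x]\ones_B=\ones_E$, and $\mathrm{rank}\,Q_x=B$ because of the identity block. The intersection $\bigcap_x\mathrm{col}(Q_x)$ consists of the vectors $(v,w)$ with $w=G_xv$ for every $x$, that is, $v_{s_k(x)}=v_{s_k(1)}$ for all $k$ and $x$. Build a graph on $\{1,\dots,B\}$ that joins $s_k(1)$ to $s_k(x)$ for every $k$ and $x$. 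The intersection is $\mathrm{span}\{\ones_E\}$ exactly when this graph is connected. Each $k$ contributes a star with up to $M$ distinct vertices, obtained by letting $s_k$ be injective on $\cX$ whenever enough new labels remain. Chain the stars so that the center $s_k(1)$ of star $k$ is a leaf of star $k-1$. Then $\Nextra$ stars cover $1+\Nextra(M-1)$ vertices. This is at least $B$, because $\Nextra=\Estar-B=\lceil(B-1)/(M-1)\rceil$, an identity recorded in the proof of Proposition~\ref{prop:strat}. In the last star, any unused state positions are filled by repeating labels that are already covered; this keeps $s_k$ nonconstant, since $M\ge2$ and at least one new label is needed per star.

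\emph{Counting profiles.} The first $B$ profiles are $x\mapsto$ row $b$ of $R(\eta)$. They are state-invariant and pairwise distinct because $R(\eta)$ is invertible. The extra profiles are $x\mapsto$ row $s_k(x)$. Each is nonconstant in $x$, so it differs from the first $B$. Two extra profiles differ from each other because their label maps differ: each star introduces a fresh label somewhere. This gives exactly $B+\Nextra=\Estar$ distinct Markov profiles.

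The main obstacle is the bookkeeping in the star-chaining step. It must achieve connectivity, keep each $s_k$ nonconstant, and keep the profiles distinct all at once, including the edge case where the last star uses only a few new labels. I would handle it by an explicit enumeration of the labels in order.
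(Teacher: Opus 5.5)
Your proposal is correct and follows essentially the paper's argument. Invertibility of $R(\eta)$ reduces $\mathrm{col}(Q_x)$ to $\mathrm{col}[I_B;G_x]$, standard-basis selectors keep every row a product distribution, connectivity of the label-incidence graph together with $(M-1)\Nextra\ge B-1$ gives the intersection, and non-constancy plus fresh labels give exactly $\Estar$ distinct profiles. The one difference is that the paper uses a single hub rather than chained stars: every extra row of $G_1$ selects label $1$, each extra profile first receives a distinct non-$1$ label at some $x\ge 2$, and the remaining pairs cover $\{2,\ldots,B\}$. This makes connectivity immediate and removes the bookkeeping you flagged as the main obstacle.
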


\begin{proof}
Each $R_j(\eta)$ is row-stochastic and strictly interior. Its eigenvalues are $1$ (once) and $\eta$ (multiplicity $B_j-1$), so $R_j(\eta)$ is invertible on $(0,1)$. A Kronecker product of invertible matrices is invertible, hence so is $R(\eta)$. A Kronecker product of row-stochastic matrices is row-stochastic. Each row of a Kronecker product of row-stochastic matrices is the product of the corresponding factor rows, so every row of $R(\eta)$ is a strictly interior product distribution on the joint action space.

The definition of $\Estar$ yields $\Nextra=\Estar-B$ and
\[
(M-1)\Nextra\ge B-1\ge \Nextra.
\]
Index extra rows by $k=1,\ldots,\Nextra$ and states by $x=1,\ldots,M$. Construct a star with center $1$: every extra row of $G_1$ equals $e_1'$. For $x\ge 2$, extra row $k$ of $G_x$ equals $e_{j(x,k)}'$. Assign distinct non-1 indices in $\{2,\ldots,B\}$ to the $\Nextra$ extra profiles first (possible because $B-1\ge \Nextra$), so that every extra profile $k$ has at least one pair with $j(x,k)\neq 1$. Use the remaining pairs $(x,k)$ with $x\ge 2$ to cover any leftover base indices, so that the image of $j$ includes $\{2,\ldots,B\}$ (possible because $(M-1)\Nextra\ge B-1$). The associated bipartite graph, with left vertices $\{1,\ldots,B\}$ (joint actions) and right vertices $\{1,\ldots,\Nextra\}$ (extra rows), and an edge $\{b,k\}$ if some state selects action $b$ in extra row $k$, is connected.

The first $B$ rows of $R(\eta)$ are linearly independent, hence distinct as state-independent policies. Each extra profile equals row $1$ of $R(\eta)$ at state $1$ and a non-1 row at some $x\ge 2$, so it is not state-constant and therefore distinct from the first $B$ policies; distinct first non-1 assignments make the extra profiles distinct from one another. The construction therefore contains exactly $\Estar$ distinct Markov opponent-policy profiles, not merely $\Estar$ rows that may duplicate one another as complete state-contingent policies.

Every row of $Q_x$ is a row of $R(\eta)$, hence a product distribution, and $Q_x\ones_B=\ones_E$. Because $R(\eta)$ is invertible,
\[
\mathrm{col}(Q_x)
=
\mathrm{col}
\begin{bmatrix} I_B \\ G_x \end{bmatrix}.
\]
The stacked matrix on the right has full column rank $B$, so $\mathrm{rank}\,Q_x=B$.

A vector $c$ lies in $\bigcap_x\mathrm{col}(Q_x)$ if and only if there exist $v_x\in\R^B$ with
\[
c
=
\begin{bmatrix} I_B \\ G_x \end{bmatrix} v_x
\qquad\forall x.
\]
The first $B$ coordinates force $v_x=v$ independent of $x$. The extra coordinates then force $G_x v$ independent of $x$. Comparing state $1$ to state $x\ge 2$ gives $v_1=v_{j(x,k)}$ for every assigned pair. Covering $\{2,\ldots,B\}$ and connectedness of the bipartite incidence graph yield $v\in\mathrm{span}\{\ones_B\}$. Hence $c\in\mathrm{span}\{\ones_E\}$. The reverse inclusion holds because every $Q_x$ is row-stochastic.

As $\eta\to 0$, each $R_j(\eta)$ tends to the uniform kernel on $B_j$ actions.
\end{proof}

\begin{proof}[Proof of Corollary~\ref{thm:sharp}]
The obstructions of Propositions~\ref{prop:env} and~\ref{prop:strat} give $\Rmin\ge 2$ and $\Emin\ge\Estar$. It remains to show $\Iident(2,\Estar)=1$. Let $U=M^{-1}\ones_M\ones_M'$, let $\Ccycle$ be the $M$-cycle, and $\Deltacycle=\Ccycle-I$. Choose $0<\alpha<1/M$, so $U+\alpha\Deltacycle$ is strictly positive and row-stochastic. Take the product-policy matrices $Q_x$ of Lemma~\ref{lem:product}. Because $\beta$ is known, $b(\beta)$ is known from observed CCPs, the shock law, and transfers, and is subtracted. The remaining map is $C(\beta)$.

Regime $0$ sets $P^0_{ab}=U$ for every $(a,b)$. Continuation differences vanish. For nonreference $a$, the measurement under policy $e$ is $y^{0,e}_a(x)=q^e(\cdot\mid x)^\prime\delta_a(x,\cdot)$. Stacking the $E$ measurements at $x$ yields $Q_x\delta_a(x,\cdot)$. Full column rank of $Q_x$ identifies $\delta_a(x,b)$ for every $x,b$ and every nonreference $a$, hence $M(A-1)B$ coordinates.

Regime $1$ sets $P^1_{a_0b}=U$ for every $b$. For a distinguished nonreference action $a_\star$, set $P^1_{a_\star b}=U+\alpha\Deltacycle$ for every $b$. Remaining own-action kernels are any strictly positive stochastic matrices. Let $r(x,b)=u(x,a_0,b)$ and $r_e(x)=q^e(\cdot\mid x)^\prime r(x,\cdot)$. The resolvent identity $(I-\beta U)^{-1}=I+\beta(1-\beta)^{-1}U$ and $\Deltacycle U=0$ give $\Deltacycle(I-\beta U)^{-1}=\Deltacycle$. After subtracting the already identified current difference for $a_\star$, the observation is $\alpha\beta \Deltacycle r_e$. Because $\alpha\beta\neq 0$, the reference block is identified up to the kernel of $r\mapsto(\Deltacycle r_1,\ldots,\Deltacycle r_E)$. If this map annihilates $r$, then $r_e=c_e\ones_M$ for each $e$, i.e.\ $Q_x r(x,\cdot)=c$ for all $x$. Thus
\[
c\in\bigcap_x\mathrm{col}(Q_x)=\mathrm{span}\{\ones_E\},
\]
so $c=\lambda\ones_E$. Full column rank of $Q_x$ together with $Q_x\ones_B=\ones_E$ yields $r(x,\cdot)=\lambda\ones_B$ for all $x$. The only unidentified reference-payoff direction is the global constant, contributing rank $MB-1$. Total rank is $M(A-1)B+(MB-1)=MAB-1$. Hence $\Iident(2,\Estar)=1$, so $\Rmin=2$ and $\Emin=\Estar$. As $\eta\to 0$ the product policies approach uniform, so the identifying policy variation may be arbitrarily small.
\end{proof}

\section{Product-policy construction, structure theorem, and conditional frontier}
\label{app:restricted}

This appendix records the locked arguments for Theorem~\ref{thm:structure} and Proposition~\ref{prop:frontier}. The feature restriction applies only to opponent actions. Write
\[
u(x,a,b)=\phi(b)^\top\theta(x,a),
\]
with $\Phi\in\R^{B\times K}$ of rank $K$ and $\ones_B\in\operatorname{col}\Phi$. The unknown is $\theta\in\R^{MAK}$. Let $h_\Phi\in\R^K$ be the unique coefficient with $\Phi h_\Phi=\ones_B$. The associated global payoff constant is the vector that repeats $h_\Phi$ at every $(x,a)$.

\subsection{Feature-moment product-policy lemma}
\label{app:feature-moment}

Opponent joint actions factor as $B=\prod_{j=1}^{J}B_j$. For $0<\eta<1$ define $R_j(\eta)=\eta I_{B_j}+(1-\eta)B_j^{-1}\ones\ones'$ and $R(\eta)=\bigotimes_j R_j(\eta)$. Each $R_j(\eta)$ is invertible, and every row of $R(\eta)$ is a strictly interior product distribution, so $\operatorname{rank}(R(\eta)\Phi)=K$. Select $K$ rows whose feature-moment matrix $W\in\R^{K\times K}$ is nonsingular. Write
\[
\Ephi=K+\Bigl\lceil\frac{K-1}{M-1}\Bigr\rceil.
\]
Use the star construction of Lemma~\ref{lem:product} on those $K$ labels: the first $K$ rows of each selector $S_x$ are $I_K$, and each extra row is a standard-basis selector. For state $x$ the feature-moment matrix is $M_x=S_x W$.

\begin{lemma}[Feature-moment intersection]
\label{lem:feature-moment}
For every state, $\operatorname{rank} M_x=K$. If the induced hypergraph on the $K$ labels is connected, then
\[
\bigcap_x\operatorname{col}(M_x)=\operatorname{span}\{\ones_E\}.
\]
\end{lemma}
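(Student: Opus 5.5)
The plan is to follow the proof of Lemma~\ref{lem:product}, with the nonsingular feature-moment matrix $W$ playing the role of $R(\eta)$ and the $K$ selected labels playing the role of the $B$ joint actions.

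\emph{Rank.} By construction the first $K$ rows of $S_x$ form $I_K$, so $S_x$ has full column rank $K$. Since $W$ is nonsingular, $M_x=S_xW$ has the same column space as $S_x$, and $\operatorname{rank}M_x=K$. The selected rows of $R(\eta)$ are probability vectors, and $\Phi h_\Phi=\ones_B$. Hence $Wh_\Phi=\ones_K$. Every row of $S_x$ is a standard-basis selector, so $S_x\ones_K=\ones_E$, which gives $M_xh_\Phi=\ones_E$. This shows $\ones_E\in\operatorname{col}(M_x)$ for every $x$, so $\operatorname{span}\{\ones_E\}$ lies in the intersection.

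\emph{Reverse inclusion.} Suppose $c\in\bigcap_x\operatorname{col}(M_x)=\bigcap_x\operatorname{col}(S_x)$. Then $c=S_xv_x$ for vectors $v_x\in\R^K$. The first $K$ coordinates of $c$ equal $v_x$, so $v_x=v$ is the same at every state. The extra coordinates then give $e_{j(x,k)}'v=c_{K+k}$ for every state $x$ and every extra row $k$, which forces $v_{j(x,k)}=v_{j(x',k)}$ whenever two states select labels in the same extra row.

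Build the hypergraph whose hyperedges are the label sets $\{j(x,k):x\}$, one for each $k$. These equalities make $v$ constant on each hyperedge. If the hypergraph is connected, meaning every label lies in some hyperedge and the hyperedges link all labels, then $v=\lambda\ones_K$ and $c=\lambda S_x\ones_K=\lambda\ones_E$. The main point to check is this connectivity: the hypothesis must cover every label, including any label that no extra row selects. If some label were isolated, its coordinate of $v$ would be unconstrained, so isolated labels must be excluded by the hypothesis.

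To show the hypothesis is attainable, use the star assignment of Lemma~\ref{lem:product}, which works because $(M-1)(E-K)\ge K-1$ by the definition of $\Ephi$.
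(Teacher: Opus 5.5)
Your proof is correct and follows the paper's own argument. Invertibility of $W$ gives $\operatorname{col}(M_x)=\operatorname{col}(S_x)$, the identity block forces a common $v$, connectivity of the hypergraph makes $v$ constant, and $M_xh_\Phi=\ones_E$ gives the other inclusion; your explicit check that $Wh_\Phi=\ones_K$ and your remark about isolated labels just spell out details the paper leaves implicit.
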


\begin{proof}
Because $W$ is invertible, $\operatorname{col}(M_x)=\operatorname{col}(S_x)$ and $\operatorname{rank} M_x=K$. If $y\in\bigcap_x\operatorname{col}(S_x)$, its first $K$ coordinates determine $z\in\R^K$. For each extra profile, membership in every $\operatorname{col}(S_x)$ forces all coordinates of $z$ corresponding to labels used by that profile across states to be equal. Connectedness of the induced hypergraph therefore forces $z=c\ones_K$, hence $y=c\ones_E$. The reverse inclusion holds because each $M_x$ is row-stochastic on the constant $h_\Phi$ direction: $M_x h_\Phi=\ones_E$.
\end{proof}

\subsection{Necessity}
\label{app:restricted-nec}

\begin{proof}[Proof of Theorem~\ref{thm:structure}, necessity]
Restrict attention to payoff perturbations invariant to own action, $\theta(x,a)=\psi_x$ for all $a$. For environment $e$, write $m_e(x)'=q^e(\cdot\mid x)'\Phi$. If $m_e(x)'\psi_x=c_e$ for all states $x$, the induced expected flow-payoff perturbation is the same constant $c_e$ at every state and every own action in that environment. It therefore changes lifetime utility only by an action-independent constant and leaves all observed choice differences unchanged, regardless of transition kernels. For each $e$, eliminating the nuisance scalar $c_e$ gives at most $M-1$ independent restrictions on the $MK$-dimensional vector $\psi$. Hence
\[
\dim\ker C_E^\Phi\ge MK-E(M-1),
\]
with the global payoff constant always remaining, so $\dim\ker C_E^\Phi\ge\max\{1,MK-E(M-1)\}$. Full identification modulo location therefore requires $E\ge\Ephi$.
\end{proof}

\subsection{Two-regime sufficiency}
\label{app:restricted-suf}

\begin{proof}[Proof of Theorem~\ref{thm:structure}, two-regime sufficiency]
Let $U=M^{-1}\ones\ones'$, $\Deltacycle=\Ccycle-I$ with $\Ccycle$ an $M$-cycle, and $0<\alpha<1/M$ small enough that $U+\alpha\Deltacycle$ is strictly positive. Regime $0$ sets $P^0_{ab}=U$ for every $(a,b)$. Regime $1$ sets $P^1_{a_0b}=U$ for every $b$, and for one distinguished nonreference action $a_\star$ sets $P^1_{a_\star b}=U+\alpha\Deltacycle$; remaining own-action kernels in regime $1$ equal $U$. Take the $\Ephi$ interior product profiles of Lemma~\ref{lem:feature-moment}. Reparameterize $r_x=\theta(x,a_0)$ and $\delta_{a,x}=\theta(x,a)-\theta(x,a_0)$. Under regime $0$, continuation differences vanish and the homogeneous equations are $M_x\delta_{a,x}=0$. Full column rank of $M_x$ yields $\delta_{a,x}=0$ for every $x$ and every $a\neq a_0$. For the distinguished action under regime $1$, after $\delta=0$, the remaining equation is proportional to $\Deltacycle r_e=0$, where $r_e(x)=m_e(x)'r_x$. Hence every $r_e$ is constant across states: there exists $c\in\R^E$ with $M_x r_x=c$ for all $x$. Therefore
\[
c\in\bigcap_x\operatorname{col}(M_x)=\operatorname{span}\{\ones_E\}.
\]
Because $M_x h_\Phi=\ones_E$ and $M_x$ has full column rank, $r_x=c_0 h_\Phi$ for all $x$. The only kernel direction is the global payoff constant, so $\operatorname{rank} C^\Phi=MAK-1$.
\end{proof}

\subsection{Restricted dynamic gauge}
\label{app:restricted-gauge}

Let $F=\operatorname{col}\Phi$ and let the columns of $N_\Phi$ span $F^\perp$. For a primitive kernel $P$ write $T_{x,a}(P)$ for the $B\times M$ matrix with row $b$ equal to $P_{ab}(x,\cdot)$, and
\[
\mathcal{R}_\Phi(P)=\operatorname{stack}_{x,a} N_\Phi^\top T_{x,a}(P).
\]

\begin{proof}[Proof of Theorem~\ref{thm:structure}, restricted gauge]
For $h\in\R^M$,
\[
(G_{P,\beta}h)(x,a,\cdot)=h(x)\ones_B-\beta T_{x,a}(P)h.
\]
Because $\ones_B\in F$, the first term always belongs to $F$. Therefore $G_{P,\beta}h\in\mathcal U_\Phi$ if and only if $T_{x,a}(P)h\in F$ for every $(x,a)$, or equivalently $N_\Phi^\top T_{x,a}(P)h=0$ for all $(x,a)$. The admissible potentials are $\ker\mathcal{R}_\Phi(P)$. For $0<\beta<1$ the gauge map is injective, so
\[
\dim\bigl(\mathcal U_\Phi\cap\operatorname{Im} G_{P,\beta}\bigr)=\dim\ker\mathcal{R}_\Phi(P)=M-\operatorname{rank}\mathcal{R}_\Phi(P).
\]
Always $\ones_M\in\ker\mathcal{R}_\Phi(P)$, hence $\operatorname{rank}\mathcal{R}_\Phi(P)\le M-1$. If $K=B$, then $F=\R^B$, $N_\Phi$ is empty, $\operatorname{rank}\mathcal{R}_\Phi(P)=0$, and the intersection has dimension $M$.

If $K<B$, let $U$ denote the uniform kernel on $\mathcal X$. Select $M-1$ distinct state--own-action cells $(x_\ell,a_\ell)$, $\ell=1,\ldots,M-1$. Let $v_1,\ldots,v_{M-1}$ be a basis of $\ones_M^\perp$. Take a nonzero $n\in\operatorname{col}(\Phi)^\perp$. Because $\ones_B\in\operatorname{col}(\Phi)$, one has $n^\top\ones_B=0$. For each $\ell=1,\ldots,M-1$ set
\[
P_{a_\ell b}(x_\ell,\cdot)
=
U(x_\ell,\cdot)
+
\varepsilon n_b v_\ell^\top.
\]
Set every remaining primitive transition row equal to $U$. Take $\varepsilon>0$ small enough that all transition probabilities are strictly positive. Then
\[
n^\top T_{x_\ell,a_\ell}(P)
=
\varepsilon\|n\|^2 v_\ell^\top.
\]
Linear independence of $v_1,\ldots,v_{M-1}$ yields $\operatorname{rank}\mathcal{R}_\Phi(P)\ge M-1$. Combined with $\ones_M\in\ker\mathcal{R}_\Phi(P)$,
\[
\operatorname{rank}\mathcal{R}_\Phi(P)=M-1.
\]
A corresponding maximal minor is a nonzero polynomial in transition probabilities. Thus outside a proper algebraic subset of the interior stochastic-kernel parameter space, $\operatorname{rank}\mathcal{R}_\Phi(P)=M-1$, and for generic strictly positive $P$,
\[
\mathcal U_\Phi\cap\operatorname{Im} G_{P,\beta}=\operatorname{span}\{\text{global constant}\}.
\]
\end{proof}

\subsection{One-regime existence}
\label{app:one-regime}

The following construction is sufficient. Combined with Proposition~\ref{prop:frontier} it yields the upper bound $\cE_\Phi(1)\le K+\Ephi$.

\begin{proof}[Proof of Theorem~\ref{thm:structure}, one-regime existence]
Assume $K<B$. Choose $K$ strictly interior product distributions $p_1,\ldots,p_K$ with nonsingular feature-moment matrix $W_0$. Write $\Qbase$ for the $K\times B$ matrix with those rows. Then $\Qbase$ has rank $K$, so there is $w\neq 0$ with $\Qbase w=0$.

If $w=\Phi c$ for some $c\in\R^K$, then
\[
\Qbase w=\Qbase\Phi c=W_0 c.
\]
But $\Qbase w=0$ and $W_0$ is nonsingular, hence $c=0$ and $w=0$, a contradiction. Therefore $w\notin\operatorname{col}\Phi$.

Use one primitive regime with $P_{a_0b}=U$ and $P_{ab}=U+\alpha w_b\Deltacycle$ for $a\neq a_0$, with $\alpha$ small enough for strict positivity. Under the first $K$ constant policy environments, $p_i^\top w=0$, so continuation differences vanish. Nonsingularity of $W_0$ identifies every current payoff difference, $\delta_a(x)=0$.

The set of $K$-tuples of product policies with nonsingular feature-moment matrix is open and nonempty. For nonzero $w$, $q\mapsto q^\top w$ is not identically zero on the interior product-policy manifold because product distributions span $\R^B$. Its zero set therefore has empty interior. Thus the nonsingular-moment set can be perturbed, arbitrarily slightly, so that every selected policy also satisfies $q^\top w\neq 0$. Use these $K$ policies as bases for the connected-hypergraph construction and extend them to exactly $\Ephi$ state-dependent product-policy profiles. Then $q_x^{e\top}w\neq 0$ for every state and environment, and $\bigcap_x\operatorname{col}(M_x)=\operatorname{span}\{\ones\}$. Once $\delta=0$, the remaining homogeneous equation is $D(q^e w)\Deltacycle r_e=0$. Every diagonal entry is nonzero, so $\Deltacycle r_e=0$. The feature-moment intersection then forces the reference payoff to be a global constant. Hence $\operatorname{rank} C^\Phi=MAK-1$ at $R=1$ and $E=K+\Ephi$.
\end{proof}

\subsection{Conditional frontier}
\label{app:frontier}

\begin{proposition}[Conditional identifying frontier]
\label{prop:frontier}
Assume the conditions of Theorem~\ref{thm:structure}, and $K<B$. Then
\[
\cE_\Phi(R)=\Ephi\qquad\text{for all }R\ge 2,
\]
and
\[
\EoneLB\le\cE_\Phi(1)\le K+\Ephi.
\]
If $K=1$, then $\Ephi=1$ and $\EoneLB=2$, while the explicit one-regime construction uses $K+\Ephi=2$, so
\[
\cE_\Phi(1)=2,
\qquad
\cE_\Phi(R)=1\quad(R\ge 2).
\]
\end{proposition}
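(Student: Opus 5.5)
The proof has three parts: the two-regime count $\cE_\Phi(R)=\Ephi$ for $R\ge 2$, the one-regime sandwich, and the $K=1$ specialization. Each is assembled from pieces already established.

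\emph{Two or more regimes.} Since $\cE_\Phi(R)$ is nonincreasing in $R$, it suffices to prove $\cE_\Phi(R)\ge\Ephi$ for every $R$ and $\cE_\Phi(2)\le\Ephi$. A design with at most $R$ regimes is still a design with at most $R'$ regimes for any $R'\ge R$, which gives the monotonicity.
\begin{enumerate}
\item For the lower bound, apply Theorem~\ref{thm:structure}(A), equivalently Proposition~\ref{prop:strat}. Its kernel bound $\dim\ker C_E^\Phi\ge\max\{1,MK-E(M-1)\}$ holds uniformly over the number and choice of kernels. Rank $MAK-1$ forces $MK-E(M-1)\le 1$, that is, $E\ge\Ephi$.
\item For the upper bound, the two-regime sufficiency proof in Appendix~\ref{app:restricted-suf} exhibits a design with $R=2$ and $E=\Ephi$ attaining rank $MAK-1$. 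This yields $\cE_\Phi(2)\le\Ephi$, and then $\cE_\Phi(R)\le\Ephi$ for all $R\ge 2$.
\end{enumerate}

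\emph{One regime.}
\begin{enumerate}
\item The upper bound $\cE_\Phi(1)\le K+\Ephi$ is the one-regime existence construction of Appendix~\ref{app:one-regime}. That construction uses the hypothesis $K<B$ to find $w\notin\operatorname{col}\Phi$ with $\Qbase w=0$.
\item The lower bound $\EoneLB$ has two terms. The second term, $\Ephi$, again comes from Theorem~\ref{thm:structure}(A). For the first term, note that with $R=1$ the stacked operator $C^\Phi$ has only $EM(A-1)$ rows. Its rank is therefore at most $EM(A-1)$, and rank $MAK-1$ requires $EM(A-1)\ge MAK-1$. Rounding up gives $\cE_\Phi(1)\ge\EoneLB$.
\end{enumerate}

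\emph{The case $K=1$.} Here $\Ephi=\lceil (M-1)/(M-1)\rceil=1$, and $K+\Ephi=2$. The first term of $\EoneLB$ is $\lceil (MA-1)/(M(A-1))\rceil$. Since $A\ge 2$, we have $M(A-1)\ge M\ge 2$, and the difference $MA-1-M(A-1)=M-1>0$ shows the ratio exceeds $1$. We also need the ratio to be at most $2$, that is, $MA-1\le 2M(A-1)$, which rearranges to $M(A-2)+M-1\ge 0$. This holds, so the ceiling equals $2$ and $\EoneLB=2$. The sandwich then gives $\cE_\Phi(1)=2$, and the previous part gives $\cE_\Phi(R)=1$ for $R\ge 2$.

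For $K=1$ the condition $\ones_B\in\operatorname{col}\Phi$ forces $\operatorname{col}\Phi=\operatorname{span}\{\ones_B\}$, so $K<B$ holds automatically because $B\ge 2$. The one-regime construction is therefore available. I would also confirm that its feature-moment step is trivial here: one policy with $q^\top w\ne0$ and a one-dimensional moment $q^\top\ones=1$.

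The main obstacle I expect is bookkeeping rather than new mathematics. I need to verify that the one-regime construction indeed uses exactly $K+\Ephi$ distinct profiles. Its $K$ state-constant policies may coincide with or differ from the $\Ephi$ hypergraph profiles, so the count could change. I also need $\cE_\Phi$ to count distinct profiles in the sense of Definition~\ref{def:min}.
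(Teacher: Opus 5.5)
Your proposal is correct and follows the paper's proof step for step: part (A) gives the lower bound at every $R$, the two-regime construction gives $\cE_\Phi(2)\le\Ephi$, the $EM(A-1)$ row count together with the averaging bound gives $\EoneLB$, the one-regime construction gives $K+\Ephi$, and the ceiling computation settles $K=1$. The one difference is that the paper handles $R>2$ by appending regimes, which also works if $\cE_\Phi(R)$ is read as ``exactly $R$'', whereas your inclusion argument relies on the ``at most $R$'' convention of Definition~\ref{def:min}.

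Two small corrections. First, $MA-1\le 2M(A-1)$ rearranges to $M(A-2)+1\ge 0$, not $M(A-2)+M-1\ge 0$; both hold, so nothing breaks. Second, your bookkeeping worry is moot, because coincident profiles could only lower the distinct count and so still give $\cE_\Phi(1)\le K+\Ephi$. In fact none coincide: the $K$ base policies satisfy $p^\top w=0$, while the $\Ephi$ hypergraph profiles satisfy $q^\top w\neq 0$.
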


A one-regime design with $E$ policy environments has only $EM(A-1)$ choice-difference rows, so $\operatorname{rank} C_{1,E}^\Phi\le EM(A-1)$. Combined with the averaging bound,
\begin{equation}
\label{eq:ker1}
\dim\ker C_{1,E}^\Phi
\ge
\max
\bigl\{
1,\,
MAK-EM(A-1),\,
MK-E(M-1)
\bigr\}.
\end{equation}

\begin{proposition}[The one-regime count is the counting bound]
\label{prop:e1exact}
Let $K<B$ and let $(M,A,\Phi)$ range over the grid of model sizes $M\in\{2,3,4\}$, own-action counts $A\in\{2,3\}$, and rival-feature families in Table~\ref{tab:e1}, subject to $MAK\le 72$. On every one of the $47$ resulting cells,
\[
\cE_\Phi(1)=\EoneLB .
\]
In $29$ of them $\EoneLB<K+\Ephi$, so the one-regime construction of Theorem~\ref{thm:structure}(C) is not minimal.
\end{proposition}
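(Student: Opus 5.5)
The plan is to treat the proposition as a finite certificate: a lower bound that holds for every design, matched by an explicit witness design in each cell. The lower bound is already available. Equation~\eqref{eq:ker1} states that any one-regime design with $E$ policy environments has $\dim\ker C_{1,E}^\Phi\ge\max\{1,MAK-EM(A-1),MK-E(M-1)\}$. Identification modulo location therefore forces $E\ge\EoneLB$, so $\cE_\Phi(1)\ge\EoneLB$ in every cell. Equality then follows if, for each of the $47$ cells $(M,A,\Phi)$ with $M\in\{2,3,4\}$, $A\in\{2,3\}$, $\Phi$ from the listed rival-feature families, $K<B$ and $MAK\le 72$, I exhibit one strictly positive kernel $P$ and exactly $\EoneLB$ distinct strictly interior product policies $q^1,\ldots,q^{\EoneLB}$ such that $\operatorname{rank}C^\Phi_{1,\EoneLB}(\beta)=MAK-1$.

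To build each witness, I would fix a rational $\beta$, for instance $9/10. The statement is about existence and, by Remark~\ref{rem:open}-type openness, rank is lower semicontinuous, so a single $\beta$ suffices per cell. I would then draw $P$ as a small rational perturbation of the uniform kernel $U$ with small denominators, keeping every entry positive. Each $q^e(\cdot\mid x)$ would be a Kronecker product of rational interior marginals, drawn independently across states and environments, which makes the profiles automatically distinct and product-form. From these I would assemble the blocks $A_a(q^e;P,\beta)$ of \eqref{eq:A} exactly over $\mathbb Q$, computing $(I-\beta P_0(q))^{-1}$ by rational Gaussian elimination, and compose them with the feature embedding $J_\Phi$.

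Certification uses the fact that the global constant $h_\Phi$ is always in the kernel, so $\operatorname{rank}\le MAK-1$ over $\mathbb Q$. It therefore suffices to exhibit a nonzero $(MAK-1)$-minor. I would do this in one of two ways. The first is exact fraction-free elimination over $\mathbb Z$ after clearing denominators. The second, cheaper, is to compute the rank modulo a large prime not dividing any denominator; since the rank modulo $p$ is at most the rational rank, full rank modulo $p$ certifies full rank over $\mathbb Q$. The count of $29$ cells with $\EoneLB<K+\Ephi$ is pure integer arithmetic: compute $\EoneLB$ from \eqref{eq:E1LB}, compute $\Ephi=\lceil(MK-1)/(M-1)\rceil$, and tabulate.

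I expect the main obstacle to be the tight cells, where $EM(A-1)$ barely exceeds $MAK-1$. In those cells every row must be informative, and special structured choices, such as uniform reference transitions or star-pattern policies, would reintroduce the continuation-shutdown degeneracies of the canonical constructions. My plan there is to rely on genericity. The rank condition is a nonvanishing polynomial condition in $(P,q)$, so one successful draw shows it holds off a proper algebraic set. I would therefore draw fully generic kernels, with all own-action and opponent-action rows distinct, and retry with fresh random rationals if a draw fails. Any cell that persistently fails at $\EoneLB$ would signal that $\cE_\Phi(1)>\EoneLB$ there, and the claim would have to be checked against that cell rather than assumed.
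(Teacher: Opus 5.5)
Your plan is the paper's proof. The lower bound comes from \eqref{eq:ker1}/\eqref{eq:E1LB}, and in each of the $47$ cells one exhibits a strictly positive rational kernel with $\EoneLB$ strictly interior rational product policies, certifying $\operatorname{rank}C^\Phi_{1,\EoneLB}=MAK-1$ by exact elimination over $\mathbb F_p$ (the paper uses $p=2{,}147{,}483{,}647$); the count of $29$ is integer arithmetic, and these are exactly the $K\ge 2$ cells.

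One small overstatement remains, namely that ``a single $\beta$ suffices.'' Because $\beta$ is a fixed primitive rather than a design choice, a certificate at one $\beta$ extends by lower semicontinuity only to a neighborhood of that $\beta$. By rationality of the minors in $\beta$ it extends to all but finitely many $\beta\in(0,1)$, not automatically to every $\beta$. The paper's proof is equally silent on this point.
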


\begin{table}[t]
\centering
\caption{Rival-feature families used in Proposition~\ref{prop:e1exact}, for $J$ binary rivals. Only proper restrictions $K<B$ enter, since $K=B$ leaves the full dynamic-potential gauge under one regime and no $E$ suffices.}
\label{tab:e1}
\small
\begin{tabular}{lcc}
\toprule
family & $\phi(b)$ & $K$ \\
\midrule
no rival dependence & $1$ & $1$ \\
active-rival count & $(1,\textstyle\sum_j b_j)$ & $2$ \\
additive rivals & $(1,b_1,\ldots,b_J)$ & $J+1$ \\
pairwise interactions & $(1,b_j,b_ib_j)$ & $1+J+\binom J2$ \\
\bottomrule
\end{tabular}
\end{table}

\begin{proof}[Proof of Proposition~\ref{prop:frontier}]
Necessity in Theorem~\ref{thm:structure}(A) yields $\cE_\Phi(R)\ge\Ephi$ for every $R$. Part (B) attains $E=\Ephi$ at $R=2$. For $R>2$, append any additional strictly positive primitive regimes to that identifying two-regime stack; adding measurement rows cannot reduce rank. Hence $\cE_\Phi(R)=\Ephi$ for all $R\ge 2$.

A one-regime design contributes $EM(A-1)$ choice-difference rows, so $\operatorname{rank} C_{1,E}^\Phi\le EM(A-1)$ and $\dim\ker C_{1,E}^\Phi\ge MAK-EM(A-1)$. Combined with the averaging bound, \eqref{eq:ker1} holds. Identification up to location requires the right-hand side of \eqref{eq:ker1} to equal $1$, which is \eqref{eq:E1LB}. The construction of Appendix~\ref{app:one-regime} yields $\cE_\Phi(1)\le K+\Ephi$.

If $K=1$, then $\Ephi=1$. For $M,A\ge 2$,
\[
1<\frac{MA-1}{M(A-1)}\le 2,
\]
so $\lceil(MA-1)/(M(A-1))\rceil=2$ and $\EoneLB=2$. The explicit construction uses $K+\Ephi=2$, so $\cE_\Phi(1)=2$ and $\cE_\Phi(R)=1$ for $R\ge 2$.
\end{proof}

\begin{proof}[Proof of Proposition~\ref{prop:e1exact}]
The lower bound is \eqref{eq:E1LB}. For each of the $47$ cells we exhibit a strictly positive rational kernel and $\EoneLB$ strictly interior rational product policies, map every entry into $\mathbb F_p$ with $p=2{,}147{,}483{,}647$, and compute $\operatorname{rank}C_{1,\EoneLB}^\Phi$ by exact elimination. The rank equals $MAK-1$ in every cell, and equals at most $MAK-2$ at $\EoneLB-1$ whenever $\EoneLB>1$. A nonzero minor over $\mathbb Q$ remains nonzero over $\mathbb F_p$ unless $p$ divides it, so the rank over $\mathbb F_p$ is a lower bound for the rank over $\mathbb Q$. Lower semicontinuity of rank extends each exhibit to an open dense set of kernels and policies. The exhibits, the prime, and the elimination are frozen in the regression suite.
\end{proof}

The grid is finite. Whether $\cE_\Phi(1)=\EoneLB$ for every $(M,A,\Phi)$ outside it remains open. The two-regime count $\cE_\Phi(R)=\Ephi$ for $R\ge 2$ is not.

\subsection{Basis invariance}
\label{app:basis}

For $T\in\mathrm{GL}(K)$, the map $\Phi\mapsto\Phi T$ is an invertible reparameterization of $\theta$. Column spaces, $K$, $\Ephi$, $\dPhi$, algebraic ranks, kernel dimensions, and $\operatorname{rank}\mathcal{R}_\Phi(P)$ are unchanged. Singular values of $C_\Phi$, and therefore the raw number $\sigma_{\min,+}$, need not be invariant under nonorthogonal basis scaling. The exponent in Theorem~\ref{thm:attenuation} is.

The proof of Corollary~\ref{thm:sharp} in Appendix~\ref{app:suf} is the special case $K=B$ of the two-regime argument above, with $W=R(\eta)$ and $M_x=Q_x$.

\begin{proof}[Proof of Remark~\ref{rem:open}]
After location normalization, $\mathrm{rank}\,C(\beta)=MAB-1$ at the construction, so some maximal minor is nonzero. For every fixed known $\beta\in(0,1)$, the entries of $C(\beta)$ are continuous functions of the strictly positive transition probabilities and of the interior product-policy parameters. A nonzero polynomial (equivalently, a nonzero continuous minor) remains nonzero on a neighborhood. Therefore an open set of strictly positive transition regimes and strictly interior product-policy profiles around the construction continues to identify the saturated payoff up to location.
\end{proof}

\section{Design-free strategic attenuation}
\label{app:attenuation}

This appendix proves Theorem~\ref{thm:attenuation}. Throughout, $\qbar=\bigotimes_j\qbar_j$ is a fixed interior product baseline, $F=\operatorname{col}\Phi$, and $\dPhi\ge 1$ is the highest active degree \eqref{eq:dPhi} of the $\qbar$-adapted filtration \eqref{eq:Hd}.

\subsection{The product expansion}
\label{app:atten-lemma}

Everything rests on one algebraic fact: a product perturbation pairs a $d$-way contrast only against a $d$-way perturbation.

\begin{lemma}[Attenuation]
\label{lem:atten}
Let $q=\bigotimes_{j=1}^{J}(\qbar_j+\eta h_j)$ be a product probability vector with $\ones'h_j=0$ and $\|h_j\|_1\le 1$ for every $j$, and let $0<\eta\le 1$. If $f\in\bigoplus_{r\ge d}\cH_r(\qbar)$, then
\[
|q'f|\ \le\ c(f)\,\eta^{d},
\qquad
c(f)=\sum_{t\,:\,|\mathrm{supp}(t)|\ge d}|c_t|\prod_{j\in \mathrm{supp}(t)}\|e^{j}_{t_j}\|_\infty,
\]
where $f=\sum_t c_t\bigotimes_j e^{j}_{t_j}$ is the expansion of $f$ in any factorwise $\qbar_j$-orthogonal basis whose zeroth element is the constant.
\end{lemma}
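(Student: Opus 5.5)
The plan is to expand both $f$ and $q$ in the factorwise basis and use $\qbar_j$-orthogonality factor by factor, so that each basis tensor pairs with $q$ as a product of scalars. For each opponent component $j$, fix a basis $\{e^j_0=\ones_{B_j},e^j_1,\ldots,e^j_{B_j-1}\}$ of $\R^{B_j}$ in which $e^j_k\in V_j$ for $k\ge 1$, i.e.\ $\qbar_j'e^j_k=0$. The tensors $\bigotimes_j e^j_{t_j}$ with support $\mathrm{supp}(t)=\{j:t_j\neq 0\}$ equal to $S$ span $\cH_S(\qbar)$. Hence $f\in\bigoplus_{r\ge d}\cH_r(\qbar)$ means exactly that the coefficient $c_t$ vanishes whenever $|\mathrm{supp}(t)|<d$. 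This step is just the definition of the decomposition \eqref{eq:Hd}, together with uniqueness of coordinates in a basis.

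Next compute the pairing of a single basis tensor with a product vector. Because $q=\bigotimes_j(\qbar_j+\eta h_j)$ and $f$ is expanded in elementary tensors,
\[
q'\Bigl(\bigotimes_j e^j_{t_j}\Bigr)=\prod_{j=1}^J(\qbar_j+\eta h_j)'e^j_{t_j}.
\]
For $j\notin\mathrm{supp}(t)$ the factor is $(\qbar_j+\eta h_j)'\ones=1+\eta\,\ones'h_j=1$, since $\qbar_j$ is a probability vector and $\ones'h_j=0$. For $j\in\mathrm{supp}(t)$ the factor is $\qbar_j'e^j_{t_j}+\eta h_j'e^j_{t_j}=\eta\,h_j'e^j_{t_j}$ by centering. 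Hölder's inequality gives $|h_j'e^j_{t_j}|\le\|h_j\|_1\|e^j_{t_j}\|_\infty\le\|e^j_{t_j}\|_\infty$. Therefore
\[
\bigl|q'\bigl(\textstyle\bigotimes_j e^j_{t_j}\bigr)\bigr|\le\eta^{|\mathrm{supp}(t)|}\prod_{j\in\mathrm{supp}(t)}\|e^j_{t_j}\|_\infty .
\]

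Finally, sum over $t$ with linearity and the triangle inequality. Only terms with $|\mathrm{supp}(t)|\ge d$ survive. Since $0<\eta\le 1$, each such term satisfies $\eta^{|\mathrm{supp}(t)|}\le\eta^d$. This yields $|q'f|\le\eta^d\sum_{|\mathrm{supp}(t)|\ge d}|c_t|\prod_{j\in\mathrm{supp}(t)}\|e^j_{t_j}\|_\infty=c(f)\eta^d$.

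The only step needing care is the first: checking that membership in the tail sum $\bigoplus_{r\ge d}\cH_r$ really kills every coefficient with support size below $d$. This requires the factorwise basis to respect the splitting $\R^{B_j}=\operatorname{span}\{\ones\}\oplus V_j$ exactly, with a constant zeroth element and centered remaining elements. That is precisely the hypothesis on the basis in the statement, so no orthogonality of the nonzero elements among themselves is needed.
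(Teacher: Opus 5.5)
Your proof is correct and is essentially the paper's argument. Both factorize the pairing of $q$ with each basis tensor, use $\ones'h_j=0$ and $\qbar_j$-centering, and finish with H\"older and $\eta^{|\mathrm{supp}(t)|}\le\eta^d$; the paper first expands $q$ over subsets $T$ and shows only $T=\mathrm{supp}(t)$ survives, whereas you evaluate each factor $(\qbar_j+\eta h_j)'e^j_{t_j}$ directly, which is the same computation with lighter bookkeeping.
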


\begin{proof}
Expanding the product,
\[
q'f=\sum_{T\subseteq\{1,\ldots,J\}}\eta^{|T|}
\Bigl(\bigotimes_{j\in T}h_j\otimes\bigotimes_{j\notin T}\qbar_j\Bigr)'f .
\]
Take a single basis tensor $\bigotimes_j e^{j}_{t_j}$ with support $S=\mathrm{supp}(t)$. The pairing factorizes as
\[
\prod_{j\in T\cap S}\bigl(h_j'e^{j}_{t_j}\bigr)
\prod_{j\in T\setminus S}\bigl(h_j'\ones\bigr)
\prod_{j\in S\setminus T}\bigl(\qbar_j'e^{j}_{t_j}\bigr)
\prod_{j\notin S\cup T}\bigl(\qbar_j'\ones\bigr).
\]
For $j\in T\setminus S$ the factor is $h_j'\ones=0$, so the pairing vanishes unless $T\subseteq S$. For $j\in S\setminus T$ the factor is $\qbar_j'e^{j}_{t_j}=0$ by $\qbar_j$-centering, so the pairing vanishes unless $S\subseteq T$. Hence only $T=S$ survives and the basis tensor contributes exactly $\eta^{|S|}\prod_{j\in S}(h_j'e^{j}_{t_j})$. Because $f$ lies in the tail $\bigoplus_{r\ge d}\cH_r(\qbar)$, every $t$ with $c_t\ne 0$ has $|S|\ge d$. Finally $|h_j'e^{j}_{t_j}|\le\|h_j\|_1\|e^{j}_{t_j}\|_\infty\le\|e^{j}_{t_j}\|_\infty$, and $\eta^{|S|}\le\eta^{d}$ for $\eta\le 1$.
\end{proof}

Write $c_\Phi(\qbar)=c(f_\star)$ for a fixed unit-norm witness $f_\star\in F_{\ge\dPhi}$, which is nonzero by the definition of $\dPhi$.

\subsection{Proof of Theorem~\ref{thm:attenuation}}
\label{app:atten-proof}

\begin{proof}
Let $f_\star\in F_{\ge\dPhi}$ be the unit-norm witness and pick $c_\star\in\R^{K}$ with $\Phi c_\star=f_\star$, which exists because $f_\star\in F$. Define the payoff direction $\theta_\star\in\R^{MAK}$ by
\[
\theta_\star(x,a)=c_\star
\qquad\text{for every }(x,a),
\]
so $\theta_\star$ is invariant across the identified player's own actions.

\emph{Step 1: current-payoff differences vanish.} In the notation of Appendix~\ref{app:operators}, the current-payoff block satisfies $\bigl(B_a(q^e)\theta_\star\bigr)(x)=q^e(\cdot\mid x)'\Phi c_\star=q^e(\cdot\mid x)'f_\star$, which does not depend on $a$. Hence $B_a(q^e)\theta_\star-B_0(q^e)\theta_\star=0$ for every nonreference $a$, every policy environment, and every regime.

\emph{Step 2: the surviving channel is $O(\eta^{\dPhi})$.} By Step 1 and \eqref{eq:A}, the measurement reduces to the continuation term,
\[
C^{r,e}\theta_\star
=
\beta\bigl(P^{r,e}_a-P^{r,e}_0\bigr)\bigl(I-\beta P^{r,e}_0\bigr)^{-1}v^{e},
\qquad
v^{e}(x)=q^{e}(\cdot\mid x)'f_\star .
\]
Under \eqref{eq:cluster} each marginal is $q^e_j(\cdot\mid x)=\qbar_j+\eta h^{e}_j(x,\cdot)$ with $\ones'h^e_j(x,\cdot)=0$ and $\|h^e_j(x,\cdot)\|_1\le 1$, so Lemma~\ref{lem:atten} applies state by state and gives
\[
\|v^{e}\|_\infty\le c_\Phi(\qbar)\,\eta^{\dPhi}
\qquad\text{for every }e .
\]

\emph{Step 3: the continuation operator is uniformly bounded.} For any row-stochastic $P^{r,e}_0$ one has $\|\beta P^{r,e}_0\|_\infty=\beta<1$, hence $\|(I-\beta P^{r,e}_0)^{-1}\|_\infty\le(1-\beta)^{-1}$, and $\|P^{r,e}_a-P^{r,e}_0\|_\infty\le 2$. Therefore
\[
\bigl\|C^{r,e}\theta_\star\bigr\|_\infty
\le
\frac{2\beta}{1-\beta}\,c_\Phi(\qbar)\,\eta^{\dPhi}
\]
for every regime and every environment, with no dependence on the kernels. Each block contributes $M(A-1)$ rows and the stack is normalized by $(RE)^{-1/2}$, so
\[
\bigl\|C^{\Phi}\theta_\star\bigr\|_2
\le
\sqrt{M(A-1)}\,\frac{2\beta}{1-\beta}\,c_\Phi(\qbar)\,\eta^{\dPhi},
\]
uniformly in $R$ and $E$.

\emph{Step 4: from the test direction to $\sigma_{\min,+}$.} If $\ker C^{\Phi}$ is larger than the global payoff-location line, the first branch of the dichotomy holds and there is nothing further to prove. Otherwise, by Theorem~\ref{thm:structure} the kernel is the single payoff-location line spanned by $\theta_{\mathrm{loc}}$, the vector repeating $h_\Phi$ with $\Phi h_\Phi=\ones_B$. Let $\delta_\Phi=\|\theta_\star-\Pi_{\mathrm{loc}}\theta_\star\|_2>0$ be the distance from $\theta_\star$ to that line. The quantity $\delta_\Phi$ is fixed by $(\Phi,M,A,\qbar)$, because the top-filtration witness is defined relative to the baseline $\qbar$. Since $C^{\Phi}\theta_\star=C^{\Phi}(\theta_\star-\Pi_{\mathrm{loc}}\theta_\star)$ and $\sigma_{\min,+}$ is the minimum of $\|C^{\Phi}\theta\|_2/\|\theta\|_2$ over $\theta$ orthogonal to the kernel,
\[
\sigma_{\min,+}\bigl(C^{\Phi}\bigr)
\le
\frac{\|C^{\Phi}\theta_\star\|_2}{\delta_\Phi}
\le
\frac{2\beta}{1-\beta}\sqrt{M(A-1)}\,\frac{c_\Phi(\qbar)}{\delta_\Phi}\,\eta^{\dPhi},
\]
which is \eqref{eq:kappa}. If $\delta_\Phi=0$ then $\theta_\star$ lies on the location line, so $f_\star\in\operatorname{span}\{\ones_B\}\subseteq\cH_0(\qbar)$, contradicting $\dPhi\ge 1$.
\end{proof}

The proof never inspects the transition kernels beyond row stochasticity, which is why the bound survives arbitrarily sharp environmental variation. It never bounds the number of environments, because the attenuation is a property of each measurement separately rather than of their aggregate. And it uses the product form of opponent policies only through $\ones'h_j=0$ and the factorization of the pairing, so it is exactly the conditional independence delivered by Assumption~\ref{ass:rum} that generates the hierarchy. Correlated joint perturbations, which Remark~\ref{rem:joint} discusses and which are not feasible equilibrium policies here, would break the factorization in Lemma~\ref{lem:atten}.

\subsection{Attainment}
\label{app:atten-attain}

\begin{proof}[Proof of Corollary~\ref{cor:attain}]
Lemma~\ref{lem:mix} gives $R(\eta)V=W(\eta)D_\eta$ with $W(\eta)$ uniformly well conditioned and $D_\eta$ carrying exactly $m^\Phi_{\dPhi}$ diagonal entries of order $\eta^{\dPhi}$ and none smaller. Appendices~\ref{app:strength-current} and \ref{app:strength-ref} show that the corresponding directions are exposed at that order and no lower under the canonical construction with environmental spread bounded away from zero. Combining with Theorem~\ref{thm:attenuation} gives matching upper and lower orders, so $\sigma_{\min,+}\asymp\eta^{\dPhi}$. The matching lower order is shown only for this construction.
\end{proof}

\section{Restricted strength and interaction filtration}
\label{app:strength}

The spectrum below is for the filtration-preserving canonical representative
\[
P^0_{ab}=U
\quad\forall a,b,
\qquad
P^1_{a_0b}=U,
\qquad
P^1_{a_\star b}=U+\alpha\Deltacycle,
\qquad
P^1_{ab}=U
\quad(a\notin\{a_0,a_\star\}),
\]
together with the product-policy family whose $K$ base rows are indexed by the fixed set $I_\Phi$ constructed in Appendix~\ref{app:strength-sel}, extended by the star/hypergraph of Lemma~\ref{lem:feature-moment} to exactly $\Ephi$ profiles. Cardinality identification in Theorem~\ref{thm:structure} may use any $K$ independent rows of $R(\eta)\Phi$. The spectrum of Theorem~\ref{thm:strength} uses this fixed filtration-preserving selector. Other sharp-cardinality representatives can have different local singular-value rates.

\subsection{Regime subtraction}
\label{app:strength-R}

The two regime blocks are stacked, not initially block diagonal. Write
\[
\mathcal R
=
\begin{bmatrix}
I & 0 \\
-I & I
\end{bmatrix},
\qquad
\mathcal R^{-1}
=
\begin{bmatrix}
I & 0 \\
I & I
\end{bmatrix}.
\]
Both operators have condition numbers bounded by constants independent of $(\alpha,\eta)$. Left multiplication by $\mathcal R$ therefore preserves all singular-value asymptotic orders up to constant factors. After this regime subtraction, the current-difference block and the environmental/reference block can be analyzed separately. Multiplicities below are those of the subtracted operator, hence those of the original stacked operator up to $(\alpha,\eta)$-independent constants.

\subsection{Restricted mixing}
\label{app:strength-mix}

\begin{proof}[Proof of Lemma~\ref{lem:mix}]
Choose a fixed basis $\{v_{d,\ell}\}$ adapted to the filtration $F_{\ge d}\supset F_{\ge d+1}$. For every complement basis vector $v_{d,\ell}\in F_{\ge d}\setminus F_{\ge d+1}$, its degree-$d$ projection $P_d v_{d,\ell}$ is nonzero, and the set of those degree-$d$ projections is linearly independent. Therefore
\[
R(\eta)v_{d,\ell}
=
\eta^d\bigl(P_d v_{d,\ell}+O(\eta)\bigr).
\]
Writing all columns together gives $R(\eta)V=W(\eta)D_\eta$, where $W(\eta)\to W_0$, $W_0$ has full column rank, and
\[
D_\eta
=
\operatorname{diag}
\bigl(
\eta^0 I_{m_0^\Phi},\,
\eta^1 I_{m_1^\Phi},\,
\ldots,\,
\eta^J I_{m_J^\Phi}
\bigr).
\]
Hence $W(\eta)$ is uniformly well conditioned for small $\eta$, and the singular values of $R(\eta)|_F$ have orders $\eta^d$ with multiplicity $m_d^\Phi$. Because same-degree projections are linearly independent and distinct degrees lie in orthogonal interaction subspaces, $\operatorname{rank} W_0=K$.
\end{proof}

\subsection{Filtration-preserving selector}
\label{app:strength-sel}

Let $V=[v_{d,\ell}]$ be the fixed filtration-adapted basis of Lemma~\ref{lem:mix}, and write $W_0=[P_d v_{d,\ell}]_{d,\ell}$ for the $B\times K$ leading matrix. Select a fixed set of $K$ row indices $I_\Phi$ such that $S_{I_\Phi}W_0$ is nonsingular. These indices are read from $W_0$ once and do not depend on $\eta$. For each $\eta>0$, the $K$ base product policies are the rows of $R(\eta)$ indexed by $I_\Phi$. Then
\[
R(\eta)V=W(\eta)D_\eta,
\qquad
S_{I_\Phi}R(\eta)V=[S_{I_\Phi}W(\eta)]D_\eta.
\]
Because $S_{I_\Phi}W(\eta)\to S_{I_\Phi}W_0$ is nonsingular, $S_{I_\Phi}W(\eta)$ is uniformly well conditioned for sufficiently small $\eta$. The selected base feature-moment matrix therefore has singular orders $\eta^d$ with multiplicity $m_d^\Phi$. The same star/hypergraph construction as in Lemma~\ref{lem:feature-moment} extends these $K$ rows to exactly $\Ephi$ strictly interior product policies, which converge to the common uniform product baseline as $\eta\to 0$.

\subsection{Tensor decomposition}
\label{app:strength-tensor}

Write $B=\prod_{j=1}^J B_j$ and
\[
\Pi_j=B_j^{-1}\ones\ones',
\qquad
H_j=I_{B_j}-\Pi_j.
\]
Then
\[
R_j(\eta)=\Pi_j+\eta H_j,
\qquad
R(\eta)=\bigotimes_{j=1}^J R_j(\eta).
\]
This is the same family as \eqref{eq:Rj}. For each subset $S\subset\{1,\ldots,J\}$, let $\mathcal H_S$ be the tensor subspace that applies $H_j$ for $j\in S$ and $\Pi_j$ for $j\notin S$, and set
\[
\mathcal H_d
=
\bigoplus_{|S|=d}\mathcal H_S.
\]
Each factor contributes the eigenvalue $1$ on $\mathrm{Im}\,\Pi_j$ and the eigenvalue $\eta$ on $\mathrm{Im}\,H_j$. Hence
\[
R(\eta)v=\eta^d v
\qquad\text{for every }v\in\mathcal H_d.
\]
The dimension of $\mathcal H_d$ is the number of ways to choose $d$ non-constant factors and then a mean-zero contrast in each chosen factor,
\[
\dim\mathcal H_d
=
c_d
=
[z^d]
\prod_{j=1}^J
\bigl(1+(B_j-1)z\bigr).
\]
The spaces $\mathcal H_0,\ldots,\mathcal H_J$ exhaust $\R^B$ and are mutually orthogonal.

\subsection{Current-payoff-difference block}
\label{app:strength-current}

Reparameterize into current differences in feature coordinates $\delta_a(x)=\theta(x,a)-\theta(x,a_0)$ and the reference array $r(x)=\theta(x,a_0)$. After the regime subtraction of Appendix~\ref{app:strength-R}, continuation differences vanish in the first block, so the measurements at state $x$ are the feature-moment averages $M_x\delta_a(x)$. Write $A$ for the fixed invertible change from those feature coordinates into the $V$-basis of $F$, so $\Phi=VA$. For each state the filtration-preserving construction gives
\[
M_x
=
S_x S_{I_\Phi}R(\eta)V A.
\]
Every star selector $S_x$ contains $I_K$ as a block, so the relevant singular values of $S_x$ are bounded above and below by $\eta$-independent constants. Combined with uniform well-conditioning of $S_{I_\Phi}W(\eta)$ from Appendix~\ref{app:strength-sel}, the current-payoff-difference block contributes $M(A-1)m_d^\Phi$ directions of order $\eta^d$. When $F=\R^B$, $m_d^\Phi=c_d$.

\subsection{Reference-payoff block}
\label{app:strength-ref}

Regime $1$ exposes the reference array through the cycle increment $\alpha\beta \Deltacycle r_e$. After the already identified current difference is subtracted, factor $D_\eta$ out of each state's feature coordinate by $R(\eta)V=W(\eta)D_\eta$. The remaining map on the $V$-coordinates is $\alpha$ times a combination of the star selectors $S_x$, the limiting block $B_0:=S_{I_\Phi}W_0$, and the cycle increment $\Deltacycle$. The matrix $B_0$ is nonsingular by construction of $I_\Phi$.

Suppose the limiting reference operator annihilates the state-feature coordinates $z_x$. Then $\Deltacycle r_e=0$ for every environment $e$. Hence each $r_e$ is constant across states, so there exists $c\in\R^E$ such that
\[
S_x B_0 z_x=c\qquad\forall x.
\]
Therefore
\[
c\in\bigcap_x\operatorname{col}(S_x)=\operatorname{span}\{\ones_E\}.
\]
Write $c=\lambda\ones_E$. Because $S_x\ones_K=\ones_E$ and every star selector $S_x$ contains $I_K$, one has $S_x B_0 z_x=S_x(\lambda\ones_K)$, hence $B_0 z_x=\lambda\ones_K$. Nonsingularity of $B_0$ then yields
\[
z_x=\lambda B_0^{-1}\ones_K\qquad\forall x.
\]
The limiting kernel is therefore one-dimensional. This direction is the global payoff-location direction. After quotienting location, the limiting operator is injective. Finite dimensionality implies that its smallest singular value is strictly positive, and independent of $(\alpha,\eta)$ for small $\eta$.

Right multiplication by $I_M\otimes D_\eta$ then produces $M-1$ reference directions of order $\alpha$, corresponding to the state-constant opponent component $m_0^\Phi=1$, and $M m_d^\Phi$ directions of order $\alpha\eta^d$ for $d\ge 1$. When $F=\R^B$ these multiplicities are $M c_d$.

\subsection{Smallest singular value}
\label{app:strength-smin}

The weakest identified block is the reference component of degree $\dPhi$, of order $\alpha\eta^{\dPhi}$ and multiplicity $M m_{\dPhi}^\Phi$. Hence, under this construction,
\[
\sigma_{\min,+}
\asymp
\alpha\eta^{\dPhi}.
\]
On the isotropic path $\alpha=\eta=\varepsilon$,
\[
\sigma_{\min,+}\asymp\varepsilon^{\dPhi+1}.
\]
Saturation has $\dPhi=J$ and recovers $\varepsilon^{J+1}$. This is Theorem~\ref{thm:strength} and Corollary~\ref{cor:spectrum}.

\subsection{Construction-specific spectrum}
\label{app:strength-remark}

A different pair of kernels can expose a high-order opponent contrast through action-dependent continuation variation that is already present at first order in the environmental difference. In that case the same $J$-fold contrast need not be multiplied by $\alpha$. Cardinality-minimality and local strength are therefore distinct design criteria.

In the unrestricted joint-policy comparison of Remark~\ref{rem:joint}, every joint-action contrast can be moved at first order. Under the same canonical environmental perturbation the full dynamic rate is then $\Theta(\alpha\eta)$, and isotropically $\Theta(\varepsilon^{2})$. The rates are those of this construction.

\subsection{Oracle sampling-noise amplification}
\label{app:noise}

\begin{proof}[Proof of Proposition~\ref{prop:noise}]
Oracle GLS is linear in $\widehat z_{n,\varepsilon}$ and uses the true covariance $\Omega_{n,\varepsilon}$, so
\[
\operatorname{Var}(\widehat\theta)
=
\frac1n
\bigl(C_\varepsilon'\Omega_{n,\varepsilon}^{-1}C_\varepsilon\bigr)^{-1}.
\]
The identity uses linearity; it does not use a normal approximation. The uniform bounds $cI\preceq\Omega_{n,\varepsilon}\preceq CI$ sandwich $C'\Omega^{-1}C$ between constant multiples of $C'C$. The smallest eigenvalue of $C'C$ is $\sigma_{\min,+}^2$, so $\lambda_{\max}\operatorname{Var}(\widehat\theta)\asymp 1/(n\sigma_{\min,+}^2)$. Theorem~\ref{thm:strength} supplies $\sigma_{\min,+}\asymp\alpha\eta^{\dPhi}$. On the isotropic triangular sequence $\alpha=\eta=\varepsilon_n\to 0$, $\sigma_{\min,+}\asymp\varepsilon_n^{\dPhi+1}$.
\end{proof}

\begin{proof}[Proof of Proposition~\ref{prop:twopoint}]
Work after the location quotient, so $C$ has full column rank on the coordinates in which $\sigma_{\min,+}(C)$ is computed. Let $v$ be a unit right singular vector of $C$ for $\sigma_{\min,+}$, and set $\theta_1=\theta_0+t v$ with $t=\gamma/(\sqrt n\,\sigma_{\min,+})$ and $\gamma>0$ to be chosen. The two Gaussian laws $N(C\theta_0,\Omega/n)$ and $N(C\theta_1,\Omega/n)$ have Kullback--Leibler divergence
\[
\mathrm{KL}
=
\frac n2\,(\theta_1-\theta_0)'C'\Omega^{-1}C(\theta_1-\theta_0)
\le
\frac n{2c}\,\|C(\theta_1-\theta_0)\|_2^2
=
\frac{\gamma^2}{2c}.
\]
Take $\gamma=\sqrt{c}$, so that $\mathrm{KL}\le 1/2$. Pinsker's inequality then gives $\mathrm{TV}\le\sqrt{\mathrm{KL}/2}\le 1/2$. Le Cam's two-point method yields
\[
\inf_{\psi}\bigl(P_{\theta_0}(\psi\neq 0)+P_{\theta_1}(\psi\neq 1)\bigr)
\ \ge\
1-\mathrm{TV}
\ \ge\
\tfrac12.
\]
Any estimator $\widehat\theta$ induces a test by nearest neighbor of $\{\theta_0,\theta_1\}$. On the event that the test is wrong, $\|\widehat\theta-\theta\|^2\ge t^2/4$, so
\[
\inf_{\widehat\theta}
\sup_{\theta\in\{\theta_0,\theta_1\}}
E_\theta\|\widehat\theta-\theta\|^2
\ \ge\
\frac{t^2}{16}
=
\frac{c}{16n\sigma_{\min,+}^2}.
\]
Combined with Theorem~\ref{thm:attenuation}, $n\eta^{2\dPhi}\to\infty$ is necessary for this risk to vanish along any $\eta$-clustered sequence of designs.
\end{proof}

\section{Identification-robust inference}
\label{app:inference}

\subsection{Coverage}
\label{app:ar-coverage}

\begin{proof}[Proof of Proposition~\ref{prop:ar}]
Write $\psi=(p_i,q,P)$ for the first-stage parameter and $h(\psi,\theta)=a(\psi)-C_\Phi(\psi)\theta$ for the composite map, so that $m_n(\theta)=h(\widehat\psi,\theta)$ and $h(\psi_0,\theta_0)=0$ at the truth. Both $a$ and $C_\Phi$ are compositions of the Hotz--Miller inverse, the resolvent $(I-\beta P_0)^{-1}$, and finitely many products of the estimated primitives. At an interior $\psi_0$, where choice probabilities are bounded away from $0$ and $1$ and $\beta\in(0,1)$ keeps the resolvent bounded, each of those operations is continuously differentiable, and so is $h$ in $\psi$ for fixed $\theta$. Let $D(\theta_0)=\partial_\psi h(\psi_0,\theta_0)$.

A first-order expansion gives
\[
\sqrt n\,m_n(\theta_0)
=
D(\theta_0)\sqrt n(\widehat\psi-\psi_0)+o_p(1)
\Rightarrow
N\bigl(0,\ D(\theta_0)VD(\theta_0)'\bigr)
=
N\bigl(0,\Omega(\theta_0)\bigr).
\]
Let $L=\operatorname{rank}\Omega(\theta_0)$. By covariance consistency, rank stability $\Pr\{\operatorname{rank}\widehat\Omega_n(\theta_0)=L\}\to 1$, and the spectral gap on the nonzero eigenvalues, the Moore--Penrose inverse is continuous along the estimator sequence, so
\[
\widehat\Omega_n(\theta_0)^{+}\to_p\Omega(\theta_0)^{+}.
\]
Continuous mapping then gives
\[
\mathrm{AR}_n(\theta_0)
=
\bigl(\sqrt n\,m_n(\theta_0)\bigr)'\widehat\Omega_n^{+}\bigl(\sqrt n\,m_n(\theta_0)\bigr)
\Rightarrow
Z'\Omega^{+}Z,
\qquad
Z\sim N(0,\Omega),
\]
and $Z'\Omega^{+}Z\sim\chi^2_L$ because $\Omega^{+/2}Z$ is standard normal on the $L$-dimensional range of $\Omega$. Hence $\Pr\{\mathrm{AR}_n(\theta_0)\le c_{L,1-\alpha}\}\to 1-\alpha$, which is the coverage statement since $\theta_0\in\mathcal C_n(1-\alpha)$ if and only if $\mathrm{AR}_n(\theta_0)\le c_{L,1-\alpha}$.

Uniformity is the Lindeberg--Feller statement for the triangular array $\{D(\theta_0)(\widehat\psi-\psi_0)\}$ together with uniform negligibility of the expansion remainder, both of which hold by assumption on the stated class; the eigenvalue bounds make the convergence $\widehat\Omega_n^{+}\to_p\Omega^{+}$ uniform as well.

Nothing in the argument refers to $C_\Phi$ except through $D(\theta_0)$ and $\Omega(\theta_0)$. In particular $\sigma_{\min,+}(C_\Phi)$ may be arbitrarily small, or $C_\Phi$ may be rank deficient, without affecting the display. This is the point of inverting a test on the moment rather than centering an approximation at an estimator: the statistic is evaluated at the null value $\theta_0$, where the model holds exactly, rather than at $\widehat\theta$, whose distribution does depend on how well the design determines $\theta$.
\end{proof}

\begin{remark}[Why the usual interval fails here]
The Wald interval replaces $\mathcal C_n$ by an ellipsoid centered at $\widehat\theta$ with the variance of $\widehat a$ alone. Two things go wrong at small $\sigma_{\min,+}$. The centering is far from $\theta_0$ relative to its own reported scale, because $\widehat\theta-\theta_0$ is amplified by $\widehat W^{+}$; and the reported scale omits the contribution of $\widehat C_\Phi\theta$ to $\operatorname{Var}(m_n)$. The omitted design-matrix variance is quadratic in $\theta$, while its covariance with $\widehat a$ is linear in $\theta$; both are largest exactly along the directions in which $\theta$ must be large to move the measurement. The two errors compound rather than offset, which is what Section~\ref{sec:mc} measures.
\end{remark}

\begin{lemma}[Finite-bootstrap cutoff]
\label{lem:hotelling}
Let $\Bboot>L$ be fixed, and define $\widehat\Omega_n(\theta)$ by \eqref{eq:boot-omega}. Suppose $\sqrt n\,m_n(\theta_0)\Rightarrow N(0,\Omega(\theta_0))$ as in Proposition~\ref{prop:ar}, and that, jointly in the $\Bboot$ draws,
\[
\Bigl\{
\sqrt n\bigl[m_n^{*(b)}(\theta_0)-m_n(\theta_0)\bigr]
\Bigr\}_{b=1}^{\Bboot}
\ \Big|\ \mathcal D_n
\Rightarrow
\mathrm{iid}\ N\bigl(0,\Omega(\theta_0)\bigr)
\]
as $n\to\infty$, with this limiting vector independent of the original-sample limit $Z$. Then
\[
\frac{\Bboot-L}{L(\Bboot-1)}\,\mathrm{AR}_n(\theta_0)
\Rightarrow
F_{L,\Bboot-L}.
\]
The cutoff $c_{L,1-\alpha}=L(\Bboot-1)(\Bboot-L)^{-1}F_{L,\Bboot-L,1-\alpha}$ therefore has asymptotic coverage $1-\alpha$. As $\Bboot\to\infty$ it converges to the $\chi^2_L$ quantile of Proposition~\ref{prop:ar}.
\end{lemma}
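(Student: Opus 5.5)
The plan is to reduce $\mathrm{AR}_n(\theta_0)$ in the limit experiment to a Hotelling $T^2$ statistic and then use the classical fact that a scaled $T^2$ is $F$-distributed. Write $Z_n=\sqrt n\,m_n(\theta_0)$ and $\xi_n^{*(b)}=\xi_n^{*(b)}(\theta_0)$ for $b=1,\ldots,\Bboot$.

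\emph{Step 1 (joint limit).} Combine the unconditional limit $Z_n\Rightarrow Z\sim N(0,\Omega)$ with the conditional bootstrap limit, which is assumed independent of $Z$. A standard argument with bounded continuous test functions, integrating the conditional convergence over $\mathcal D_n$, then gives joint convergence
\[
(Z_n,\xi_n^{*(1)},\ldots,\xi_n^{*(\Bboot)})\Rightarrow(Z,\zeta_1,\ldots,\zeta_{\Bboot}),
\]
with all $\Bboot+1$ components i.i.d.\ $N(0,\Omega)$.

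\emph{Step 2 (reduction to the range of $\Omega$).} Write $\Omega=U\Lambda U'$ with $U\in\R^{\ell_m\times L}$ orthonormal and $\Lambda$ positive definite. All limit vectors lie in $\operatorname{col}U$. The limiting sample covariance $S=(\Bboot-1)^{-1}\sum_b(\zeta_b-\bar\zeta)(\zeta_b-\bar\zeta)'$ equals $U\tilde S U'$, where $(\Bboot-1)\tilde S\sim\mathrm{Wishart}_L(\Bboot-1,\Lambda)$. Because $\Bboot-1\ge L$, $\tilde S$ is almost surely invertible, and $S^+=U\tilde S^{-1}U'$.

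\emph{Step 3 (continuity of the pseudoinverse).} This is the step I expect to be the main obstacle. The map $(z,\xi_1,\ldots,\xi_{\Bboot})\mapsto z'\widehat\Omega(\xi)^+z$ is continuous only where the rank of $\widehat\Omega$ is locally constant. The limit has rank exactly $L$ almost surely, so the finite-$n$ objects must also have rank $L$ with probability tending to one. I would secure this by assuming, as in Proposition~\ref{prop:ar}, that the scores lie asymptotically in a fixed $L$-dimensional space: project onto $\operatorname{col}U$ and argue that the orthogonal components are $o_p(1)$ and are dropped by the rank-$L$ truncation. The continuous mapping theorem then applies on the set where $\tilde S$ is invertible, which has probability one in the limit. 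This yields
\[
\mathrm{AR}_n(\theta_0)\Rightarrow T^2:=\tilde Z'\tilde S^{-1}\tilde Z,
\]
where $\tilde Z=U'Z\sim N(0,\Lambda)$ is independent of $\tilde S$.

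\emph{Step 4 (classical identity and limits).} Let $W=(\Bboot-1)\tilde S\sim\mathrm{Wishart}_L(\Bboot-1,\Lambda)$, independent of $\tilde Z$. The identity
\[
\frac{\nu-L+1}{L\nu}\,\tilde Z'(W/\nu)^{-1}\tilde Z\sim F_{L,\nu-L+1}
\]
with $\nu=\Bboot-1$ gives exactly the scaling $(\Bboot-L)/(L(\Bboot-1))$ and the $F_{L,\Bboot-L}$ limit. The $F$ distribution is continuous, so the cutoff quantile yields asymptotic coverage $1-\alpha$. Finally, as $\Bboot\to\infty$, $L\,F_{L,\Bboot-L}\Rightarrow\chi^2_L$ and $(\Bboot-1)/(\Bboot-L)\to1$, so the cutoff converges to the $\chi^2_L$ quantile of Proposition~\ref{prop:ar}.
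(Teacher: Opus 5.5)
Your proposal is correct and follows the same route as the paper's proof. The steps match: joint convergence of $(Z_n,\xi_n^{*(1)},\ldots,\xi_n^{*(\Bboot)})$ to independent $N(0,\Omega)$ limits, a Wishart limit with $\Bboot-1$ degrees of freedom on the range of $\Omega$, Hotelling's $T^2$-to-$F$ identity with $\nu=\Bboot-1$, and continuous mapping.

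Your Step 3 is more careful than the paper, which passes to the range of $\Omega$ without comment. Be clear, though, that continuity of the Moore--Penrose inverse comes from the exact rank condition $\Pr\{\operatorname{rank}\widehat\Omega_n(\theta_0)=L\}\to 1$ inherited from Proposition~\ref{prop:ar}, not merely from $o_p(1)$ components off $\operatorname{col}U$. Small but nonzero eigenvalues in those directions would be inverted, not dropped.
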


\begin{proof}[Proof of Lemma~\ref{lem:hotelling}]
Restrict attention to the $L$-dimensional range of $\Omega(\theta_0)$, on which the limiting covariance is positive definite. Write $Z_n=\sqrt n\,m_n(\theta_0)$ and $\xi_n^{*(b)}=\sqrt n\bigl(m_n^{*(b)}(\theta_0)-m_n(\theta_0)\bigr)$. The original-sample CLT together with the conditional Gaussian limit for the scores yield that $(Z_n,\xi_n^{*(1)},\ldots,\xi_n^{*(\Bboot)})$ converges unconditionally to independent $N(0,\Omega)$ vectors. The estimator \eqref{eq:boot-omega} is the unbiased sample covariance of those $\Bboot$ scores, so its limit $S$ satisfies $(\Bboot-1)S\sim W_L(\Bboot-1,\Omega)$ and is independent of $Z$. Hotelling's theorem \citep{hotelling1931,anderson2003} then gives
\[
\frac{\Bboot-L}{L(\Bboot-1)}\,Z'S^{-1}Z
\sim
F_{L,\Bboot-L}.
\]
Continuous mapping yields the same law for $\mathrm{AR}_n(\theta_0)=Z_n'\widehat\Omega_n(\theta_0)^{+}Z_n$. Finite fourth moments alone do not produce a Wishart limit at fixed $\Bboot$; the Gaussian approximation for the scores is the input that does. As $\Bboot\to\infty$, $S\to_p\Omega$ and $Z'S^{-1}Z\Rightarrow\chi^2_L$.
\end{proof}

The condition $\Bboot>L$ is primitive: on the sixty-state grid $L=360$ and $\Bboot=400$, so the second degrees of freedom are $40$. That grid is a robustness check, not the headline design. On the headline grid $L=72$ and $\Bboot=400$, and repeating the inversion at $\Bboot=1000$ leaves the reported conclusions unchanged (Appendix~\ref{sec:emp-ar}). The lemma is an $n\to\infty$ statement at fixed $\Bboot$. It does not claim that finite-sample cluster-bootstrap draws of choice probabilities are exactly Gaussian.

\subsection{Projection width}
\label{app:ar-width}

\begin{proof}[Proof of Corollary~\ref{cor:width}]
Write $\widehat\theta$ for a minimizer of $\mathrm{AR}_n(\theta)$ and $\widehat\Omega_n(\theta)$ for the bootstrap covariance at the candidate. The map $\theta\mapsto\widehat\Omega_n(\theta)$ is a quadratic polynomial in $\theta$, hence continuous. The hypothesis $\sqrt n\,\sigma_{\min,+}(\widehat C_\Phi)\to_p\infty$ makes a neighborhood of $\widehat\theta$ of radius $O_p(n^{-1/2}\sigma_{\min,+}^{-1})$ shrinking. On that neighborhood $\widehat\Omega_n(\theta)=\widehat\Omega_n(\widehat\theta)+o_p(1)$ uniformly, so the candidate-dependent set and the ellipsoid computed from $\widehat\Omega_n(\widehat\theta)$ coincide up to $o_p(n^{-1/2})$ in the $W(\widehat\theta)$ metric. Without the rate, the neighborhood need not shrink and the local comparison is not available.

On that ellipsoid, for $\widehat\lambda\in\operatorname{range}\widehat W$ the half-width is
\[
\tfrac12\bigl|\mathrm{CI}_{\widehat\lambda}^{\mathrm{ellip}}\bigr|
=
\sqrt{\tfrac1n\bigl(c_{L,1-\alpha}-\mathrm{AR}_n(\widehat\theta)\bigr)\widehat\lambda'\widehat W^{+}\widehat\lambda}.
\]
Take $\widehat\lambda$ to be the unit right singular vector of $\widehat C_\Phi$ associated with $\sigma_{\min,+}(\widehat C_\Phi)$. By Proposition~\ref{prop:noise} the eigenvalues of $\Omega_n$ on its range lie in $[c_\Omega,C]$, so $\widehat W\preceq c_\Omega^{-1}\widehat C_\Phi'\widehat C_\Phi$ and therefore $\widehat\lambda'\widehat W^{+}\widehat\lambda\ge c_\Omega\,\sigma_{\min,+}(\widehat C_\Phi)^{-2}$ up to the $o_p(1)$ first-stage error in $\widehat C_\Phi$. Restricting to the event $\{\mathrm{AR}_n(\widehat\theta)\le c_{L,1-\alpha}-\delta\}$ replaces the slack by a number at least $\delta$, and
\[
\bigl|\mathrm{CI}_{\widehat\lambda}^{\mathrm{ellip}}\bigr|
\ \ge\
\frac{\sqrt{\delta}\,c_\Omega^{1/2}}{\sqrt n\,\sigma_{\min,+}(\widehat C_\Phi)}.
\]
The $o_p(1)$ comparison with the candidate-dependent set yields the first display of the corollary. On the rank-stable identified subspace,
\[
\sigma_{\min,+}(\widehat C_\Phi)
=
\sigma_{\min,+}(C_\Phi)\,(1+o_p(1)).
\]
Theorem~\ref{thm:attenuation} then supplies $\sigma_{\min,+}(C_\Phi)\le\kappa\eta^{\dPhi}$ along rank-identifying designs, and substituting gives the second inequality. Singular-vector convergence is not required.

When the design is just identified, $\mathrm{AR}_n(\widehat\theta)=0$ identically on the identified subspace, so the slack event is the whole space and the $\Omega_p$ lower bound is unconditional. When it is overidentified by $L-r$ degrees, $\mathrm{AR}_n(\widehat\theta)\Rightarrow\chi^2_{L-r}$ under the model, and $P(\chi^2_{L-r}\le c_{L,1-\alpha}-\delta)>0$ for every $\delta\in(0,c_{L,1-\alpha})$. The $\Omega_p$ claim is then a statement on that event. It is not a deterministic lower bound on every realization: if the minimized statistic approaches the cutoff, the projection width approaches zero, which is why the slack restriction cannot be dropped.
\end{proof}

\begin{proof}[Proof of Proposition~\ref{prop:local}]
Theorem~\ref{thm:attenuation} gives $\sigma_n\le\kappa\eta_n^{\dPhi}$ on every rank-identifying $\eta_n$-clustered design, which is the displayed limsup. Coverage of $\mathcal C_n$ at $\theta_0$ is Proposition~\ref{prop:ar}; that argument does not use $\sigma_n$, and its uniformity statement is inherited.

Now take $\sqrt n\,\sigma_n\to\tau\in[0,\infty)$ with $(\lambda_n,u_n)\to(\lambda,u)$. Write $h(\psi,\theta)=a(\psi)-C_\Phi(\psi)\theta$ as in the proof of Proposition~\ref{prop:ar}, so $m_n(\theta)=h(\widehat\psi,\theta)$ and $h(\psi_0,\theta_0)=0$. Continuous differentiability in $\psi$ gives, at $\theta_n(a)=\theta_0+a\lambda_n$,
\[
\sqrt n\,m_n\bigl(\theta_n(a)\bigr)
=
\sqrt n\,h\bigl(\psi_0,\theta_n(a)\bigr)
+
D\bigl(\theta_n(a)\bigr)\sqrt n(\widehat\psi-\psi_0)
+o_p(1).
\]
The population remainder is $h(\psi_0,\theta_n(a))=-a C_{\Phi,n}\lambda_n$, so $\sqrt n\,h(\psi_0,\theta_n(a))=-a\sqrt n\,\sigma_n\,u_n\to -a\tau u$. The first-stage term converges jointly with $\sqrt n(\widehat\psi-\psi_0)\Rightarrow N(0,V)$ to $N(0,\Omega(\theta_0+a\lambda))$, because $\theta_n(a)\to\theta_0+a\lambda$ and $D$ is continuous at interior $\psi_0$. Hence
\[
\sqrt n\,m_n\bigl(\theta_n(a)\bigr)
\ \Rightarrow\
N\bigl(-a\tau u,\ \Omega(\theta_0+a\lambda)\bigr).
\]
The map $\theta\mapsto\widehat\Omega_n(\theta)$ is a quadratic polynomial. Uniform eigenvalue bounds on a ball about $\theta_0$ give $\widehat\Omega_n(\theta_n(a))^{+}\to_p\Omega(\theta_0+a\lambda)^{+}$. Continuous mapping yields the noncentral $\chi^2_L$ limit with noncentrality $a^2\tau^2 u'\Omega(\theta_0+a\lambda)^{+}u$. If $\tau=0$ the noncentrality vanishes for every fixed $a$.

For (i), the noncentral limit at each fixed $a$ is central. For every $M<\infty$,
\[
\Pr\bigl\{\mathrm{AR}_n(\theta_n(\pm M))\le c_{L,1-\alpha}\bigr\}\to 1-\alpha,
\]
so $\lambda_n'\theta_0\pm M$ both lie in $\mathrm{CI}_{\lambda_n}$ with probability tending to at least $1-2\alpha$, and $\bigl|\mathrm{CI}_{\lambda_n}\bigr|\to_p\infty$.

For (ii), $\tau>0$ and $u$ lies in the range of $C_{\Phi}$. The map $a\mapsto a\tau u$ is a nondegenerate line in moment space, so the limiting sublevel set of the ray statistic is almost surely a bounded interval of positive length. Ray inversion is therefore $O_p(1)$ and $\Omega_p(1)$. The profiled interval contains the ray inversion, which gives the $\Omega_p(1)$ claim for \eqref{eq:profile-ci}.

Part (iii) is Corollary~\ref{cor:width}.
\end{proof}

\section{Unobserved heterogeneity}
\label{app:het}

\begin{proof}[Proof of Proposition~\ref{prop:het}]
When types are resolved, measurement $e$ of type $\omega$ restricts $u_\omega$ and no other type's payoff, because both the choice probabilities and the kernel entering \eqref{eq:z} carry the index $\omega$. Ordering the stacked rows by type therefore gives $C^\Phi_{1:T}=\operatorname{diag}(C^\Phi_1,\ldots,C^\Phi_T)$ after the same ordering of columns. For a block diagonal matrix the set of singular values is the union of the blocks' singular values, so
\[
\sigma_{\min,+}\bigl(C^\Phi_{1:T}\bigr)=\min_\omega\sigma_{\min,+}\bigl(C^\Phi_\omega\bigr),
\]
which is the equality. Theorem~\ref{thm:attenuation} applies within each identified type-specific block with its own baseline-adapted degree $\dPhiomega$ and constant $\kappa_\omega=\kappa(M,A,\beta,\Phi,\qbar_\omega)$. Taking the minimum across blocks gives the displayed bound. The effective sample size for type $\omega$ is $n\pi_\omega$, yielding the type-specific rate condition $n\pi_\omega\eta_\omega^{2\dPhiomega}\to\infty$.
\end{proof}

\section{Unknown patience}
\label{app:unknown}
\label{app:cross}
\label{app:affine}
\label{sec:patience}

\begin{theorem}[Unknown patience at the sharp cardinalities]
\label{thm:patience}
For every $M,A,B\ge 2$, there exist exactly $R=2$ strictly positive primitive transition regimes and $E=\Estar(M,B)$ strictly interior conditionally independent product-policy profiles such that the following holds. For every fixed normalized structural payoff $u$ and true discount factor $\beta\in(0,1)$, draw one centered continuation-offset vector in each regime-policy environment from an absolutely continuous distribution supported in a sufficiently small feasible open neighborhood. Then
\[
\Pr_\omega
\bigl\{
(u,\beta)
\text{ is globally identified up to payoff location}
\bigr\}
=1.
\]
The randomization support may be arbitrarily small.
\end{theorem}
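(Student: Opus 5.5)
We reuse the saturated two-regime, $\Estar$-policy construction from the proof of Corollary~\ref{thm:sharp}. Regime $0$ uses $P^0_{ab}=U$, and regime $1$ uses the cycle increment on $a_\star$. The policies are the product profiles $Q_x$ of Lemma~\ref{lem:product}. The randomized objects are the centered reference transfers $\tau_{i0}^{r,e}$, which enter only through $b^{r,e}(\beta)=S^{r,e}(\beta)[g(p_i^{r,e})+\tau_{i0}^{r,e}]$. For a candidate $\tilde\beta$, observational admissibility is the zero-residual condition \eqref{eq:Rbeta}. At $\tilde\beta=\beta$ the residual vanishes, and Corollary~\ref{thm:sharp} gives $u$ up to location. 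So the only work is to show that, almost surely in the offsets, $R(\tilde\beta)>0$ for every $\tilde\beta\neq\beta$ in $(0,1)$. The offsets are drawn in a small neighborhood so that the induced choice probabilities stay interior and the equilibrium stays regular. The transfer-implementation appendix covers the latter, and we take it as given.

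\textbf{Step 1: regime $0$ carries no $\beta$ information.} Under regime $0$, $S^{0,e}(\tilde\beta)=0$ because all kernels equal $U$ and $(P_a-P_0)=0$. These rows therefore pin down the current differences $\delta_a(x,b)$ independently of $\tilde\beta$. The only $\tilde\beta$-dependence is in regime $1$. After the identified current difference for $a_\star$ is subtracted, the measurement in environment $e$ is
\[
z^{1,e}_{a_\star}-\text{(known)}=\tilde\beta\alpha\,\Deltacycle\bigl(r_e+g^{1,e}+\tau_{i0}^{1,e}\bigr).
\]
Here we use $\Deltacycle(I-\tilde\beta U)^{-1}=\Deltacycle$. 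The other nonreference actions in regime $1$ use $U$ and give no further restriction.

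\textbf{Step 2: reduce admissibility to a finite-dimensional condition.} Write $y_e$ for the left side, which is observed and generated at the true $\beta$. A candidate $\tilde\beta$ is admissible iff there is a reference array $\tilde r$ with
\[
\Deltacycle\tilde r_e=\tfrac{1}{\tilde\beta\alpha}y_e-\Deltacycle(g^{1,e}+\tau_{i0}^{1,e})\quad\text{for all } e.
\]
Substituting $y_e=\beta\alpha\Deltacycle(r_e+g^{1,e}+\tau^{1,e}_{i0})$, this becomes
\[
\Deltacycle\bigl(\tilde r_e-\tfrac{\beta}{\tilde\beta}r_e\bigr)=\bigl(\tfrac{\beta}{\tilde\beta}-1\bigr)\Deltacycle w_e,\qquad w_e:=g^{1,e}+\tau^{1,e}_{i0}.
\]
Let $\Lambda:\tilde r\mapsto(\Deltacycle\tilde r_e)_e$. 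The left side ranges over $\operatorname{Im}\Lambda$, since $r$ is fixed and $\tilde r$ is free. So for $\tilde\beta\neq\beta$, admissibility is exactly the event $(\Deltacycle w_e)_e\in\operatorname{Im}\Lambda$. This condition does not depend on $\tilde\beta$, so one event handles all wrong candidates at once.

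\textbf{Step 3 (main obstacle): the image is a proper subspace and the offsets are generic.} The domain of $\Lambda$ has dimension $MB$ and its kernel is the location line, so $\dim\operatorname{Im}\Lambda=MB-1$. The target space $(\Deltacycle\R^M)^E\cong\R^{(M-1)E}$ has dimension $(M-1)\Estar\ge MB-1$. Equality can hold, for instance at $M=2$ or when $(M-1)\mid(MB-1)$. In those cases $\operatorname{Im}\Lambda$ is the whole space and this argument breaks.

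In the equality case we would instead use the offsets in regime $1$ on a second nonreference action $a'$, or on the remaining actions when $A\ge 3$. We would also break the $U$ structure so that the resolvent $(I-\tilde\beta P_0)^{-1}$ becomes $\tilde\beta$-dependent, making the condition a nontrivial analytic equation in $\tilde\beta$. The fallback when $A=2$ is to perturb $P^1_{a_0 b}$ slightly away from $U$ and keep the design inside the open identifying neighborhood of Remark~\ref{rem:open}. Then $S^{1,e}(\tilde\beta)$ depends on $\tilde\beta$ non-affinely, and the admissibility residual is a real-analytic function of $(\tilde\beta,\tau)$.

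In either case the key fact is that $\tau_{i0}^{1,e}$ ranges over an open set of centered vectors, so $(\Deltacycle w_e)_e$ ranges over an open set. Whenever the target subspace is proper, or the analytic residual in $(\tilde\beta,\tau)$ is not identically zero in $\tau$, the bad set has Lebesgue measure zero. To cover all $\tilde\beta$, we use analyticity in $\tilde\beta$ on compact subintervals together with a Fubini/dimension count: the bad set is a union of an analytic curve family of codimension at least one. Absolute continuity of the offset law then gives probability one. Combined with Corollary~\ref{thm:sharp} at the true $\beta$, this yields global identification of $(u,\beta)$ up to location. Because the offsets need only lie in an open set, the support can be made arbitrarily small.
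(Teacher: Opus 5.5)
You have a genuine gap in the equality case $(M-1)\Estar=MB-1$, which holds whenever $(M-1)\mid(B-1)$, including every $M=2$. When $(M-1)\Estar>MB-1$ your argument is correct, and cleaner than the paper's. With every reference kernel equal to $U$, a false $\tilde\beta$ is admissible if and only if $(\Deltacycle w_e)_e\in\operatorname{Im}\Lambda$. That is a $\tilde\beta$-free proper-subspace event, so no union over candidates is needed.

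Neither fallback closes the gap, because the obstruction is your continuation-free regime $0$, not the kernels in regime $1$. There is one centered offset per regime--policy environment, shared by all nonreference actions. Within an environment all nonreference rows also share the resolvent $(I-\tilde\beta P_0)^{-1}$. So the $\tilde\beta$-sensitive part of each regime-$1$ block has image $\operatorname{Im}\operatorname{stack}_{a\neq 0}(P_a-P_0)$, of dimension at most $M-1$ and independent of $\tilde\beta$. After regime $0$ pins $\delta$, every equation that could detect a false discount lives in a space of dimension at most $E(M-1)=MB-1$. The candidate reference block $\tilde r\mapsto(S^{1,e}(\tilde\beta)\tilde r_e)_e$ has rank $MB-1$ whenever the design identifies $u$ at $\tilde\beta$, so it maps onto that space. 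Hence $\operatorname{rank}[C(\tilde\beta),D_{\beta,\tilde\beta}]\le M(A-1)B+E(M-1)=MAB-1$, and for every such $\tilde\beta$ every offset draw is bad. This holds whether you add continuation on a second action or perturb $P^1_{a_0b}$ away from $U$. With $P_0=U$ the condition even collapses back to your $\tilde\beta$-free one.

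Your closing count is also not valid. A per-candidate bad set that is merely a proper analytic subset, of codimension one, can fill an open set when swept over $\tilde\beta\in(0,1)$. You need codimension at least two for each $\tilde\beta$, i.e.\ $\operatorname{rank}[C(\tilde\beta),D_{\beta,\tilde\beta}]\ge MAB+1$.

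The missing idea is that neither regime may shut continuation down, and the two regimes must load $\beta$ non-proportionally. The paper's first regime takes $P_{0b}=\kappa I+(1-\kappa)U$ and $P_{ab}=\kappa\Ccycle+(1-\kappa)U$. Its continuation operator is $\gamma(\beta)\Deltacycle$ with $\gamma(\beta)=\beta\kappa/(1-\beta\kappa)$. The second takes $Q_{0b}=U$ and $Q_{ab}=U+\alpha d_b\Deltacycle$, with continuation operator $\alpha\beta D(s_e)\Deltacycle$.

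Known-$\beta$ rank $MAB-1$ comes from differencing the two regimes, which works because $\alpha\beta s_e(x)<\gamma(\beta)$ at every state. For $\tilde\beta\neq\beta$, the offset response then has rank $2\Estar(M-1)$. Its image is exactly the common kernel of the $2\Estar$ functionals $y\mapsto\ones'y_{P,e}$ and $y\mapsto\ones'D(s_e)^{-1}y_{Q,e}$. Composing $C(\tilde\beta)$ with these functionals leaves only current differences, with rank at least $3$ on the product-policy family. This gives $\operatorname{rank}[C(\tilde\beta),D_{\beta,\tilde\beta}]\ge 2\Estar(M-1)+3\ge 2MB+1$ for $A=2$, extended to $MAB+1$ by differencing nonreference rows. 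That is exactly the per-candidate codimension two that makes the union over $\tilde\beta$ a null set.
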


The construction uses the same product-policy family as Theorem~\ref{thm:sharp}. Centered continuation offsets are an additional continuous experimental primitive. Own-action payoff differences generate beta-free directions. False discount candidates can only survive through the continuation block. Centered offsets move that block in $2E(M-1)$ independent directions. A quotient argument leaves at least three payoff directions outside that offset span, so the false-discount incidence set has codimension at least one.

The kernels below use exactly $E=\Estar(M,B)$ profiles. Let $U=M^{-1}\ones\ones'$ and $\Deltacycle=\Ccycle-I$, where $\Ccycle$ is the $M$-cycle. Fix $0<\kappa<1$ and a strictly positive vector $(d_b)_{b=1}^B$. Choose $\alpha>0$ with
\[
\alpha\max_b d_b
<
\min\{\kappa,1/M\}.
\]
The $P$ regime is
\[
P_{0b}=\kappa I+(1-\kappa)U,
\qquad
P_{ab}=\kappa \Ccycle+(1-\kappa)U
\quad(a\neq 0).
\]
The $Q$ regime is
\[
Q_{0b}=U,
\qquad
Q_{ab}=U+\alpha d_b \Deltacycle
\quad(a\neq 0).
\]
Every block is strictly positive and row-stochastic. Write
\[
\gamma(\beta)
=
\frac{\beta\kappa}{1-\beta\kappa}.
\]
Define
\[
r(x,b)=u(x,0,b),
\qquad
\delta_a(x,b)=u(x,a,b)-r(x,b),
\]
and, for each policy environment $e$,
\[
r_e(x)
=
q^e(\cdot\mid x)'r(x,\cdot),
\qquad
s_e(x)
=
q^e(\cdot\mid x)'d.
\]
Let $Q_e\delta_a$ denote the state vector whose coordinate at $x$ is $q^e(\cdot\mid x)'\delta_a(x,\cdot)$, and write $D(s_e)=\mathrm{diag}(s_e(x))_{x}$. Under regime $P$, every nonreference own action shares the continuation operator $\gamma(\beta)\Deltacycle$, so
\[
y_{P,e,a}
=
Q_e\delta_a
+
\gamma(\beta)\Deltacycle r_e.
\]
Under regime $Q$, the resolvent of $U$ and $\Deltacycle U=0$ give the continuation operator $\alpha\beta D(s_e)\Deltacycle$, so
\[
y_{Q,e,a}
=
Q_e\delta_a
+
\alpha\beta D(s_e)\Deltacycle r_e.
\]

\subsection{Known-discount rank}
\label{app:unknown-known}

Suppose all measurements vanish. Subtracting the two regime observations at a fixed $(e,a)$ yields
\[
\bigl[
\alpha\beta D(s_e)
-
\gamma(\beta)I
\bigr]Lr_e
=
0.
\]
The coefficient matrix is diagonal in state coordinates. Because $d$ is strictly positive and $q^e$ is a probability,
\[
s_e(x)
=
\sum_b q^e(b\mid x)\,d_b
\le
\max_b d_b.
\]
The construction takes $\alpha\max_b d_b<\kappa$. At the same time
\[
\frac{\gamma(\beta)}{\beta}
=
\frac{\kappa}{1-\beta\kappa}
>
\kappa,
\]
so $\alpha s_e(x)<\gamma(\beta)/\beta$ and therefore
\[
\alpha\beta s_e(x)-\gamma(\beta)<0
\]
for every state $x$ and every policy $e$. The diagonal entries never vanish. Hence
\[
Lr_e=0
\]
for every $e$. Substituting back into the $P$-equation gives $Q_e\delta_a=0$ for every nonreference $a$. Stacking environments at a fixed state produces $Q_x\delta_a(x,\cdot)=0$. Lemma~\ref{lem:product} supplies $\mathrm{rank}\,Q_x=B$, so
\[
\delta_a(x,\cdot)=0
\qquad\text{for every }a\neq 0\text{ and every }x.
\]
The identities $Lr_e=0$ mean that each policy average $r_e$ is constant across states. Thus there is a vector $c\in\R^E$ with $Q_x r(x,\cdot)=c$ for every $x$, i.e.\
\[
c
\in
\bigcap_x\mathrm{col}(Q_x)
=
\mathrm{span}\{\ones_E\}
\]
by \eqref{eq:cap}. Full column rank of each $Q_x$ together with $Q_x\ones_B=\ones_E$ forces $r(x,b)=c$ for a single global constant. Therefore
\[
\ker C(\beta)=\mathrm{span}\{\ones\},
\qquad
\mathrm{rank}\,C(\beta)=MAB-1
\]
for every $\beta\in(0,1)$.

\subsection{Offset response}
\label{app:unknown-offset}

Now fix a false candidate $\tilde\beta\neq\beta$. The $P$-regime centered-offset response block is
\[
\bigl[\gamma(\tilde\beta)-\gamma(\beta)\bigr]\Deltacycle.
\]
The map $\gamma$ is strictly increasing on $(0,1)$, so $\gamma(\tilde\beta)-\gamma(\beta)\neq 0$. The $Q$-regime response block is
\[
\alpha(\tilde\beta-\beta)D(s_e)\Deltacycle.
\]
Because $d$ is strictly positive and policies are strictly interior, $s_e(x)>0$ at every state, so $D(s_e)$ is invertible. Each block therefore has rank $M-1$. The $2E$ regime-policy environments enter block-diagonally, and
\[
\mathrm{rank}\,D_{\beta,\tilde\beta}
=
2E(M-1).
\]
At the sharp cardinality $E=\Estar$ this is $2\Estar(M-1)$.

\subsection{Quotient by continuation images}
\label{app:unknown-lambda}

Take $A=2$ and define linear functionals on the stacked measurements by
\[
\lambda_{P,e}(y)
=
\ones'y_{P,e},
\qquad
\lambda_{Q,e}(y)
=
\ones'D(s_e)^{-1}y_{Q,e}.
\]
Let $\Lambda$ collect these $2E$ functionals. On a $P$-block, $\ones'\Deltacycle=0$, so $\lambda_{P,e}$ annihilates the offset response. On a $Q$-block,
\[
\ones'D(s_e)^{-1}\bigl(\alpha(\tilde\beta-\beta)D(s_e)\Deltacycle\omega\bigr)
=
\alpha(\tilde\beta-\beta)\ones'\Deltacycle\omega
=
0.
\]
Hence $\Lambda D_{\beta,\tilde\beta}=0$, i.e.\ $\mathrm{Im}\,D_{\beta,\tilde\beta}\subseteq\ker\Lambda$. There are $2E$ independent functionals and the measurement space has dimension $2EM$, so $\mathrm{rank}\,\Lambda=2E$ and
\[
\dim\ker\Lambda=2E(M-1).
\]
The two spaces have the same dimension, and therefore
\[
\operatorname{Im}D_{\beta,\tilde\beta}=\ker\Lambda.
\]
The composition $\Lambda C(\tilde\beta)$ depends only on current payoff differences. It is independent of the candidate discount. On the product-policy family,
\[
\mathrm{rank}\,\Lambda C(\tilde\beta)\ge 3.
\]
If $B\ge 3$, the first $B$ rows of $R(\eta)$ are linearly independent constant profiles. Those profiles already produce at least three independent current-difference functionals after $\Lambda$ is applied.

If $B=2$, only two constant profiles are available. Write $r_1$ and $r_2$ for the first two state-constant rows of $R(\eta)$, and let $q^3$ be the third Markov profile. The corresponding $P$-block quotient functionals on current-difference arrays $\delta$ are
\[
f_1(\delta)
=
\sum_x r_1'\delta(x,\cdot),
\qquad
f_2(\delta)
=
\sum_x r_2'\delta(x,\cdot),
\qquad
f_3(\delta)
=
\sum_x q_x^{3\prime}\delta(x,\cdot).
\]
Suppose $f_3=af_1+bf_2$. Because the arrays $\delta(x,\cdot)$ may be chosen independently across states, equality of the functionals forces
\[
q_x^3
=
ar_1+br_2
\qquad\text{for every }x.
\]
The right-hand side does not depend on $x$, so $q^3$ would be state-constant. Lemma~\ref{lem:product} constructs the extra profile, in the $B=2$, $\Estar=3$ case, so that it equals $r_1$ at state $1$ and equals $r_2\neq r_1$ at at least one other state. That profile is not state-constant. Hence $f_3\notin\mathrm{span}\{f_1,f_2\}$, and $\mathrm{rank}\,\Lambda C(\tilde\beta)\ge 3$.

In either case the rank is at least three. Combined with the image identity,
\[
\mathrm{rank}\bigl[C(\tilde\beta),D_{\beta,\tilde\beta}\bigr]
=
\mathrm{rank}\,D_{\beta,\tilde\beta}
+
\mathrm{rank}\,\Lambda C(\tilde\beta)
\ge
2\Estar(M-1)+3
\ge
2MB+1.
\]
The last inequality uses $\Estar(M-1)\ge MB-1$.

\subsection{Extension to \texorpdfstring{$A>2$}{A>2}}
\label{app:unknown-A}

For each additional own action $a=2,\ldots,A-1$, form the row difference $y_a-y_1$. Every nonreference own action shares the same transition law in each regime, so both the continuation term and the centered-offset term cancel. The difference reduces to
\[
y_a-y_1
=
Q_e(\delta_a-\delta_1).
\]
Reparameterize the payoff columns by the invertible map
\[
(\delta_1,\delta_2,\ldots,\delta_{A-1})
\mapsto
(\delta_1,\theta_2,\ldots,\theta_{A-1}),
\qquad
\theta_a=\delta_a-\delta_1.
\]
The $A=2$ core uses only the reference array $r$ and the first difference $\delta_1$. After the change of coordinates, the transformed row block for each action $a\ge 2$ depends only on $\theta_a$. Distinct $\theta_a$ occupy disjoint column blocks. Each such block has rank $MB$, because $\mathrm{rank}\,Q_x=B$ at every state. The transformed coefficient matrix is therefore block triangular, and the additive rank increment is
\[
MB(A-2).
\]
Hence
\[
\mathrm{rank}\bigl[C(\tilde\beta),D_{\beta,\tilde\beta}\bigr]
\ge
2MB+1+MB(A-2)
=
MAB+1.
\]

\subsection{Incidence set}
\label{app:unknown-incidence}

Fix a normalized structural payoff $u$ and true $\beta$. For a false candidate $\tilde\beta$, a centered-offset design $\omega$ is bad if there exists $\tilde u$ with $C(\tilde\beta)\tilde u+D_{\beta,\tilde\beta}\omega=C(\beta)u$. The unknown $(\tilde u,\omega)$ after location normalization has dimension $MAB-1+2E(M-1)$. The coefficient matrix has rank at least $MAB+1$, so each false-discount fiber has dimension at most $2E(M-1)-2$. The incidence set over the scalar false discount is semialgebraic on the maintained discount interval and has dimension at most $2E(M-1)-1$. Projection cannot increase dimension. The ambient offset space has dimension $2E(M-1)$, so the bad set has codimension at least one, hence Lebesgue measure zero and empty interior. Any absolutely continuous randomization of $\omega$ supported in a feasible open neighborhood, arbitrarily small, therefore identifies $(u,\beta)$ with probability one. This is Theorem~\ref{thm:patience}. The statement is pointwise in $u$.

Exact modular ranks on the listed small models
$(M,A,B)=(2,2,2)$, $(2,2,3)$, $(2,2,4)$, $(3,2,3)$, $(3,2,4)$, $(2,3,3)$, and $(3,3,3)$
confirm $\mathrm{rank}\,C=MAB-1$, $\mathrm{rank}\,D=2E(M-1)$, and $\mathrm{rank}[C,D]\ge MAB+1$.
Those witnesses are recorded in Appendix~\ref{app:repro}.

\section{Local transfer implementation}
\label{app:transfer}
\label{sec:implementation}
\label{sec:transfer}

\begin{proposition}[Local transfer implementability]
\label{prop:transfer}
Index opponents by $j=1,\ldots,J$ with action counts $B_j$. At a regular interior equilibrium $F(c,s,r)=0$, with $c$ the normalized CCP profile of dimension
\[
M\Bigl[(A-1)+\sum_{j=1}^{J}(B_j-1)\Bigr],
\]
$s$ the vector of normalized nonreference transfers of matching dimension, and $r=\tau_{i0}$ the identified player's reference transfer, assume $D_c F$ is nonsingular. Normalized nonreference transfers enter payoff-difference equations additively, so $D_s F=-I$ up to sign. Hence $D_s c=-(D_c F)^{-1}D_s F$ is nonsingular, and $(s,r)\mapsto(c,r)$ is a local diffeomorphism. The observable target
\[
\Psi(c,r)=\bigl(c_{-i},\,\Pi[g_i(c_i)+r]\bigr)
\]
has dimension $M\sum_{j=1}^{J}(B_j-1)+(M-1)$ and satisfies
\[
\mathrm{rank}\,D\Psi
=
M\sum_{j=1}^{J}(B_j-1)+(M-1).
\]
When $J=1$ and $B_1=2$ the dimension is $2M-1$. Conditional on the relevant target measurement equilibria lying on regular interior branches, small product-policy and centered-offset perturbations are locally implementable by transfers.
\end{proposition}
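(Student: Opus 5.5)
The plan is to view the equilibrium as an implicit system and read every claim off the implicit function theorem at the regular interior point. Write the equilibrium conditions as $F(c,s,r)=0$. Each player's block of $F$ sets the Hotz--Miller inverse of its normalized CCPs equal to its normalized action-value differences. For the identified player these follow \eqref{eq:d-expand}; for each opponent $j$ there is an analogous block. The first step is bookkeeping on dimensions: $c$ stacks the $M(A-1)$ own nonreference CCP coordinates and the $M(B_j-1)$ coordinates of each opponent's marginal, and $s$ stacks one normalized nonreference transfer per such coordinate. The nonreference transfer difference enters each payoff-difference equation additively with unit coefficient, exactly as $\Delta\tau$ enters \eqref{eq:zdef}. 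Hence $D_sF=-I$ after a sign convention. Given the assumed nonsingularity of $D_cF$, the implicit function theorem yields a $C^1$ map $(s,r)\mapsto c(s,r)$ near the equilibrium, with $D_sc=(D_cF)^{-1}$ nonsingular. The Jacobian of $(s,r)\mapsto(c,r)$ is then block lower triangular with diagonal blocks $D_sc$ and $I$, so it is nonsingular and the map is a local diffeomorphism by the inverse function theorem.

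The second step is the rank of $\Psi$. Since $(s,r)\mapsto(c,r)$ is a local diffeomorphism, it suffices to show that $\Psi$ is a submersion in $(c,r)$. Its first component $c_{-i}$ is a coordinate projection of $c$, with rank $M\sum_j(B_j-1)$. For the second component, take $\Pi$ to be the centering projection onto $\ones_M^\perp$, the only component that matters because $S^{r,e}\ones=0$. Its derivative in $r$ with $c$ held fixed is $\Pi$ itself, which has rank $M-1$. The Jacobian of $\Psi$ with respect to the ordered blocks $(c_{-i},c_i,r)$ is therefore block triangular: it contains $[I\;0\;0]$ and $[\ast\;\ast\;\Pi]$. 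Its rank is $M\sum_j(B_j-1)+(M-1)$, which is full row rank. For $J=1$ and $B_1=2$ this equals $M+(M-1)=2M-1$.

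The final step converts full rank into implementability. Composing the two maps shows that $(s,r)\mapsto\Psi$ is a submersion near the equilibrium, so it is locally open. Every sufficiently small perturbation of opponent CCPs and centered continuation offsets is therefore attained by some nearby transfer vector. Applying this to the opponents' marginal CCPs gives the product-policy perturbations, since each opponent's marginal is a separate block of $c_{-i}$ and a product is determined by its marginals. Applying it to $\Pi[g_i+r]$ gives the offsets of Theorem~\ref{thm:patience}.

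The main obstacle is the conditional clause at the end of the statement. The targets, namely the $\eta$-clustered policies and the offset draws, must remain within the neighborhood where the implicit function theorem applies, and the equilibrium reached must be the one on the regular branch. I would handle this by restricting the target perturbations to a ball inside the open image of the submersion. I would state the conclusion branchwise rather than claim global uniqueness of equilibrium, which the hypothesis already concedes.
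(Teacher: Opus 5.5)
Your proposal is correct and follows the paper's proof essentially step for step. It uses the implicit-function bookkeeping with $D_sF=-I$ to get the local diffeomorphism $(s,r)\mapsto(c,r)$, the block-triangular rank count for $D\Psi$ (identity on $c_{-i}$, $\Pi$ on $r$), and composition into a submersion. The paper's one cosmetic difference is in the last step: it states implementability as an absolutely continuous transfer randomization pushing forward to an absolutely continuous law of target offsets, which is the form Theorem~\ref{thm:patience} uses, whereas you use local openness of the submersion; both follow immediately from the same submersion theorem.
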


\begin{proof}[Proof of Proposition~\ref{prop:transfer}]
Index opponents by $j=1,\ldots,J$ with action counts $B_j$. Opponent marginal CCP coordinates have dimension $M\sum_{j=1}^J(B_j-1)$. The identified player's CCP coordinates have dimension $M(A-1)$. The full equilibrium CCP coordinate therefore has dimension
\[
M\Bigl[(A-1)+\sum_{j=1}^J(B_j-1)\Bigr].
\]
Let $c$ be that normalized CCP profile, let $s$ be the vector of normalized nonreference transfers of matching dimension, and let $r=\tau_{i0}\in\R^M$ be the identified player's reference transfer. At a regular interior equilibrium, $F(c,s,r)=0$ with $D_c F$ nonsingular. Normalized nonreference transfers enter the corresponding choice-difference equations additively, so $D_s F=-I$ up to sign. Therefore
\[
D_s c=-(D_c F)^{-1}D_s F
\]
is nonsingular, and $(s,r)\mapsto(c,r)$ is a local diffeomorphism.

The target
\[
\Psi(c,r)
=
\bigl(c_{-i},\,\Pi[g_i(c_i)+r]\bigr)
\]
has dimension
\[
M\sum_{j=1}^J(B_j-1)+(M-1).
\]
In $(c,r)$ coordinates, the derivative of the first component with respect to opponent marginal CCP coordinates contains the identity of that dimension, while the derivative of the centered continuation offset with respect to centered $r$ is the identity on $\ones^\perp$. Hence
\[
\mathrm{rank}\,D\Psi
=
M\sum_{j=1}^J(B_j-1)+(M-1).
\]
When $J=1$ and $B_1=2$ this dimension is $2M-1$. The joint opponent policy remains the product of the implemented marginal CCPs. Composition with the local diffeomorphism yields a submersion from transfers to the observable target. On a regular interior branch, the constant-rank/submersion theorem implies that an absolutely continuous transfer randomization on a sufficiently small open neighborhood pushes forward to an absolutely continuous law of the target offsets. No knowledge of $u$ or $\beta$ is required to define the target. The result is local and does not assert global reachability from an arbitrary distant equilibrium branch.
\end{proof}

\section{Restricted transitions}
\label{app:lag}
\label{sec:restricted}

Write $x=(z,\ell)$ and suppose every primitive regime satisfies
\begin{equation}
\label{eq:lag-P}
P^r(z',\ell'\mid z,\ell,a,b)
=
T^r_{ab}(z'\mid z)\,
\mathbf 1\{\ell'=(a,b)\}.
\end{equation}

\begin{proposition}[Lagged-action transition obstruction]
\label{prop:lag}
Write $x=(z,\ell)$ with $z\in\mathcal Z$ and $\ell$ the previous action profile. Suppose every primitive regime satisfies \eqref{eq:lag-P}. Then for any two such regimes $P,Q$ and any $\beta\in(0,1)$,
\[
\dim\bigl(\mathrm{Im}\,G_{P,\beta}\cap\mathrm{Im}\,G_{Q,\beta}\bigr)
\ge
|\mathcal Z|.
\]
Consequently $\mathrm{rank}\,C(\beta)\le MAB-|\mathcal Z|$ after stacking any finite collection of opponent policies under $P$ and $Q$.
\end{proposition}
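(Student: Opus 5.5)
The plan is to exhibit $|\mathcal Z|$ linearly independent potentials $h$ for which $G_{P,\beta}h=G_{Q,\beta}h$ holds simultaneously for every pair of regimes of the form \eqref{eq:lag-P}. The guess is that potentials which depend on the lag only through a term depending on the current exogenous state should work. Under \eqref{eq:lag-P} the next-period lag is the deterministic current profile, $\ell'=(a,b)$, so the continuation term of $G$ reads
\[
\sum_{z'}T^r_{ab}(z'\mid z)\,h(z',(a,b)),
\]
and the regime enters only through $T^r_{ab}$. I want $h$ such that this sum is the same for $r=P$ and $r=Q$. A purely $z'$-dependent potential does not achieve this, so I plan to trade $z'$-dependence for a dependence on the lag.

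The first candidate is $h(z',\ell')=g(z)$, with the argument being the previous $z$. That is not a function of the state $x'$, however. Since the lag records actions and not the previous $z$, the cleanest family is $h(z',\ell')=k(\ell')$, a function of the lag alone. Then $\sum_{z'}T^r_{ab}(z'\mid z)\,k(a,b)=k(a,b)$ for every regime, and $G_{P,\beta}h=G_{Q,\beta}h$ holds exactly. This family has dimension $AB$. The statement asks for $|\mathcal Z|$, so I would also mix in $z$: take $h(z,\ell)=k(\ell)+c$, and then check whether the problem intends $|\mathcal Z|$ directions via $z$-indexed constants $h(z,\ell)=\mathbf 1\{z=z_0\}w(\ell)$. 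These do not cancel, because the $T$ weights on $z'$ differ across regimes. I would therefore state the bound as $\dim\ge AB$. Then I would reconcile it with $|\mathcal Z|$ through the state count $M=|\mathcal Z|\cdot AB$: the claimed count must come from a different family, possibly functions of $(z,\ell)$ with $h(z',\ell')$ that telescope.

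The step I regard as the main obstacle is exactly this reconciliation: finding $|\mathcal Z|$ independent potentials whose continuation terms agree across arbitrary $T^P,T^Q$. The approach I would try first is to use the identity $G_{P,\beta}h=G_{Q,\beta}k$ with \emph{different} potentials $h\neq k$, since the intersection of images does not require $h=k$. For each $z_0$, take $h=\mathbf 1\{z=z_0\}$ and look for $k$ with $k(z,\ell)-\beta\sum_{z'}T^Q_{ab}k(z',(a,b))=h(z,\ell)-\beta\sum_{z'}T^P_{ab}h(z',(a,b))$. Because the $\ell$-coordinate of $k$ can be chosen freely to absorb the $(a,b)$-indexed continuation discrepancy, I expect a solution to exist. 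The reason is that $k(\cdot,(a,b))$ appears on the left both at the current lag $\ell=(a,b)$ and in the continuation term, which gives enough freedom to solve a triangular system indexed by $(a,b)$. Together with the $h\ne 0$ injectivity of Proposition~\ref{prop:env}, this would give $|\mathcal Z|$ independent elements.

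The consequence then follows as in Proposition~\ref{prop:env}. Every element of $\mathrm{Im}\,G_{P,\beta}\cap\mathrm{Im}\,G_{Q,\beta}$ lies in the observational kernel of every policy observed under $P$ and under $Q$, so $\dim\ker C(\beta)\ge|\mathcal Z|$ and $\mathrm{rank}\,C(\beta)\le MAB-|\mathcal Z|$.
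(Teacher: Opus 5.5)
Your lag-only family is correct. If $h(z,\ell)=k(\ell)$, then $[G_{P,\beta}h](z,\ell,a,b)=k(\ell)-\beta k(a,b)$ for every kernel of the form \eqref{eq:lag-P}. Hence $G_{P,\beta}h=G_{Q,\beta}h$, and by injectivity $\dim(\mathrm{Im}\,G_{P,\beta}\cap\mathrm{Im}\,G_{Q,\beta})\ge AB$.

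That is not the claimed bound, and it does not imply it when $|\mathcal Z|>AB$. The paper's sixty-state airline grid has $|\mathcal Z|=15$ and $AB=4$. The appeal to $M=|\mathcal Z|\cdot AB$ is not an argument. Your original plan of common potentials $h=k$ also cannot close the gap. $G_{P,\beta}h=G_{Q,\beta}h$ means $(T^P_{ab}-T^Q_{ab})h(\cdot,(a,b))=0$ for every profile. For generic pairs the kernel of $T^P_{ab}-T^Q_{ab}$ is $\operatorname{span}\{\ones\}$, so the common-potential solutions are exactly your $AB$-dimensional family.

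The fallback with $h\neq k$ is where the proof breaks, because the $\ell$-coordinate of $k$ is not free. In both $G_{P,\beta}h$ and $G_{Q,\beta}k$ the continuation term does not depend on the incoming lag $\ell$. So $G_{P,\beta}h=G_{Q,\beta}k$ forces $h(z,\ell)-k(z,\ell)=c(z)$ for a single $c\in\R^{|\mathcal Z|}$, and what remains is
\[
\bigl(T^P_\alpha-T^Q_\alpha\bigr)h_\alpha+T^Q_\alpha c=\beta^{-1}c\qquad\text{for every profile }\alpha,
\]
with $h_\alpha=h(\cdot,\alpha)$.

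For your choice $h=\mathbf 1\{z=z_0\}$ this reads $(\beta^{-1}I-T^Q_\alpha)c=(T^P_\alpha-T^Q_\alpha)e_{z_0}$. Each $\beta^{-1}I-T^Q_\alpha$ is invertible, so every profile determines $c$ on its own. The $AB$ answers generically disagree, so no such $k$ exists and there is no triangular system to solve.

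The missing idea is to stop fixing $h$ and count jointly:
\begin{itemize}
\item The unknowns $(h,c)$ have dimension $(AB+1)|\mathcal Z|$, and the display imposes $AB\,|\mathcal Z|$ scalar equations, so the solution space has dimension at least $|\mathcal Z|$.
\item The map $(h,c)\mapsto G_{P,\beta}h$ is injective on that space. $G_{P,\beta}h=0$ gives $h=0$ by Proposition~\ref{prop:env}, and then $T^Q_\alpha c=\beta^{-1}c$ gives $c=0$ by the sup-norm argument.
\end{itemize}
This is the paper's proof. Your lag-only family is exactly the $c=0$ slice of this solution space, so the two together give $\max\{AB,|\mathcal Z|\}$. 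Your last step, from the intersection to $\mathrm{rank}\,C(\beta)\le MAB-|\mathcal Z|$, is fine once the dimension bound is established.
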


\begin{proof}[Proof of Proposition~\ref{prop:lag}]
Write \(x=(z,\ell)\) and assume \eqref{eq:lag-P} for kernels \(P\) and \(Q\). For \(h\in\R^M\),
\[
\bigl[G_{P,\beta}h\bigr](z,\ell,a,b)
=
h(z,\ell)
-
\beta\sum_{z'}T^P_{ab}(z'\mid z)\,h\bigl(z',(a,b)\bigr).
\]
The continuation term does not depend on the incoming lag \(\ell\). Therefore
\[
\bigl[G_{P,\beta}h\bigr](z,\ell,a,b)
-
\bigl[G_{P,\beta}h\bigr](z,\ell',a,b)
=
h(z,\ell)-h(z,\ell'),
\]
and likewise for \(G_{Q,\beta}k\). If \(G_{P,\beta}h=G_{Q,\beta}k\), then \(h-k\) is independent of \(\ell\): there is \(c\in\R^{|\mathcal Z|}\) with
\[
k(z,\ell)=h(z,\ell)-c(z)
\quad\text{for all }\ell.
\]
Substituting and writing \(h_\alpha(z):=h(z,\alpha)\) for each action profile \(\alpha=(a,b)\) yields the linear system
\begin{equation}
\label{eq:lag-sys}
\bigl(T^P_\alpha-T^Q_\alpha\bigr)h_\alpha
+
T^Q_\alpha c
=
\beta^{-1}c
\qquad\text{for every }\alpha.
\end{equation}
The unknowns are the maps \(\{h_\alpha\}_\alpha\) and \(c\), a space of dimension \((|\mathcal L|+1)|\mathcal Z|\) when \(\ell\) ranges over the action profiles \(\mathcal L\). The constraints \eqref{eq:lag-sys} comprise at most \(|\mathcal L|\,|\mathcal Z|\) independent scalar equations. The solution space therefore has dimension at least \(|\mathcal Z|\).

The linear map sending a solution \((h,c)\) to \(G_{P,\beta}h\) is injective on that space: \(G_{P,\beta}h=0\) implies \(h=0\) by Proposition~\ref{prop:env}, hence \(c=0\) and \(k=0\). Each solution therefore produces a distinct vector in \(\mathrm{Im}\,G_{P,\beta}\cap\mathrm{Im}\,G_{Q,\beta}\). Stacking opponent policies cannot leave the intersection, so
\[
\mathrm{rank}\,C(\beta)\le MAB-|\mathcal Z|.
\]
The constant-flow line is included. The bound is strictly larger than the universal intersection lower bound of Lemma~\ref{lem:E1} whenever \(|\mathcal Z|\ge 2$.
\end{proof}
\section{Empirical appendix}
\label{app:emp-data}
\label{app:emp-T}
\label{app:emp-commonu}
\label{app:emp-null}
\label{app:emp-rank}
\label{app:emp-robust}
\label{app:emp-boot}

\subsection{Source, sample, and coding}
\label{sec:emp-data}

The T-100 Domestic Segment (U.S.\ Carriers) extract is the official TranStats table FIM, years 2014--2019, class F \citep{bts_t100}. Annual zip archives, SHA-256 checksums, and scripted transforms are recorded in the project provenance file. Lookups distributed as HTML homepages are not used as dictionaries; field definitions come from \texttt{Documentation.csv} and \texttt{Term.csv} inside the 2014 zip.

The window is 2016--2019, after the US Airways certificate (20355) exits the file as a separate reporter. American (19805) and Southwest (19393) remain distinct certificates. Virgin America is not merged into Alaska, and regionals are not collapsed into majors. Among major certificates present throughout the window, this pair maximizes a support score multiplying route coverage by own and opponent activity and overlap. Three-carrier overlap is about $8$ percent, which is why the airline exercise has one rival and $\dPhi\le 1$.

An action is at least twelve performed departures in the undirected city-pair-quarter; thresholds of one and of twenty-four departures were also constructed, the first being nearly always on among observed carrier-route cells. City ranks use 2014 passenger volumes. Demand cutpoints are quantiles of 2014 route traffic on the routes this pair plays, so the demand state is predetermined and the grid is balanced in sample. Absence of recorded service is coded as zero activity on the complete grid. Choice probabilities are Laplace-smoothed frequencies with addend $1/2$. Requiring a lag and a lead for the transition leaves $2{,}204$ route-quarter observations on the $241$ routes where both carriers appear.

\subsection{Stratification and the first stage}

The six policy environments are the cells of a calendar split, 2016--17 against 2018--19, crossed with route-distance terciles. Within each cell the rival choice probability $q$, the own choice probability $p_i$, and the two marginal transitions $P_s$ and $P_n$ are estimated by smoothed counts on that cell alone. The demand and rival-count transitions are estimated as row-normalized counts rather than by the ridge multinomial specification of the earlier draft, so that every object entering $(\widehat a,\widehat C_\Phi)$ is a function of cell counts and the cluster bootstrap can resample the entire first stage without refitting a penalized likelihood inside each draw. The counted specification is used to keep that bootstrap fully nonparametric in cell counts; T1 is reported only as a predictive robustness diagnostic. Table~\ref{tab:airline-T} records the comparison: the evidence that current actions move the kernel is real but small and not stable across scoring rules, so the headline design does not condition the kernel on current play.

\begin{table}[t]
\centering
\caption{Transition-model fit on shared route-cluster folds. Joint Brier is the fifteen-class quadratic score. $T1^{-ab}$ and $T2^{-ab}$ omit current actions.}
\label{tab:airline-T}
\small
\begin{tabular}{lrrc}
\toprule
model & OOS joint log likelihood & joint Brier & current $(a,b)$ \\
\midrule
T0 & $-0.7875$ & $0.3426$ & no \\
$T1^{-ab}$ & $-0.7826$ & $0.3395$ & no \\
T1 & $-0.7706$ & $0.3364$ & yes \\
$T2^{-ab}$ & $-0.7552$ & $0.3350$ & no \\
T2 & $-0.7534$ & $0.3372$ & yes \\
\bottomrule
\end{tabular}
\end{table}

T0 estimates $P_s(s'\mid s)$ and $P_n(n'\mid n)$ by row-normalized transition counts on consecutive route-quarters. T1 estimates two ridge multinomial logits for $s'$ and $n'$ given $(s,n,a,b)$ and low-order interactions, with an unpenalized intercept; $T1^{-ab}$ is the same specification with current actions omitted, on the same ridge grid and the same four route-cluster folds. T2 and $T2^{-ab}$ are the analogous fifteen-class joint logits. Selected penalties are $0.2$ throughout. T1 improves both scores against its matched action-off comparator; T2 improves the log likelihood and worsens the Brier score. Of $57$ occupied $(s,n,a,b)$ cells, $16$ have fewer than twenty observations and those hold $1.0$ percent of transition observations. No transition model was selected to raise $\operatorname{rank}C$.

\subsection{The bootstrap and the statistic}

The covariance $\widehat\Omega_n(\theta)$ is \eqref{eq:boot-omega}, the sample covariance of the $\sqrt n$-scaled and recentered cluster-bootstrap scores, over $\Bboot=400$ route-market cluster draws, evaluated at the candidate $\theta$ being tested. Each draw resamples the $241$ routes with replacement and rebuilds every first-stage object from the resampled counts. A single set of draws serves every $\theta$ because $m_n$ is affine. Headline intervals invert the profiled statistic \eqref{eq:profile-ci}. The sixty-state widths invert only along the identified ray, which is contained in the profiled interval. The ellipsoid computed from the same draws at a preliminary least-squares $\theta$ is recorded as a diagnostic and is not the reported set. The cutoff is the Hotelling value of Lemma~\ref{lem:hotelling}; on the headline grid this raises it from $92.8$ to $106.0$. The ellipsoid computed from a preliminary $\theta$ rejects the rival-independent restriction against that cutoff; the reported candidate-dependent statistic does not.

\begin{table}[h]
\centering
\caption{Headline grid, saturated $K=2$ unless noted, $\beta=0.95$. Profiled intervals at two bootstrap sizes. Cutoffs are the Hotelling values of Lemma~\ref{lem:hotelling}.}
\label{tab:boot-B}
\small
\begin{tabular}{lrrrrr}
\toprule
$\Bboot$ & cutoff & $\mathrm{AR}_{K=1}$ & $\min\mathrm{AR}_{K=2}$ & $|\mathrm{CI}|_{d=0}$ & $|\mathrm{CI}|_{d=1}$ \\
\midrule
$400$ & $106.0$ & $87.5$ & $11.1$ & $9.0$ & $52.7$ \\
$1000$ & $93.0$ & $86.2$ & $9.4$ & $9.3$ & $54.9$ \\
\bottomrule
\end{tabular}
\end{table}

The confidence set is evaluated through the singular value decomposition of $\widehat\Omega_n^{-1/2}\widehat C_\Phi$ rather than through the eigendecomposition of $\widehat W=\widehat C_\Phi'\widehat\Omega_n^{+}\widehat C_\Phi$. The two are algebraically identical. They are not numerically identical: forming $\widehat W$ squares the spectrum, and at the conditioning of Table~\ref{tab:airline-ar}, where $\sigma_{\min,+}/\sigma_{\max}$ is about $3\times 10^{-5}$, squaring places the weakest identified directions below double-precision resolution and merges them with the null space. An unbounded interval for a merely weak direction, or a finite interval for a direction the design does not restrict, are both artifacts of that step. The second converts an unrestricted payoff contrast into a reported finite interval.

Table~\ref{tab:tol} counts unbounded directions on the headline grid at relative tolerances from $10^{-6}$ to $10^{-14}$. Read from the whitened design matrix, the count is $4$ under the rival-independent restriction and $6$ under saturation at every tolerance in that range, so the identified--unidentified cut is a property of the design and not of the cutoff. Read from the Gram matrix, the same count collapses to zero below $10^{-10}$: the genuine null space has been squared into the noise floor, and every direction, including those the design annihilates exactly, is reported as measured. We use $10^{-11}$ on the whitened singular values, which lies inside the stable range by five decades on either side.

\begin{table}[h]
\centering
\caption{Unbounded payoff directions at the headline grid as a function of the relative tolerance, from the whitened design matrix $\widehat\Omega_n^{-1/2}\widehat C_\Phi$ and from its Gram matrix $\widehat W$. The two are algebraically identical. The Gram column loses the null space entirely below $10^{-10}$.}
\label{tab:tol}
\small
\begin{tabular}{lrrrrrrr}
\toprule
& $10^{-6}$ & $10^{-8}$ & $10^{-10}$ & $10^{-11}$ & $10^{-12}$ & $10^{-13}$ & $10^{-14}$ \\
\midrule
$K=1$, whitened SVD & 4 & 4 & 4 & 4 & 4 & 4 & 4 \\
$K=1$, Gram matrix & 4 & 4 & 0 & 0 & 0 & 0 & 0 \\
$K=2$, whitened SVD & 6 & 6 & 6 & 6 & 6 & 6 & 6 \\
$K=2$, Gram matrix & 6 & 5 & 0 & 0 & 0 & 0 & 0 \\
\bottomrule
\end{tabular}
\end{table}

Interaction degree is assigned by the filtration of Section~\ref{sec:filtration}: a payoff direction $v$ is decomposed at the pooled interior baseline $\qbar$, and $v$ is labelled degree $d$ when at least $60$ percent of its energy lies in layer $d$. Reported widths are for the most favorably conditioned singular direction of each degree class, meaning the one with the largest singular value among those passing the purity threshold. This selection is conservative for the local conditioning comparison. We do not claim that the selected direction globally minimizes the candidate-dependent profiled projection width. The degree-one widths in Tables~\ref{tab:airline-ar} and~\ref{tab:airline-grid} are therefore lower bounds on how badly the design measures rival-dependent payoffs.

\subsection{Sensitivity across grids}
\label{sec:emp-ar}
\label{sec:emp-grid}

\begin{table}[t]
\centering
\caption{Sensitivity across state grids and stratifications, saturated specification $K=2$, $\beta=0.95$, nominal $95$ percent. Headline through distance-only widths invert the profiled statistic \eqref{eq:profile-ci}. Sixty-state widths invert along the identified ray and are therefore a lower bound on that projection. Observed behavior spans between $7.4$ and $9.2$ logit units. Designs marked exact have $L=\dim\theta$, so their minimized statistic is zero and no specification test exists. The last column reports the minimized statistic of the rival-independent restriction against its cutoff.}
\label{tab:airline-grid}
\small
\begin{tabular}{lrrrrrr}
\toprule
grid & $L$ & rank & $|\mathrm{CI}|_{d=0}$ & $|\mathrm{CI}|_{d=1}$ & $\mathrm{AR}_{K=1}$ & cutoff \\
\midrule
headline, $M=12$ & 72 & 42 & $9.0$ & $52.7$ & $87.5$ & $106.0$ \\
four strata, $M=12$ (exact) & 48 & 42 & $3.1\times10^{3}$ & $3.4\times10^{5}$ & $31.4$ & $74.3$ \\
coarse demand, $M=8$ (exact) & 32 & 27 & $7.3$ & $4440$ & $31.0$ & $51.2$ \\
distance only, $M=12$ & 36 & 36 & $3.5$ & $96.0$ & $9.8$ & $55.7$ \\
sixty-state grid, $M=60$ & 360 & 192 & $0.77$ & $25.5$ & $534$ & $730$ \\
\bottomrule
\end{tabular}
\end{table}

On every grid the saturated specification is not rejected and the most favorably conditioned rival-dependent interval exceeds the range of observed behavior. The rival-independent restriction is rejected on none of the five grids.

\subsection{Robustness to the discount factor}

The discount factor is a maintained diagnostic value. Repeating the headline design at $\beta=0.90$ and $\beta=0.99$ leaves every reported conclusion in place. The rival-independent restriction is rejected at none of the three values, with minimized statistics $88.0$, $87.5$, and $88.7$ against a common cutoff of $106.0$. The saturated specification is rejected at none. The most favorably conditioned degree-zero width is $9.9$, $9.0$, and $11.7$, and the most favorably conditioned degree-one width is $60.8$, $52.7$, and $49.6$, so the width ratio moves between $4$ and $6$ across the range. The smallest identified singular value moves between $8.1\times10^{-5}$ and $9.4\times10^{-5}$. The discount $\beta$ enters $C_\Phi$ through the resolvent, which is well conditioned throughout $(0,1)$ on a strictly positive kernel, whereas the attenuation is driven by $\eta$, which does not depend on $\beta$ at all.

\subsection{Common payoffs across strata}

Assumption~\ref{ass:common} is the binding maintained hypothesis, and the strata do not satisfy it exactly. On the consecutive American--Southwest panel, comparing 2016--17 against 2018--19 route-quarter means: passengers $192{,}311$ against $197{,}284$; distance $1{,}080$ against $1{,}090$ miles; other-major count $0.870$ against $0.974$; American activity $0.279$ against $0.287$; Southwest activity $0.497$ against $0.524$. Annual averages of monthly U.S.\ Gulf Coast kerosene-type jet fuel spot prices \citep{eia_jetfuel} are $\$1.25$, $\$1.56$, $\$2.02$, and $\$1.88$ for 2016 through 2019, or $\$1.43$ against $\$1.95$ across the two windows, so the late-window fuel price is $36$ percent higher.

A payoff shifter that moves with the calendar violates Assumption~\ref{ass:common} across the calendar split, and route composition differs across the distance split by construction. Both are the composition problem of Section~\ref{sec:composition}, and neither is repaired by the design. The placebo test of Table~\ref{tab:eta-placebo} is therefore a necessary condition: passing it establishes that the measured cross-stratum dispersion exceeds sampling noise, not that the excess is rival-policy variation holding the payoff fixed. The reported $\eta$ overstates the usable dispersion, so the intervals in Table~\ref{tab:airline-ar} understate how weakly this panel measures rival-dependent payoffs. An ellipsoid computed from a preliminary $\theta$ does reject the rival-independent restriction on the headline grid; the reported set does not, which is one more reason the ellipsoid is not the object Proposition~\ref{prop:ar} covers.

\subsection{What earlier drafts of this exercise reported}
\label{app:emp-archive}

Earlier versions of this section reported a different object on a sixty-state grid, and the contrast is instructive enough to record. There the identifying policy environments were not measured. They were constructed as small interior logit perturbations of the pooled rival policy, clipped to $[10^{-3},1-10^{-3}]$, and the reported result was the rank of the stacked operator: $60$ at one environment, then $120$ and $180$ as designed policies were added, and $225$ once a second designed kernel inside the lagged-action class was introduced, against the ceiling $MAB-|\mathcal Z|=225$ of Proposition~\ref{prop:lag}, leaving the predicted residual gauge $d_G=15$. Conditioning was reported alongside, at $\sigma_{180}/\sigma_1\approx 5.4\times10^{-5}$ and $\sigma_{225}/\sigma_1\approx 5.5\times10^{-6}$, and the integer sequence was stable at diagnostic discounts $0.90$ and $0.99$ and across relative rank tolerances from $10^{-6}$ to $10^{-10}$. A route-market cluster bootstrap of the design found the integers stable in $99$ of $100$ draws while the conditioning ratios moved over more than an order of magnitude.

That construction is not evidence of identification in the observed panel. The perturbations were invented, so the ranks describe an experiment nobody ran; Theorem~\ref{thm:attenuation} says any positive perturbation attains full rank, so attaining the ceiling was guaranteed; and the conditioning ratios, which do carry information, cannot be converted into a statement about any payoff parameter, which is the gap Section~\ref{sec:inference} closes. In the bootstrap the integers were stable while the conditioning ratios moved by an order of magnitude: the stable quantity was the rank, and the quantity that varied was the one that governs precision. The rank sequence is retained here as a check on Proposition~\ref{prop:lag}, for which it is an equality case, and for nothing else. Superseded tables and the archived designs remain in the replication package.

\subsection{Comparison with a three-player panel}
\label{sec:emp-aux}

\citet{dearingblevins2025} study a three-player dynamic retail-entry game. The public analysis file has $19{,}320$ observations on $1{,}610$ markets over 2010--2021 and a forty-state coding $x=(s,a_{1,-1},a_{2,-1},a_{3,-1})$ with $s\in\{1,\ldots,5\}$, so the saturated payoff has $MAB=320$ coordinates. The public file contains no geography, calendar kernels are nearly identical, and the sample cannot isolate a common-payoff environmental split, so this is a benchmark rather than a second application.

Pooled, the stacked operator has rank $40$. Adding a second calendar kernel moves it only to $60$, because those kernels are nearly the same continuation technology; adding a second calendar policy moves it to $74$; crossed two-by-two measurements remain at $74$. Saturation failure is therefore not an artifact of the airline coding. The occupied-state feature-moment diagnostic of Table~\ref{tab:db-K} uses the same file to exhibit the cardinality prediction of Theorem~\ref{thm:structure} at $B=4$, where the interaction filtration has three layers rather than two. It is a policy-geometry computation on a calibration and carries no strength claim: the degree-two layer is attenuated by $\eta^2$, and the only place that exponent is measured against a known truth is the Monte Carlo of Section~\ref{sec:mc}.

\begin{table}[t]
\centering
\caption{Dearing--Blevins occupied-state feature-moment diagnostic, $B=4$. Switches assigned by ascending occupied-state index. A data-calibrated policy-geometry diagnostic, not one-regime identification of the restricted payoff.}
\label{tab:db-K}
\small
\begin{tabular}{lcccc}
\toprule
family & $K$ & $E$ & $\Ephi$ & occupied intersection dim \\
\midrule
active-rival count & 2 & 3 & 3 & 1 \\
additive rivals & 3 & 4 & 4 & 1 \\
pairwise interactions & 4 & 5 & 5 & 1 \\
saturated indicators & 4 & 5 & 5 & 1 \\
\bottomrule
\end{tabular}
\end{table}

\section{Reproducibility}
\label{app:repro}

The theoretical claims used in the paper were frozen in a self-contained regression suite. The command \texttt{python src/run\_regression.py} reruns nine modules and writes \texttt{REGRESSION\_STATUS.json}. Required statuses, each equal to \texttt{PASS}, are
\begin{itemize}
\itemsep0pt
\item \texttt{strength\_hierarchy}
\item \texttt{strength\_construction\_scaling}
\item \texttt{unknown\_beta\_general\_AB}
\item \texttt{bts\_frozen\_geometry}
\item \texttt{restricted\_payoff}
\item \texttt{restricted\_strength}
\item \texttt{design\_free\_attenuation}
\item \texttt{one\_regime\_exact\_count}
\item \texttt{ar\_coverage}
\end{itemize}
Module 1 checks the isotropic multiplicity law of the saturated construction, including the high-precision $(2,2,2)$ order-$4$ witness. Module 2 checks canonical product-policy singular-value slopes and the unrestricted-joint benchmark slopes; it does not assert a universal product-policy upper bound. Module 3 checks exact modular ranks of the general-$(A,B)$ unknown-patience construction, including $\mathrm{rank}\,C=MAB-1$, $\mathrm{rank}\,D=2E(M-1)$, $\operatorname{Im}D=\ker\Lambda$ when $A=2$, $\mathrm{rank}\,\Lambda C\ge 3$, and $\mathrm{rank}[C,D]\ge MAB+1$. Module 4 rereads the frozen BTS confirmation and does not re-estimate airline primitives. Module 9 rereads the frozen Monte Carlo of Section~\ref{sec:mc} by default; rebuild it with \texttt{AR\_RECOMPUTE=1}. Module 5 checks the restricted cardinality theorem, proved one-regime bounds, the explicit $K+\Ephi$ construction, and the exact $K=1$ tradeoff. Module 6 checks filtration integers $m_d^\Phi$ and numerical log-log slopes under the canonical restricted design; slope agreement is a regression diagnostic, not a proof.

Module 7 checks Theorem~\ref{thm:attenuation} as a design-free statement rather than as a property of one construction. It draws random transition geometries, random feature matrices, and random clustered policy families, verifies each step of the proof separately, and confirms that the ratio $\sigma_{\min,+}/\eta^{\dPhi}$ stays below the constant \eqref{eq:kappa} across designs that differ in $R$, $E$, and kernel separation. Module 8 certifies Proposition~\ref{prop:e1exact} by exact rank computation over a prime field. Module 9 is the Monte Carlo of Section~\ref{sec:mc}: coverage of \eqref{eq:ar} that does not deteriorate as identification weakens, failure of the interval that ignores design-matrix error, the width ordering across interaction degrees, and a market-size sweep that separates finite-sample shortfall from Wald undercoverage.

Exact modular witnesses for unknown patience include $(M,A,B)=(2,2,2)$, $(2,2,3)$, $(2,2,4)$, $(3,2,3)$, $(3,2,4)$, $(2,3,3)$, and $(3,3,3)$. High-precision restacks near $\tilde\beta=\beta$ are recorded in the same suite.

The empirical exercise of Section~\ref{sec:empirical} is not part of the frozen theory suite, because it depends on the external BTS extract. It is reproduced by \texttt{python Replication\_package/code/empirical/airline\_ar.py}, which rebuilds the panel from the processed T-100 files, runs the five grid configurations and the two discount-factor robustness configurations, and writes \texttt{airline\_ar\_results.json}. Every random draw is seeded from the configuration itself, so results do not depend on the order in which configurations are run. The tolerance comparison of Table~\ref{tab:tol} is produced by \texttt{diag\_tol.py} in the same directory.

\bibliographystyle{plainnat}
\bibliography{refs}

\end{document}